\newcommand{\mc}[1]{\mathcal{#1}}
\newcommand{\mf}[1]{\mathfrak{#1}}
\newcommand{\mb}[1]{\mathbb{#1}}
\newcommand{\ul}[1]{\underline {#1}}
\newcommand{\id}{\mathbbm{1}}
\newcommand{\tint}{{\textstyle\int}}
\DeclareMathOperator{\Mat}{Mat}
\DeclareMathOperator{\res}{Res}
\DeclareMathOperator{\im}{Im}
\DeclareMathOperator{\Span}{Span}
\DeclareMathOperator{\Res}{Res}
\DeclareMathOperator{\ord}{ord}
\theoremstyle{plain}
\newtheorem{theorem}{Theorem}[section]
\newtheorem{lemma}[theorem]{Lemma}
\newtheorem{proposition}[theorem]{Proposition}
\newtheorem{corollary}[theorem]{Corollary}
\theoremstyle{definition}
\newtheorem{definition}[theorem]{Definition}
\newtheorem{example}[theorem]{Example}
\theoremstyle{remark}
\newtheorem{remark}[theorem]{Remark}
\numberwithin{equation}{section}
\definecolor{light}{gray}{.9}
\begin{document}

\title{On Lax operators}

\author{Alberto De Sole}
\address{Dipartimento di Matematica, Sapienza Universit\`a di Roma,
P.le Aldo Moro 2, 00185 Rome, Italy}
\email{desole@mat.uniroma1.it}
\urladdr{www1.mat.uniroma1.it/\$$\sim$\$desole}
\author{Victor G. Kac}
\address{Dept of Mathematics, MIT,
77 Massachusetts Avenue, Cambridge, MA 02139, USA}
\email{kac@math.mit.edu}
\author{Daniele Valeri}
\address{School of Mathematics \& Statistics, University of Glasgow, G12 8QQ Glasgow, UK}
\email{daniele.valeri@glasgow.ac.uk}

%\subjclass{
%Primary 17B63; 
%Secondary 17B69, 17B80, 37K30, 17B08
%}

%%%%%%%%

\begin{abstract}
We define a Lax operator as a monic pseudodifferential
operator $L(\partial)$ of order $N\geq 1$, such that the Lax
equations $\frac{\partial L(\partial)}{\partial t_k}=[(L^{\frac kN}(\partial))_+,L(\partial)]$
are consistent and non-zero for infinitely many positive integers $k$.
Consistency of an equation means that its flow is defined by
an evolutionary vector field.
In the present paper we demonstrate that the traditional
theory of the KP and the $N$-th KdV hierarchies holds for arbitrary
scalar Lax operators.
\end{abstract}

\keywords{Lax equation, Lax operator, KP hierarchy, $N$-th KdV hierarchy, wave function, tau-function}

\maketitle

\tableofcontents

%%%%%%%%%%%%%%%%%%%%%%%%%%%%%%%%%%%%%%%
\section{Introduction}
\label{sec:1}

In his seminal paper \cite{Lax68}
Lax observed that the famous KdV equation
$$
u_t
=
\frac14 u_{xxx}+\frac32 uu_x
$$
is equivalent to an equation of the form $\frac{\partial L(\partial)}{\partial t}=[B(\partial),L(\partial)]$, on the differential operator $L(\partial)=\partial^2+u$
for a certain differential operator $B(\partial)$ of order $3$.
This has lead to the Lax equation approach in the theory of integrable systems.
As is well known,
writing an evolution equation in a Lax form allows one to construct its higher symmetries and integrals of motion.

The first beautiful application of this approach was developed by Gelfand and Dickey \cite{GD76}
by considering a differential operator $L(\partial)$ of the form 
\begin{equation}\label{eq:intro1}
L(\partial)=\partial^N+\sum_{j=1}^{N-1}u_j\partial^{N-j-1}
\,\,,\,\,\,\, N\geq2
\,,
\end{equation}
where $u_1,\dots,u_{N-1}$ are the generators of the algebra of differential polynomials $\mc V_{N-1}$ in
$N-1$ differential variables, cf. \eqref{20170721:eq3}.
They showed that the hierarchy of \emph{Lax equations}
\begin{equation}\label{eq:intro2}
\frac{\partial L(\partial)}{\partial t_k}=[B_k(\partial),L(\partial)]
\,,\,\text{ where }\,\,
B_k(\partial)=(L^{\frac kN}(\partial))_+
\,,\,\, k\in\mb Z_{\geq1}
\,,
\end{equation}
is a hierarchy of compatible evolution equations, i.e. equations of the form
$$
\frac{\partial u_j}{\partial t_k}
=
R_{j,k}
\,\,,\,\,\,\,
j=1,\dots,N-1
\,,\,\,
k\in\mb Z_{\geq1}
\,,\,\,
R_{j,k}\in\mc V_{N-1}
\,,
$$
which is called the $N$-th KdV hierarchy.
Recall that the subscript $+$ stands for the differential part of a pseudodifferential operator
and compatibility means that the partial derivatives $\frac{\partial}{\partial t_n}$ commute.
The case $N=2$ of \eqref{eq:intro1}
is the KdV hierarchy, of which \eqref{eq:intro1} for $k=3$ is the KdV equation.
(For arbitrary $N$ and $k=1$ equation \eqref{eq:intro1}
is the trivial evolution equation $\frac{\partial u_j}{\partial t_1}=u_j'$,
and for $k\in N\mb Z$ equation \eqref{eq:intro1} is the zero equation $\frac{\partial u}{\partial t_j}=0$.)

Note however that a Lax equation is not necessarily an evolution equation.
For example, taking $L(\partial)=\partial^3+u$,
we have $B_2(\partial)=(L^{\frac23}(\partial))_+=\partial^2$,
hence the corresponding Lax equation is
$$
\frac{\partial u}{\partial t_2}
=
[B_2(\partial),L(\partial)]
=
2u'\partial+u''
\,,
$$
which is a linear system
$$
u'=0
\,\,,\,\,\,\,
\frac{\partial u}{\partial t_2}
=
u''
\,.
$$
This is not an evolution equation and there exists no evolution equation on $\mc V_1$
whose flow defines the flow of this system.
This equation is thus called ``inconsistent''.

The general definition of ``consistency'' is as follows. Let $u=(u_\alpha)_{\alpha\in I}$ be the column vector 
of differential variables of $\mc V=\mc V_\ell$ and let
$P=(P_\alpha)_{\alpha\in I}\in\mc V^\ell$. 
The \emph{evolution equation} associated to $P$ is defined as
\begin{equation}\label{eq:evolution}
\frac{\partial u}{\partial t}=P\,.
\end{equation}
By the chain rule, this equation induces the evolution of an arbitrary $f\in\mc V$: $\frac{\partial f}{\partial t}=X_P(f)$, where $X_P$ 
is the following derivation of $\mc V$ commuting with $\partial$, called an \emph{evolutionary vector field}:
$$
X_P=\sum_{\alpha\in I}\sum_{n\in\mb Z_{\geq0}}(\partial^nP_\alpha)\frac{\partial}{\partial u_{\alpha}^{(n)}}
\,.
$$
Let now $M(\partial)$ be an $m\times\ell$ matrix differential operator over $\mc V$, and let $Q=(Q_i)_{i=1}^m\in\mc V^m$. 
The corresponding \emph{linear system of quasi-evolution equations} is defined as:
\begin{equation}\label{eq:linear-intro}
M(\partial)\frac{\partial u}{\partial t}=Q
\,.
\end{equation}
We call such system \emph{consistent} if $Q$ lies in the image of $M(\partial):\mc V^\ell\to\mc V^m$, i.e. there exists $P\in\mc V^\ell$ such that
\begin{equation}\label{eq:consistent}
M(\partial)P=Q\,.
\end{equation}
In this case the flow, defined by the evolution equation \eqref{eq:evolution}, is ``consistent'' with the linear system \eqref{eq:linear-intro}.

A consistent linear system \eqref{eq:linear-intro} is called \emph{compatible} with another consistent linear system of quasi-evolution equations
$\tilde M(\partial)\frac{\partial u}{\partial \tilde t}=\tilde Q$, such that $\tilde M(\partial)\tilde P=\tilde Q$,
if $P$ and $\tilde P$ can be chosen in such a way that the corresponding evolutionary vector fields $X_P$ and $X_{\tilde P}$ commute:
\begin{equation}\label{eq:intro3}
X_PX_{\tilde P}=X_{\tilde P} X_P\,.
\end{equation}
%In other words, the corresponding flows
%$\frac{\partial u}{\partial t}=P$ and $\frac{\partial u}{\partial\tilde t}=\tilde P$ commute: 
%$\frac{\partial}{\partial t}\frac{\partial}{\partial \tilde t}=\frac{\partial}{\partial\tilde t}\frac{\partial }{\partial t}$.

In the present paper we consider an arbitrary scalar monic pseudodifferential operator of order $N\geq1$
over the algebra of differential polynomials $\mc V_\ell$ in $\ell$ differential variables ($\ell$ may be infinite)
\begin{equation}\label{eq:intro4}
L(\partial)
=
\partial^N+\sum_{i=0}^{\infty}a_i\partial^{N-1-i}
\,\,,\,\,\,\,
a_i\in\mc V_\ell
\,.
\end{equation}
Given a differential operator $B(\partial)\in\mc V_\ell[\partial]$, the \emph{Lax equation} associated to $L(\partial)$ and $B(\partial)$ is defined as
\begin{equation}\label{eq:intro5}
\frac{\partial L(\partial)}{\partial t_B}
=
[B(\partial),L(\partial)]
\,.
\end{equation}
The Lax equation \eqref{eq:intro5} is a linear system of quasi-evolution equations,
which can be written down explicitly as follows. 
Note that, for a pseudodifferential operator $L(\partial)$ as in \eqref{eq:intro4} and a differential operator
$B(\partial)$ of order $k$, one has
(see Section \ref{sec:3} for details):
$$
[B(\partial),L(\partial)]=\sum_{i=-k}^\infty q_{i}\partial^{N-1-i}\,,\quad q_{i}\in\mc V_{\ell}\,.
$$
Let
$$
D_{i,\alpha}(\partial)=\sum_{n=0}^\infty\frac{\partial a_i}{\partial u_{\alpha}^{(n)}}\partial^n\in\mc V_\ell[\partial]\,,
\quad i\geq0,\,\alpha\in I\,,
$$ 
be the Frechet derivative of the coefficient $a_i\in\mc V_\ell$ of $\partial^{N-1-i}$ in $L(\partial)$. Then, the Lax equation
\eqref{eq:intro5} becomes
the following linear system of quasi-evolution equations
\begin{align}
& 0=q_i\quad\text{for }i\in\{-k,\dots,-1\}
\,,\label{eq:Lax2a} \\
& \sum_{\alpha\in I}D_{i,\alpha}(\partial)\frac{\partial u_\alpha}{\partial t_B}=q_i\quad\text{for }i\geq0\,.
\label{eq:Lax2b}
\end{align}
Hence, this linear system of quasi-evolution equations is consistent if and only if $q_i=0$ for $i<0$ and $Q=(q_i)_{i=0}^{\infty}\in\mc V_\ell^\infty$ lies
in the image of the matrix differential operator $D^{(a)}(\partial)=(D_{i,\alpha}(\partial)):\mc V_\ell^I\to\mc V_\ell^\infty$.

Let $B_k(\partial)=(L^{\frac{k}{N}}(\partial))_+$.
In Section \ref{sec:3} we shall see that condition \eqref{eq:Lax2a} holds if and only if
$B(\partial)$ is, up to an adding an element of $\mc V$, a linear combinations with constant coefficients of the differential
operators $B_k(\partial)$, $k\in\mb Z_{\geq1}$.

Let $\tilde{\mc Z}_L$ (respectively, $\mc Z_L$) denote the set of all positive integers $k$
for which the Lax equation $\frac{\partial L(\partial)}{\partial t_k}=[B_k(\partial),L(\partial)]$
is non-zero (i.e. RHS $\neq0$) and consistent (respectively, is consistent).
Clearly, $\tilde{\mc Z}_L\subset\mc Z_L$. We  say that $L(\partial)$ is a \emph{Lax operator} if the set $\tilde{\mc Z}_L$ is infinite.
For example, the operators \eqref{eq:intro1} are Lax operators for every $N\geq2$, while $L(\partial)=\partial^3+u$
is NOT.

If $L(\partial)$ is a Lax operator, then we get the
corresponding infinite hierarchy of consistent Lax equations
\begin{equation}\label{eq:Laxhier}
\frac{\partial L(\partial)}{\partial t_k}=[B_k(\partial),L(\partial)]\,,\quad k\in\mc Z_L\,. 
\end{equation}
One of the basic results of the theory of integrable systems is that the hierarchy \eqref{eq:Laxhier}
is compatible (see Proposition \ref{20170719:prop}). (Note that, if $B(\partial)\in\mc V$, then the Lax equation \eqref{eq:intro5}
is consistent, but it may fail to be compatible with the equations of the hierarchy \eqref{eq:Laxhier}.)

The next important development in the theory of Lax equations is the work of Sato \cite{Sat81} and his disciples 
\cite{DJKM81,DJKM83} on the KP hierarchy and its analogues and reductions.
The KP hierarchy is defined as the following system of Lax equations on the Sato operator
$L(\partial)=\partial+\sum_{i=1}^{\infty}u_i\partial^{-i}$, 
where $u_1,u_2,\dots$ are the differential variables
of $\mc V_\infty$:
\begin{equation}\label{eq:Laxhier2}
\frac{\partial L(\partial)}{\partial t_k}=[B_k(\partial),L(\partial)]\,,\quad k\in\mb Z_{\geq1}\,,
\end{equation}
for which $\mc Z_L=\mb Z_{\geq1}$.

Around the same time Drinfeld and Sokolov \cite{DS85}
associated to each simple Lie algebra $\mf g$ and its principal nilpotent element $f$
a hierarchy of bi-Hamiltonian PDE.
In particular, for each classical $\mf g$ they constructed the corresponding Lax operator $L(\partial)$.
In the case $\mf g=\mf{sl}_N$, $L(\partial)$ is the Gelfand-Dickey operator \eqref{eq:intro1}
so that the corresponding hierarchy is the $n$-th KdV hierarchy.
For $\mf g=\mf{so}_{2n+1}$ (respectively $\mf{sp}_{2n}$),
$L(\partial)\circ\partial $ (resp. $L(\partial)$) is the ``generic'' skewadjoint (resp. selfadjoint)
monic differential operator of order $2n+1$ (resp. $2n$),
while, for $\mf g=\mf{so}_{2n}$, $L(\partial)\circ\partial$ is the sum of the ``generic'' skewadjoint differential operator of order $2n-1$
and the pseudodifferential operator $u\partial^{-1}\circ u$.

The ideas of \cite{GD76} and \cite{DS85}
were further developed in \cite{DSKV15,DSKV16a,DSKV18}.
In particular, in \cite{DSKV16a} (respectively \cite{DSKV18})
to each nilpotent element of the Lie algebra $\mf g=\mf{sl}_N$
(resp. $\mf{so}_N$ and $\mf{sp}_N$)
corresponding to a partition $\ul p$ of $N$,
we constructed an $r\times r$-matrix monic pseudodifferential operator $L_{\ul p}(\partial)$ of order $p$,
where $p$ is the maximal part of $\ul p$ and $r$ is its multiplicity.
It follow from \cite{DSKV16a} (resp. \cite{DSKV18})
that $L(\partial)=L_{\ul p}(\partial)$ is a Lax operator with $\mc Z_L=\mb Z_{\geq1}$
(resp. $\mc Z_L=1+2\mb Z_{\geq0}$),
and that the Lax equations associated to it are Hamiltonian for the Poisson structure of the corresponding $\mc W$-algebra
$\mc W(\mf g,\ul p)$.

The simplest Lax operator $L_{\ul p}(\partial)$ corresponds to the partition $\ul p=N$,
consisting of one part $N$,
which is associated to the principal nilpotent element of $\mf {sl}_N$
(resp. $\mf{so}_N$ for odd $N$ and $\mf{sp}_N$ for even $N$).
In this cases $L_{\ul p}(\partial)$ coincides with \eqref{eq:intro1},
the Lax operator for the $N$-th Gelfand-Dickey hierarchy
(resp. the Lax operator associated to the $\mf{so}_N$ Drinfeld-Sokolov hierarchies composed with $\partial$,
and the Lax operator associated to the $\mf{sp}_N$ Drinfeld-Sokolov hierarchies).
For $\mf g=\mf{so}_N$, $N$ even, $\ul p=(N-1,1)$
and $L_{\ul p}(\partial)$ is a certain pseudodifferential operator \cite{DS85}.

Another well-known example in the case of $\mf g=\mf{sl}_N$ 
corresponds to the partition $\underline p=(p,1^{N-p})$, where $N>p$. In this case,
after a reduction by non-evolving variables,
$L_{\underline p}(\partial)=L_{(p)}(\partial)+\sum_{i=1}^{N-p}p_i\partial^{-1}\circ q_i$,
which is therefore a Lax operator and the corresponding hierarchy \eqref{eq:Laxhier} of Lax equations is a subsystem
of a hierarchy of Hamiltonian equations \cite[Example 14.1]{CDSKVvdL20}. This produces the well-known $(N-p)$-vector $p$-constrained KP hierarchy
\cite{KSS91,Che92,KS92,SS93}. The case $p=2$ produces the $(N-2)$-component Yajima-Oikawa hierarchy
\cite{YO76,Ma81}.

In the present paper we establish a series of results in the generality
of arbitrary scalar Lax operators, similar to the traditional results for the Sato Lax operator. 
Our proofs are often identical to those of the Kyoto school \cite{DJKM83,Shi86} and of the book \cite{Dic03}, 
sometimes they are simpler and more rigorous. We show that the hierarchy of Lax equations,
associated to a Lax operator $L(\partial)$ is compatible (Proposition \ref{20170719:prop}), and is equivalent to
\begin{enumerate}[(i)]
\item Zakharov-Shabat equations \cite{ZS74} (Proposition \ref{20170720:prop}(a));
\item complementary Zakharov-Shabat equations (Proposition \ref{20170720:prop}(b));
\item Sato equations on the dressing operator (Theorem \ref{thm:sato});
\item the linear problem on the wave function (Theorem \ref{thm:linear});
\item the bilinear equation on the wave function (Theorem \ref{thm:bilinear}).
\end{enumerate}
As for the Sato Lax operator, this leads to the construction of the tau-function for $L(\partial)$. 
The tau-function for all Lax operators $L_{\underline p}(\partial)$, for all partitions $\ul p$, 
were constructed in \cite{CDSKVvdL20}
as those for the $r$-component KP hierarchy satisfying a simple constraint.
Of course, the importance of the tau-function $\tau$ for the Lax operator $L(\partial)$ stems from the fact that,
as for KP, all solutions of the corresponding hierarchy \eqref{eq:Laxhier} can be expressed via $\tau$ (see formula \eqref{eq:befana}).

For the Sato Lax operator formula \eqref{eq:befana}
establishes an essentially bijective correspondence between tau-functions and the solutions of the KP-hierarchy.
For arbitrary $\ul p$-reductions of the KP hierarchy the corresponding tau-functions satisfy a simple constraint,
which allowed for their explicit construction in \cite{CDSKVvdL20}.
%Another case when the description of tau-functions is known is the BKP hierarchy \cite{xxx}
%and, to a lesser extend, the CKP hierarchy.

Unfortunately, it is still an open problem for an arbitrary Lax operator
to find the constraints on the tau-function imposed by the constraints on the coefficients of $L(\partial)$.
This problem has been solved for the Lax operators of the BKP and CKP hierarchies \cite{DJKM81,CW13,KZ20},
but not, for example, for the Lax operator corresponding to the KN equation discovered by Sokolov \cite{Sok84}.

The above discussion can be extended to $r\times r$ matrix pseudodifferential operators $L(\partial)$.
This will be treated in a forthcoming publication.

\smallskip

Throughout the paper the base field $\mb F$ is a field of characteristic zero.

%
%\pecetta{
%Alphabets:
%\begin{itemize}
%\item
%$P,Q,\dots\in\mc V$, differential functions in $x,u,u',\dots$,
%\item
%$f,g,\dots\in\mc F$, functions on space-time,
%\item
%$c,d,\dots\in\mb F$, constants,
%\item
%$n\in\mb Z_{\geq0}$ is for $u^{(n)}$;
%\item
%$i\leq N$ is for coefficient of $\partial^i$;
%\item
%$k\in\mb Z_{\geq1}$ is for $t_k$;
%\item
%$\alpha=1,\dots,\ell$ is for $u_\alpha$
%\end{itemize}
%}

%%%%%%%%%%%%%%%%%%%%%%%%%%%%%%%%%%%%%%%
\section{Algebraic setup}
\label{sec:2}

%%%
\subsection{
Functions on space-time
}
\label{sec:2.2}

Throughout the paper we let $\mc F$ be a given commutative, associative, unital algebra over $\mb F$
endowed with commuting derivations 
$$
\partial
\,\Big(=\frac{\partial}{\partial x}\Big),\,\frac{\partial}{\partial t_k}\,:\,\,\mc F\to\mc F\,,\quad k\in \mc Z\,,
$$
indexed by an index set $\mc Z\subset\mb Z_{\geq1}$.

The elements of $\mc F$ are called \emph{functions on space-time}
(or simply \emph{functions}),
and will be usually denoted as $f=f(x,\bm t)=f(x,t_k,k\in\mc Z)$
(In the usual terminology, the ``space variable'' is $x$, and there are many 
``time variables'' $t_k$, $k\in\mc Z$.)

We assume that the common kernel of all space and time derivatives is the base field $\mb F$:
\begin{equation}\label{20170721:eq1}
\mb F
=
\big\{c\in\mc F\,\big|\,\partial c=0\,\text{ and }\, \frac{\partial c}{\partial t_k}=0\,\text{ for all }\,k\in\mc Z\big\}
\,.
\end{equation}
We also assume that $\mc F$ is endowed with a surjective algebra homomorphism
$\mc F\twoheadrightarrow\mb F$,
restricting to the identity map on $\mb F\subset\mc F$,
which we shall call the \emph{evaluation at} $x=0,\bm t=0$,
and we shall denote as
\begin{equation}\label{20170718:eq8b}
f=f(x,\bm t)
\,\mapsto\,
f(0)
=f|_{x=0,\bm t=0}
\,\in\mb F
\,.
\end{equation}
\begin{definition}\label{def:integrable}
Given elements $g,f_k\in\mc F$, $k\in\mc Z$,
consider a system of equations on the unknown function $\varphi\in\mc F$:
\begin{equation}\label{20170721:eq2}
\partial\varphi=g
\,\,,\,\,\,\,
\frac{\partial\varphi}{\partial t_k}
=
f_k
\,\text{ for all }\,
k\in\mc Z
\,.
\end{equation}
The system \eqref{20170721:eq2} is called \emph{compatible}
if the following conditions hold:
\begin{equation}\label{20170721:eq2b}
\frac{\partial g}{\partial t_k}
=
\partial f_k
\text{ for all } k\in\mc Z
\,,\,\,\text{ and }\,
\frac{\partial f_k}{\partial t_h}
=
\frac{\partial f_h}{\partial t_k}
\,\text{ for all }\,
h,k\in\mc Z
\,.
\end{equation}
The algebra of functions on space-time $\mc F$ is said to be \emph{integrable}
if, for every $c\in\mb F$ and every compatible system of equations \eqref{20170721:eq2},
there exists a unique solution $\varphi\in\mc F$ such that $\varphi(0)=c$.
\end{definition}
\begin{example}\label{20170724:ex}
The algebra $\mb F[x,t_1,t_2,t_3,\dots]$ of polynomials in infinitely many variables,
and the algebra $\mb F[[x,t_1,t_2,t_3,\dots]]$ of formal power series, are both integrable.
\end{example}

%%%
\subsection{
Unknown (dependent) variables
}
\label{sec:2.1}

We let $u_1,\dots,u_\ell$ ($\ell$ may be infinite)
be the ``unknown functions'' on space-time (=dependent variables),
and we let $\mc V_\ell$ be the algebra of differential polynomials
in the variables $u_\alpha$, $\alpha\in I=\{1,\dots,\ell\}$,
\begin{equation}\label{20170721:eq3}
\mc V_\ell=\mb F[u_\alpha^{(n)}\,|\,\alpha\in I,\,n\in\mb Z_{\geq0}]
\,.
\end{equation}
It is a differential algebra with respect to the derivation $\partial$
defined by
$\partial u_\alpha^{(n)}=u_\alpha^{(n+1)}$, $\alpha\in I,\,n\in\mb Z_{\geq0}$.
Note that the partial derivatives 
$\frac{\partial}{\partial u_\alpha^{(n)}}$, $\alpha\in I,\,n\in\mb Z_{\geq0}$,
are commuting derivations of $\mc V_\ell$ which satisfy
the following commutation relations:
\begin{equation}\label{20170718:eq2}
\bigg[\frac{\partial}{\partial u_\alpha^{(n)}},\partial\bigg]=\frac{\partial}{\partial u_\alpha^{(n-1)}}
\,\,,\,\,\,\,
\alpha\in I,\,n\in\mb Z_{\geq0}
\,,
\end{equation}
where the RHS is considered to be $0$ for $n=0$.

%Note that, 
By the universal property of the algebra of differential polynomials,
for every collection of functions on space-time $f_\alpha\in\mc F$, $\alpha\in I$,
there exists a unique differential algebra homomorphism
$\mc V_\ell\rightarrow\mc F$,
mapping $u_\alpha^{(n)}\mapsto\partial^n f_\alpha$, $\alpha\in I,\,n\in\mb Z_{\geq0}$,
which we shall call \emph{evaluation at} $u=f$,
and we shall denote as
\begin{equation}\label{20170718:eq9b}
P=P(u,u^\prime,u^{\prime\prime},\dots)
\,\mapsto\,
P(f)=P(f,f^{\prime},f^{\prime\prime},\dots)
\,\in\mc F
\,.
\end{equation}

\begin{remark}\label{20170721:rem}
The results of the present paper can be generalized
to the case when $\mc V_\ell$ is replaced by an algebra of differential functions
extending the algebra of differential polynomials \eqref{20170721:eq3} \cite{BDSK09}.
On the other hand, for the KP hierarchy and all the hierarchies arising from classical affine
$\mc W$-algebras, the underlying differential algebra of unknown functions is
an algebra of differential polynomials
(in infinitely many variables for the KP hierarchy,
and finitely many variables for $\mc W$-algebras).
\end{remark}

%%%
\subsection{
Consistent linear systems of quasi-evolution equations
}
\label{sec:2.1b}
Let $\mc V=\mc V_\ell$ for simplicity of notation.
Recall that an \emph{evolutionary vector field} on $\mc V$
is a derivation $X:\,\mc V\to\mc V$ commuting with $\partial$.
It is immediate to see that all evolutionary vector fields 
on the algebra $\mc V$
are of the form
\begin{equation}\label{20170724:eq1}
X_P
=
\sum_{\alpha\in I}\sum_{n=0}^\infty (\partial^nP_\alpha)
\frac{\partial}{\partial u_\alpha^{(n)}}\,,
\end{equation}
for some $P=(P_\alpha)_{\alpha\in I}$, $P_\alpha\in\mc V$.
Hence, we have a bijective map from $\mc V^\ell$ to the space of evolutionary vector fields,
mapping $P$ to $X_P$.
Note that $[X_P,X_Q]=X_{[P,Q]}$, where
\begin{equation}\label{20210309:eq1}
[P,Q]=X_P(Q)-X_Q(P)\,.
\end{equation}
For $f\in\mc V$, we define its \emph{Frechet derivative} $D^{(f)}_\alpha(\partial)\in\mc V[\partial]$, $\alpha\in I$, as
\begin{equation}\label{eq:frechet}
D_\alpha^{(f)}(\partial)=\sum_{n=0}^{\infty}\frac{\partial f}{\partial u_\alpha^{(n)}}\partial^n\,.
\end{equation}
Obviously, for $P=(P_\alpha)_{\alpha\in I}\in\mc V^I$ and $f\in\mc V$, we have:
\begin{equation}\label{eq:frechet2}
X_P(f)=\sum_{\alpha\in I}D_\alpha^{(f)}(\partial)P_\alpha\,.
\end{equation}

Let $u=\{u_{\alpha}\}_{\alpha\in I}$ be the column vector of generators and let $P=(P_\alpha)_{\alpha\in I}\in\mc V^\ell$. By definition, an \emph{evolution equation} on $\mc V$ has the form
\begin{equation}\label{20170724:eq2}
\frac{\partial u}{\partial t}
=
P
\,.
\end{equation}
By the chain rule, $\frac{\partial}{\partial t}$ extends uniquely to the evolutionary vector field
$\frac{\partial}{\partial t}=X_P$ on $\mc V$.

We generalize the notion of evolution equation as follows:
a \emph{linear system of quasi-evolution equations} on $\mc V$ 
is a system of the form
\begin{equation}\label{20170724:eq3}
M(\partial)\frac{\partial u}{\partial t}
=
Q
\,,
\end{equation}
where $Q=(Q_j)_{j\in J}$, with $Q_j\in\mc V$, $j\in\{1,\dots,m\}$, and
 $M(\partial)$ is an $m\times \ell$ matrix differential operator over  $\mc V$. It is assumed that
$M(\partial)$ has only finitely many non-zero entries in each row if $\ell$ is infinite,
so that $M(\partial)$ defines a linear map $M(\partial):\mc V^\ell\to\mc V^m$.
Of course, an evolution equation is a special case of a linear system \eqref{20170724:eq3},
where  $M(\partial)=\id$ is the identity matrix.

\begin{definition}\label{def:consistency}
A linear system of evolution equations \eqref{20170724:eq3} is \emph{consistent}
if $Q\in\im M(\partial)\subset\mc V^m$.
\end{definition}
Suppose, for example, that  the matrix differential operator $M(\partial)$ 
has a right inverse $A(\partial)B(\partial)^{-1}$,
where $A(\partial)$ is an $\ell\times m$ matrix differential operator
and $B(\partial)$ is an invertible $m\times m$ matrix differential operator.
In other words $M(\partial)A(\partial)=B(\partial)$ is invertible as a pseudodifferential operator, i.e its inverse lies in
$\Mat_{m\times m}\mc V((\partial^{-1}))$.
If, moreover, the vector $Q=(Q_j)_{j\in J}$ lies in the image of $B(\partial)$,
then the system \eqref{20170724:eq3} is obviously consistent.

We also define a \emph{hierarchy of linear systems of quasi-evolution equations}
as a collection of linear systems,
\begin{equation}\label{20170724:eq6}
M_k(\partial)\frac{\partial u}{\partial t_k}
=
Q_{k}
\,\,,\,\,\,\,
Q_k\in\mc V^m
\,,
k\in\mc Z
\,,
\end{equation}
parametrized by a set $\mc Z\subset\mb Z_{\geq1}$. 
Suppose that each equation \eqref{20170724:eq6} of the hierarchy is consistent for every $k\in\mc Z$,
i.e. there exists $P_k\in\mc V^\ell$ such that
\begin{equation}\label{20170724:eq7}
M_k(\partial)P_k=Q_k
\,,\,\,
k\in\mc Z
\,.
\end{equation}
We also let $X_k=X_{P_k}:\mc V\to\mc V$ be the corresponding evolutionary vector fields as in \eqref{20170724:eq1}.
\begin{definition}\label{def:compatible}
The hierarchy \eqref{20170724:eq6} is \emph{compatible}
if we can choose elements $P_k\in\mc V^\ell$, $k\in\mc Z$, so that \eqref{20170724:eq7} holds
and
\begin{equation}\label{20170724:eq8}
X_k(P_{h})=X_h(P_{k})
\,\text{ for all }\,
h,k\in\mc Z
\,.
\end{equation}
\end{definition}
By equation \eqref{20210309:eq1}, this means that the evolutionary vector fields $X_k$, $k\in\mc Z$, commute:
$[X_k,X_h]=0$ for $k,h\in\mc Z$.
\begin{definition}\label{def:solution}
A \emph{solution} of the hierarchy of linear systems of quasi-evolution equations \eqref{20170724:eq6}
is a collection of functions $\varphi=\{\varphi_\alpha\mid\alpha\in I\}\subset\mc F$,
such that
\begin{equation}\label{20170724:eq5}
M_k(\varphi;\partial)\frac{\partial \varphi}{\partial t_k}
=Q_k(\varphi)
\,,
\quad k\in\mc Z
\,,
\end{equation}
where $M_{k}(\varphi;\partial)$ and $Q_{k}(\varphi)$
are obtained by applying the evaluation map \eqref{20170718:eq9b}.
\end{definition}
\begin{remark}\label{20170724:rem}
Note that the consistency of a linear system of quasi-evolution equations is NOT a necessary condition 
for the existence of solutions.
For example, the hierarchy of equations, corresponding to $L(\partial)=\partial^3+u$,
$$
0=0\frac{\partial u}{\partial t_1}=u'
\,\,,\,\,\,\,
\frac{\partial u}{\partial t_2}=0
\,,
$$
is not consistent, but it admits the solution $u(x,t_1,t_2)=1\in\mc F$.
\end{remark}

%%%
\subsection{
Pseudodifferential operators
}
\label{sec:2.5}

We will consider the algebra $\mc V((\partial^{-1}))$ 
of scalar pseudodifferential operators with coefficients in $\mc V$.
Given such an operator
\begin{equation}\label{20170719:eq1}
A(\partial)=\sum_{i=-\infty}^Na_i\partial^i
\,\,,\,\,\,\,
a_i\in\mc V
\,,
\end{equation}
its \emph{symbol} is defined as $A(z)=\sum_{i=-\infty}^Na_iz^{i}\in\mc V((z^{-1}))$.
Recall that the product of two pseudodifferential operators $A(\partial)$ and $B(\partial)$ is defined via their
symbols by
\begin{equation}\label{eq:product}
(AB)(z)=(A\circ B)(z)=A(z+\partial)(B(z))
\,.
\end{equation}
From \eqref{eq:product} we see that the subspace $\mc V[\partial]\subset\mc V((\partial^{-1}))$ of differential operators is a subalgebra.
Here and further,
a Laurent series involving negative powers of $z+\partial$ is always considered 
to be expanded using geometric series expansion in the domain of large $z$.

Let $A(\partial)$ be as in \eqref{20170719:eq1}. 
We denote 
by $A^*(\partial)=\sum_i(-\partial)^i\circ a_i\in\mc V((\partial^{-1}))$ its 
formal adjoint,
by $A(\partial)_+=\sum_{i=0}^Na_i\partial^i\in\mc V[\partial]$ 
its differential part,
by $A(\partial)_-=\sum_{i=-\infty}^{-1}a_i\partial^i\in\mc V[[\partial^{-1}]]\partial^{-1}$
its singular part.

The following notation will be used throughout the paper:
given $A(\partial)\in\mc V((\partial^{-1}))$ as in \eqref{20170719:eq1}
and $b,c\in \mc V$, we let:
\begin{equation}\label{eq:notation}
A(z+x)\big(\big|_{x=\partial}b)c
=\sum_{i=-\infty}^Na_i((z+\partial)^ib)c\,\in\mc V
\,.
\end{equation}
Using the notation \eqref{eq:notation}, we can rewrite
\begin{equation}\label{20210927:eq1}
A^*(z)=\big(\big|_{x=\partial}A(-z-x)\big)=\sum_{i=-\infty}^N(-z-\partial)^ia_i
\,,
\end{equation}
and the RHS of equation \eqref{eq:product} can be rewritten as
$A(z+x)(|_{x=\partial}B(z))$.

Furthermore, for $A(\partial)\in\mc V((\partial^{-1}))$ as in \eqref{20170719:eq1} (respectively, its symbol $A(z)\in\mc V((z^{-1}))$) we define its residue as
$\Res_\partial A(\partial)=a_{-1}$ (respectively, $\Res_zA(z)=a_{-1}$).
For a series involving negative powers of $z\pm w$ we shall use the notation $\iota_z$
or $\iota_w$ to denote geometric series expansion in the domain of large $z$
or of large $w$ respectively.
For example, 
$\iota_z(z-w)^{-1}=\sum_{n\in\mb Z_+}z^{-n-1}w^n$.
For $A(z)\in\mc V((z^{-1}))$ as above, we have
\begin{equation}\label{eq:positive}
\Res_z A(z)\iota_z(z-w)^{-1}
=
A(w)_+
\,\,,\quad
\Res_z A(z)\iota_w(z-w)^{-1}
=
-A(w)_-
\,.
\end{equation}

The \emph{order} of a pseudodifferential operator  $A(\partial)$ as in \eqref{20170719:eq1}
is $\ord(A)=N$ if $a_N\neq0$.
We also say that $A(\partial)$ is \emph{monic} if $a_N=1$.
Note that, if $A(\partial)$ and $B(\partial)\in\mc V((\partial))$ have orders $\ord(A)=N$ 
and $\ord(B)=M$ respectively,
then
\begin{equation}\label{eq:ord}
\ord(A\circ B)=M+N
\,\text{ and }\,
\ord([A,B])\leq M+N-1
\,,
\end{equation}
where $[A,B]$ denotes the commutator of the pseudodifferential operators $A$ and $B$.

The following results will be important in Section \ref{sec:3}.
\begin{lemma}{\cite[Lem.2.1a)]{DSKV16}}\label{lem:hn1a}
Given pseudodifferential operators $A(\partial),B(\partial)\in\mc V((\partial^{-1}))$, we have
$$
\Res_z A(z)B^*(-z)=\Res_z(A\circ B)(z)
\,.
$$
\end{lemma}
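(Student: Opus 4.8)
The plan is to verify the identity by a direct symbol computation, using the formula \eqref{eq:product} for the product of pseudodifferential operators together with the expression \eqref{20210927:eq1} for the formal adjoint. Write $A(z)=\sum_i a_i z^i$ and $B(\partial)=\sum_j b_j \partial^j$. Then $B^*(\partial)=\sum_j(-\partial)^j\circ b_j$, so by \eqref{20210927:eq1} its symbol is $B^*(z)=\sum_j(-z-\partial)^j b_j$, and hence $B^*(-z)=\sum_j(z-\partial)^j b_j$, interpreted with the notation \eqref{eq:notation}: $B^*(-z)=\sum_j\big(\big|_{x=\partial}(z-x)^j\big)b_j$.

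Next I would compute the left-hand side. Using \eqref{eq:product} in the form $(A\circ B^*(-z))$'s symbol is $A(z+\partial)$ applied to the symbol of $B^*(-z)$; but since we only want the residue in $z$, it is cleaner to expand directly: $A(z)B^*(-z)=\sum_{i,j}a_i z^i\big((z-\partial)^j b_j\big)$, where $(z-\partial)^j$ acts on $b_j\in\mc V$. Taking $\Res_z$ picks out the coefficient of $z^{-1}$. On the other side, $(A\circ B)(z)=A(z+\partial)(B(z))=\sum_{i,j}a_i\big((z+\partial)^i b_j\big)z^j$, and $\Res_z$ of this again extracts the $z^{-1}$ coefficient. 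The claim reduces to matching these two extractions termwise in $i$ and $j$, i.e. to the scalar identity $\Res_z z^i (z-\partial)^j = \Res_z (z+\partial)^i z^j$ as operators on $\mc V$, where on the left $(z-\partial)^j$ is expanded in nonnegative powers of $z$ (for $j\geq 0$) or via geometric series in large $z$ (for $j<0$), and similarly on the right. This is a purely combinatorial fact about the binomial/geometric expansions: both sides equal $\sum_{k}\binom{i}{k}(\mp1)^{?}\cdots$ evaluated so as to produce total $z$-degree $-1$. The cleanest way to see it is to substitute $w=-z$ in one of the residues (noting $\Res_z f(z)=-\Res_w f(-w)$ and that the geometric expansion conventions are compatible under this substitution) and observe the two expressions coincide after relabeling.

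Alternatively, and perhaps more elegantly, I would reduce to the case of monomials $A(\partial)=a\partial^m$, $B(\partial)=b\partial^n$ with $a,b\in\mc V$ by bilinearity and continuity (both sides are continuous in the natural topology on $\mc V((\partial^{-1}))$). For monomials, $B^*(-z)$ has symbol $\big(\big|_{x=\partial}(z-x)^n\big)b=\sum_{k\geq0}\binom{n}{k}(-1)^k z^{n-k}(\partial^k b)$ (with the appropriate interpretation when $n<0$), so $\Res_z a z^m B^*(-z)=a\cdot[\text{coeff of }z^{-1}]=a\binom{n}{n+m+1}(-1)^{n+m+1}(\partial^{n+m+1}b)$, nonzero only when $n+m+1\geq0$. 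On the other hand $(A\circ B)(z)=a((z+\partial)^m b)z^n=\sum_{k\geq0}\binom{m}{k}a(\partial^k b)z^{m-k+n}$, whose residue is $\binom{m}{m+n+1}a(\partial^{m+n+1}b)$. So the identity comes down to $\binom{m}{m+n+1}=(-1)^{m+n+1}\binom{n}{m+n+1}$, which is the standard negative-binomial symmetry $\binom{m}{p}=(-1)^p\binom{p-m-1}{p}$ with $p=m+n+1$, since $p-m-1=n$.

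The main obstacle is purely bookkeeping: one must handle the expansion conventions consistently, especially the cases where $m$ or $n$ is negative so that $(z\pm\partial)^{m}$ or $(z-x)^n$ must be expanded as a geometric series in the region of large $z$, and one must check that the residue (the coefficient of $z^{-1}$) is still extracted correctly and that the binomial-coefficient identity $\binom{m}{p}=(-1)^p\binom{n}{p}$ with $n=p-m-1$ continues to hold for all integers $m$ with $p=m+n+1\geq 0$. Once the monomial case is settled, bilinearity and the fact that both sides depend continuously on finitely many coefficients of $A$ and $B$ (only terms with $i+j\geq -1$ contribute to the residue) finish the proof.
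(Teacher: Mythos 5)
Your proof is correct, but it takes a genuinely different route from the paper. You reduce to monomials $A(\partial)=a\partial^m$, $B(\partial)=b\partial^n$ by bilinearity (correctly noting that only the finitely many pairs with $i+j\geq-1$ contribute to either residue), compute both sides explicitly, and land on the upper-negation identity $\binom{m}{m+n+1}=(-1)^{m+n+1}\binom{n}{m+n+1}$, which is valid for all integers $m,n$ with the generalized binomial coefficients and the convention that $\binom{p}{q}=0$ for $q<0$; your verification of the negative-exponent cases via the geometric expansion in large $z$ is the only delicate point and you handle it correctly. The paper instead gives a two-line conceptual argument: it first observes that $\Res_z f(z+x)$, expanded for $|z|>|x|$, is independent of $x$, and that the adjoint is involutive at the level of symbols, $B(z)=\big(\big|_{x=\partial}B^*(-z-x)\big)$; shifting $z\mapsto z+x$ inside the residue and then setting $x=\partial$ turns $\Res_z A(z)B^*(-z)$ directly into $\Res_z A(z+\partial)B(z)=\Res_z(A\circ B)(z)$. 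The two arguments encode the same combinatorics (the shift-invariance of the residue is exactly the vanishing of the relevant binomial sums), but the paper's version avoids all case analysis and generalizes painlessly to the matrix setting, whereas your computation is more elementary and self-contained --- indeed the very same binomial identity is what the paper invokes in its proof of Lemma \ref{lem:int-res}, so your approach is entirely in the spirit of the surrounding text.
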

\begin{proof}
First, observe that, for an arbitrary $f(z)\in\mc V((z^{-1}))$, 
$\Res_zf(z+x)$, expanded in the domain $|z|>|x|$,
is independent of $x$.
Next, observe that replacing $z$ with $-z-x$ in equation \eqref{20210927:eq1}, we have the identity $A(z)=\left(\big|_{x=\partial}A^*(-z-x)\right)$.
It follows that
\begin{align*}
&\Res_z A(z)B^*(-z)
=
\Res_z A(z+x)\left(\big|_{x=\partial}B^*(-z-x)\right)
\\
&=
\Res_z A(z+\partial)B(z)
=
\Res_z (A\circ B)(z)
\,.
\end{align*}
\end{proof}
\begin{lemma}\label{20170726:lem}
Given pseudodifferential operators $A(\partial),B(\partial)\in\mc V((\partial^{-1}))$, 
the equation 
$\Res_z ((z+\partial)^nA(z))B^*(-z)=0$ holds for all $n\geq0$
if and only if
$(A\circ B)(\partial)_-=0$.
\end{lemma}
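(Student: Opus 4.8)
The plan is to reduce the statement to Lemma \ref{lem:hn1a} by recognizing $\Res_z((z+\partial)^nA(z))B^*(-z)$ as a residue of a product of pseudodifferential operators, and then to use the fact that a pseudodifferential operator $C(\partial)$ satisfies $C(\partial)_-=0$ if and only if $\Res_z(\partial^n\circ C)(z)$ — equivalently, the coefficients of $\partial^{-1},\partial^{-2},\dots$ in $C(\partial)$ — all vanish. First I would observe that $(z+\partial)^nA(z)$ is the symbol of $\partial^n\circ A(\partial)$; indeed by \eqref{eq:product} we have $(\partial^n\circ A)(z)=(z+\partial)^n(A(z))$, so applying Lemma \ref{lem:hn1a} with $\partial^n\circ A$ in place of $A$ gives
\begin{equation*}
\Res_z((z+\partial)^nA(z))B^*(-z)=\Res_z(\partial^n\circ A\circ B)(z)\,.
\end{equation*}
Thus the hypothesis of the lemma is equivalent to $\Res_z(\partial^n\circ(A\circ B))(z)=0$ for all $n\geq0$.

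Next I would extract what this family of conditions says about $C(\partial):=(A\circ B)(\partial)=\sum_{i=-\infty}^{M}c_i\partial^i$. By the product rule \eqref{eq:product}, $(\partial^n\circ C)(z)=(z+\partial)^n(C(z))$, and its residue (the coefficient of $z^{-1}$, expanded for large $z$) is a $\mc V$-linear combination of the $c_i$ with $i\leq -1$, in which the coefficient of $c_{-1}$ is $1$ (the leading term of $(z+\partial)^n$ acting on $c_{-1}z^{-1}$ contributes $z^n\cdot c_{-1}z^{-1}$, whose residue picks out $n=0$; more precisely $\Res_z(z+\partial)^n(c_{-1}z^{-1})=\binom{n}{n}\partial^0c_{-1}=c_{-1}$ when reading off $z^{-1}$, and the lower $c_i$, $i\le -2$, enter through the $\partial$-part of $(z+\partial)^n$). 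So the equations $\Res_z(\partial^n\circ C)(z)=0$, $n=0,1,2,\dots$, form a triangular system in $c_{-1},c_{-2},c_{-3},\dots$ (with $n=0$ giving $c_{-1}=0$, then $n=1$ giving $c_{-2}=0$ modulo $c_{-1}$, etc.). Hence they hold for all $n\geq0$ if and only if $c_{-1}=c_{-2}=\dots=0$, i.e. $C(\partial)_-=0$. Combining with the previous paragraph yields the claim in both directions.

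The main technical point — and the only place requiring care — is the triangularity claim: that $\Res_z(z+\partial)^n(c_{-i}z^{-i})$, for $i\geq 1$, is $c_{-i}$ (up to a nonzero constant) plus a differential-operator expression in $c_{-1},\dots,c_{-i+1}$, so that the linear system in the $c_{-i}$ is invertible over $\mc V$. Concretely, $(z+\partial)^n(c_{-i}z^{-i})=\sum_{j=0}^{n}\binom{n}{j}z^{n-j}\partial^j(c_{-i})z^{-i}$ is a sum of terms $\binom nj z^{n-i-j}\partial^j(c_{-i})$, and the residue selects $n-i-j=-1$, i.e. $j=n-i+1$; this is nonnegative precisely when $n\geq i-1$, and the resulting coefficient $\binom{n}{n-i+1}$ is a nonzero integer (hence invertible in $\mb F$, characteristic zero). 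So for fixed $n$, $\Res_z(\partial^n\circ C)(z)=\sum_{i\geq 1}\binom{n}{n-i+1}\partial^{\,n-i+1}(c_{-i})$ with the convention that terms with $n-i+1<0$ vanish; the $i=n+1$ term is $\binom n0 c_{-(n+1)}=c_{-(n+1)}$. Reading these off for $n=0,1,2,\dots$ recovers each $c_{-i}$ inductively, proving the equivalence; I do not foresee any further obstacle.
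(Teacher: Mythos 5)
Your proof is correct. The first half coincides exactly with the paper's: both apply Lemma \ref{lem:hn1a} to the pair $\partial^n\circ A$ and $B$ to rewrite the residue as $\Res_z(\partial^n\circ A\circ B)(z)$. You diverge in the finishing step: the paper applies Lemma \ref{lem:hn1a} a second time, to the pair $\partial^n$ and $A\circ B$, obtaining $\Res_z z^n(A\circ B)^*(-z)$, whose vanishing for all $n\geq0$ says precisely that $(A\circ B)^*(-z)$ has no negative powers of $z$, i.e.\ $(A\circ B)^*(\partial)_-=0$, equivalently $(A\circ B)(\partial)_-=0$. You instead expand $\Res_z(z+\partial)^n(C(z))$ for $C=A\circ B$ directly and solve the resulting triangular system in $c_{-1},c_{-2},\dots$ by induction. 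Both routes are valid; the paper's is slicker (it needs no triangularity argument, only that adjoining preserves the $\pm$ decomposition), while yours is more elementary and self-contained. One small slip: in your second paragraph you assert $\Res_z(z+\partial)^n(c_{-1}z^{-1})=\binom{n}{n}\partial^0c_{-1}=c_{-1}$, but the residue condition $n-j-1=-1$ forces $j=n$, so this residue is $\partial^n(c_{-1})$, not $c_{-1}$, for $n>0$; your general formula in the third paragraph, $\Res_z(\partial^n\circ C)(z)=\sum_{i\ge1}\binom{n}{n-i+1}\partial^{\,n-i+1}(c_{-i})$ with diagonal term $c_{-(n+1)}$ at $i=n+1$, is the correct one and is what actually carries the induction, so the argument stands.
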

\begin{proof}
By Lemma \ref{lem:hn1a}, we have
\begin{equation}\label{20170726:eq2}
\begin{array}{l}
\displaystyle{
\vphantom{\Big(}
\Res_z ((z+\partial)^nA(z))B^*(-z)
=
\Res_{z}(z+\partial)^n A(z+\partial) B(z)
} \\
\displaystyle{
\vphantom{\Big(}
=
\Res_{z}(z+\partial)^n (A\circ B)(z)
=
\Res_zz^n(A\circ B)^*(-z)
\,.}
\end{array}
\end{equation}
For the first equality of \eqref{20170726:eq2} we applied Lemma \ref{lem:hn1a} to the pair 
of pseudodifferential operators $\partial^nA(\partial)$ and $B(\partial)$,
while for the last equality of \eqref{20170726:eq2} we applied Lemma \ref{lem:hn1a} to the pair 
$\partial^n$ and $A(\partial) B(\partial)$.
Obviously, the RHS of \eqref{20170726:eq2} is $0$ for every $n\geq0$ if and only if $(A\circ B)^*(-z)$
is a polynomial in $z$,
which is equivalent to saying that $(A\circ B)^*(\partial)_-=0$,
which in turn is equivalent to saying that $(A\circ B)(\partial)_-=0$.
\end{proof}
\begin{lemma}{\cite{Adl79},\cite[Lem.2.1b)]{DSKV16}}\label{lem:int-res}
For $A(\partial),B(\partial)\in\mc V((\partial^{-1}))$, we have
$$
\Res_\partial [A(\partial),B(\partial)]\in\partial\mc V
\,.
$$
\end{lemma}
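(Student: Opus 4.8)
The plan is to reduce the statement $\Res_\partial[A(\partial),B(\partial)]\in\partial\mc V$ to a computation with residues of symbols, using the two previous lemmas. First I would write $\Res_\partial[A,B] = \Res_z(A\circ B)(z) - \Res_z(B\circ A)(z)$, and then apply Lemma \ref{lem:hn1a} to each term separately: $\Res_z(A\circ B)(z)=\Res_z A(z)B^*(-z)$ and $\Res_z(B\circ A)(z)=\Res_z B(z)A^*(-z)$. So the goal becomes showing that
\[
\Res_z\big(A(z)B^*(-z)-B(z)A^*(-z)\big)\in\partial\mc V\,.
\]

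Next I would expand everything in terms of the coefficients. Writing $A(\partial)=\sum_i a_i\partial^i$ and $B(\partial)=\sum_j b_j\partial^j$, one has $B^*(-z)=\sum_j(z-\partial)^j b_j$ in the notation \eqref{20210927:eq1}, and picking out the $z^{-1}$ coefficient in $A(z)B^*(-z)=\sum_{i,j}a_i z^i (z-\partial)^j b_j$ gives a double sum of terms of the form $a_i\big((z-\partial)^j b_j\big)$ with $i+j$ contributing the relevant binomial coefficients; concretely each residue term is a finite $\mb F$-linear combination of expressions $a\,\partial^k(b)$ for various coefficients $a$ of $A$ and $b$ of $B$ and various $k\ge 0$. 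The key algebraic fact I would use is that $a\,\partial^k(b) - (-1)^k b\,\partial^k(a)\in\partial\mc V$ for all $a,b\in\mc V$ and $k\ge0$ — this follows by induction on $k$ from the Leibniz rule $\partial(a\partial^{k-1}(b))=\partial(a)\partial^{k-1}(b)+a\partial^k(b)$, i.e. iterated integration by parts. The antisymmetry built into the commutator is exactly what makes the non-exact parts cancel: the $B(z)A^*(-z)$ term produces the same binomial combinations with $a$ and $b$ interchanged, so pairing term by term, $a\,\partial^k(b)$ from the first bracket cancels against the corresponding contribution from the second bracket modulo $\partial\mc V$.

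The cleanest route, which I would actually write, is to avoid the explicit binomial bookkeeping: apply Lemma \ref{lem:hn1a} once more but in the other order. Since $\Res_z X(z)Y^*(-z)$ is not manifestly symmetric, I would instead note that $\Res_z(B\circ A)(z) = \Res_\partial(B\circ A)(\partial)$ and use that $(B\circ A)^*=A^*\circ B^*$ together with the elementary fact $\Res_\partial C(\partial)^* = -\Res_\partial C(\partial) \pmod{\partial\mc V}$ for any pseudodifferential operator $C$ (again integration by parts on $(-\partial)^{-1}\circ c = -c\partial^{-1} + \partial(\cdots)$, the higher terms being differential or lower). Then $\Res_\partial(B\circ A) \equiv -\Res_\partial(B\circ A)^* = -\Res_\partial(A^*\circ B^*) = -\Res_z A^*(z)(B^*)^*(-z) = -\Res_z A^*(z)B(-z)$, and a change of variable $z\mapsto -z$ compares this with $\Res_\partial(A\circ B)=\Res_z A(z)B^*(-z)$, giving the cancellation modulo $\partial\mc V$.

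The main obstacle is purely bookkeeping: making precise the statement "$\Res_\partial C^*\equiv -\Res_\partial C \pmod{\partial\mc V}$" and justifying that all the correction terms in the integration-by-parts steps genuinely land in $\partial\mc V$ (rather than merely in the differential part), including convergence/finiteness of the sums when the orders are negative. Once the lemma $a\,\partial^k b\equiv(-1)^k b\,\partial^k a$ is in hand this is routine, so I would state that sublemma explicitly, prove it by the one-line induction, and then the rest is a short symbol manipulation. No genuinely hard step is expected; the content is entirely "the residue of a commutator is a total derivative," classical and due to Adler.
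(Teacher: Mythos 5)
Your first argument --- expand $\Res_\partial(A\circ B)-\Res_\partial(B\circ A)$ in coefficients via Lemma \ref{lem:hn1a}, reduce to terms $a\,\partial^k(b)$, and use the integration-by-parts congruence $a\,\partial^k(b)\equiv(-1)^kb\,\partial^k(a)\pmod{\partial\mc V}$ --- is essentially the paper's proof, which reduces to monomials $a\partial^i$, $b\partial^j$ and computes $\Res_\partial[a\partial^i,b\partial^j]=\binom{i}{i+j+1}ab^{(i+j+1)}-\binom{j}{i+j+1}ba^{(i+j+1)}$. One point you gloss over: after interchanging $a\leftrightarrow b$ the two terms do \emph{not} carry ``the same binomial combinations''; one carries $\binom{i}{i+j+1}$ and the other $\binom{j}{i+j+1}$, and the term-by-term cancellation needs the identity $\binom{j}{i+j+1}=(-1)^{i+j+1}\binom{i}{i+j+1}$ in addition to the integration-by-parts congruence. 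That identity is the last line of the paper's proof and should be stated explicitly.

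The ``cleanest route'' that you say you would actually write does not work. The identity $\Res_\partial C^*=-\Res_\partial C$ is an \emph{exact} equality, not merely a congruence: the coefficient of $\partial^{-1}$ in $(-\partial)^i\circ c_i$ is $(-1)^i\binom{i}{i+1}c_i^{(i+1)}$, which vanishes for every $i\neq-1$ and equals $-c_{-1}$ for $i=-1$, so no element of $\partial\mc V$ is ever produced. Lemma \ref{lem:hn1a} is likewise an exact identity, and chaining exact identities cannot yield a statement modulo $\partial\mc V$. Indeed your chain closes into a tautology: $-\Res_z A^*(z)(B^*)^*(-z)=-\Res_z A^*(z)B(-z)=\Res_z B(z)A^*(-z)=\Res_\partial(B\circ A)$ after the substitution $z\mapsto-z$, so you have only shown $\Res_\partial(B\circ A)=\Res_\partial(B\circ A)$. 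At best this route reformulates the goal as $\Res_z\big(A(z)B^*(-z)-A^*(-z)B(z)\big)\in\partial\mc V$, and verifying that requires exactly the coefficient-by-coefficient integration by parts and the binomial identity you were hoping to avoid. Write up the first route instead.
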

\begin{proof}
It suffices to prove the claim for $A(\partial)=a\partial^i$, $B(\partial)=b\partial^j$,
with $a,b\in\mc V$ and $i,j\in\mb Z$.
We have
$$
\Res_\partial[a\partial^i,b\partial^j]
=
\binom{i}{i+j+1}ab^{(i+j+i)}-\binom{j}{i+j+1}ba^{(i+j+1)}
\,,
$$
where the binomial coefficient $\binom{p}{q}$ is intended to be $0$ for negative $q$.
Note that $ba^{(m)}\equiv(-1)^mab^{(m)}$ mod $\partial\mc V$.
The claim thus follows from the identity $\binom{j}{i+j+1}=(-1)^{i+j+1}\binom{i}{i+j+1}$.
\end{proof}

Given a collection of functions $f_\alpha\in\mc F$, $\alpha\in I$,
we have the homomorphism $\mc V((\partial^{-1}))\to\mc F((\partial^{-1}))$
extending the evaluation map \eqref{20170718:eq9b},
which we shall denote as
\begin{equation}\label{20170718:eq9c}
A=A(\partial)
\,\mapsto\,
A(f;\partial)
\,\in\mc F((\partial^{-1}))
\,.
\end{equation}
Of course all the results proved for the algebra $\mc V((\partial^{-1}))$ 
(including Lemmas \ref{lem:hn1a}, \ref{20170726:lem} and \ref{lem:int-res} above)
hold for the algebra $\mc F((\partial^{-1}))$.

%%%
\subsection{
$N$-th root of a monic pseudodifferential operator of order $N$
}
\label{sec:2.5c}

\begin{lemma}\label{20170720:lem}
Let $L(\partial)=\partial^N+a_0\partial^{N-1}+a_1\partial^{N-2}+\dots\in\mc V((\partial^{-1}))$
be a monic pseudodifferential operator of order $N\geq1$.
\begin{enumerate}[(a)]
\item
The inverse $L^{-1}(\partial)\in\mc V((\partial^{-1}))$
exists (and is unique) and it is a monic pseudodifferential operator of order $-N$.
\item
There exists a unique monic pseudodifferential operator of order $1$,
denoted $L^{\frac 1N}(\partial)$,
which is an $N$-th root of $L(\partial)$, i.e. such that 
\begin{equation}\label{20170720:eq1}
\big(L^{\frac1N}(\partial)\big)^N=L(\partial)
\,.
\end{equation}
\end{enumerate}
\end{lemma}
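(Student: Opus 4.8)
The plan is to prove both parts by successive approximation, constructing the coefficients of the desired operator one order at a time and checking that each step is uniquely solvable over $\mc V$.

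\textbf{Part (a).} For the inverse, I would look for $L^{-1}(\partial)=\partial^{-N}+\sum_{i\geq1}b_i\partial^{-N-i}$ and impose $L\circ L^{-1}=\id$. Writing out the product via \eqref{eq:product}, the coefficient of $\partial^{0}$ already forces the leading term $\partial^{-N}$ (there is no genuine condition there since the top symbols multiply to give $1$), and the coefficient of $\partial^{-i}$ for $i\geq1$ gives an equation of the form $b_i + (\text{a differential polynomial in } a_0,\dots,a_{i-1} \text{ and } b_1,\dots,b_{i-1}) = 0$. Since $b_i$ appears linearly with coefficient $1$, this determines $b_i\in\mc V$ uniquely by induction on $i$. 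This produces a right inverse; a symmetric argument (or the standard fact that in $\mc V((\partial^{-1}))$ a right inverse of a monic operator is automatically a two-sided inverse, since the ring has no zero divisors in the relevant sense and one can repeat the construction on the left) gives the genuine two-sided inverse. Uniqueness is immediate: if $L\circ M_1=L\circ M_2=\id$ then $M_1=M_1\circ L\circ M_2=M_2$ once two-sidedness is known, or directly from the inductive uniqueness of the $b_i$. Monicity of order $-N$ is built into the ansatz.

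\textbf{Part (b).} For the $N$-th root I would set $R(\partial)=\partial+\sum_{i\geq0}r_i\partial^{-i}$ and compute $R^N$. The key observation is that $R^N$ is again monic of order $N$, and that the coefficient of $\partial^{N-1-i}$ in $R^N$ has the form $N r_i + (\text{a differential polynomial in } r_0,\dots,r_{i-1})$. The crucial point making $r_i$ solvable is precisely that $N$ is invertible in $\mb F$ (characteristic zero), so $N r_i = a_i - (\text{known stuff})$ determines $r_i\in\mc V$ uniquely. Hence by induction on $i\geq0$ all coefficients $r_i$ are determined, giving existence and uniqueness of $R=L^{\frac1N}(\partial)$ with $R^N=L(\partial)$.

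\textbf{Main obstacle.} The genuinely delicate point is verifying that the leading term of the $i$-th coefficient of $R^N$ is exactly $N r_i$ with the remainder depending only on lower-index $r_j$'s — i.e. that the recursion is triangular with invertible (scalar) leading coefficient. This follows from \eqref{eq:ord}: the top symbol of $R^N$ is $z^N$, and when one differentiates the expansion with respect to one of the $r_i$'s the correction from the noncommutativity of $\partial$ with the $r_j$'s always lowers the order, so the $r_i$-contribution to the $\partial^{N-1-i}$-coefficient comes only from picking the term $r_i\partial^{-i}$ once and $\partial$ in the remaining $N-1$ factors, contributing $N r_i$. Making this bookkeeping precise — essentially tracking that $R^N = \partial^N + N r_0\partial^{N-1} + \dots$ and more generally that passing from the coefficients $r_0,\dots,r_{i-1},r_i$ to the coefficient $a_i$ is an affine map in $r_i$ with slope $N$ — is the heart of the argument; everything else is the routine induction. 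The same triangularity remark, with slope $1$ instead of $N$, drives part (a).
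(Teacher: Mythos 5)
Your proposal is correct and follows essentially the same route as the paper: part (b) is proved there by exactly the same coefficient-matching recursion $Nb_i+P_i=a_i$ with $P_i$ depending only on lower-index coefficients, solvable because $N$ is invertible in characteristic zero, and part (a) is dispatched there by geometric series expansion, which is just a repackaging of your order-by-order inversion.
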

\begin{proof}
(See \cite{Dic03})
Part (a) is clear, since $L^{-1}(\partial)$ can be obtained by geometric series expansion.
For part (b), let
$$
L^{\frac1N}(\partial)=\partial+b_0+b_1\partial^{-1}+b_2\partial^{-2}+\dots
\,.
$$
The equation $(L^{\frac1N}(\partial))^N=L(\partial)$ translate, by looking at the coefficients of $\partial^{-i}$, 
to a system of equations of the form
$$
Nb_i+P_i=a_i
\,\,,\,\,\,\,
i\geq0
\,,
$$
where $P_0=0$ and, for $i\geq1$, $P_i$ is a polynomial with constant coefficients 
in the variables $b_j$, with $0\leq j<i$, 
and their derivatives.
(In fact, $P_i$ is homogeneous with respect to the grading defined by $\deg b_i^{(n)}=1+i+n$.)
Clearly, such system can be solved recursively and it admits a unique solution.
\end{proof}
\begin{lemma}\label{20170720:lem1}
Let $D_1$ and $D_2$ be derivations of $\mc V((\partial^{-1}))$
and let $L(\partial)$ be a monic pseudodifferential operator of order $N$.
If $D_1(L(\partial))=D_2(L(\partial))$, then $D_1(L^{\frac kN}(\partial))=D_2(L^{\frac kN}(\partial))$
for every $k\in\mb Z$.
\end{lemma}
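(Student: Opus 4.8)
The plan is to reduce the statement to a purely algebraic fact about the uniqueness of the $N$-th root, together with the Leibniz rule for a derivation. First I would observe that it suffices to treat $k\in\mb Z_{\geq1}$: the cases $k=0$ and negative $k$ then follow, since $D_i(L^0)=D_i(\id)=0$ trivially, and applying a derivation to $L^{\frac kN}(\partial)\circ L^{-\frac kN}(\partial)=\id$ shows that $D_1(L^{-\frac kN})=D_2(L^{-\frac kN})$ once we know $D_1(L^{\frac kN})=D_2(L^{\frac kN})$ (using that $L^{\pm\frac kN}$ commute, and that $D_i(L^{\frac kN})$ agree). Equivalently, it is enough to prove the statement for $k=1$, because $L^{\frac kN}(\partial)=\bigl(L^{\frac1N}(\partial)\bigr)^k$, and then by the Leibniz rule $D_i\bigl((L^{\frac1N})^k\bigr)=\sum_{j=0}^{k-1}(L^{\frac1N})^{j}\,D_i(L^{\frac1N})\,(L^{\frac1N})^{k-1-j}$, so equality of $D_1(L^{\frac1N})$ and $D_2(L^{\frac1N})$ propagates to every power.

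So the heart of the matter is: if $D_1(L(\partial))=D_2(L(\partial))$ then $D_1(L^{\frac1N}(\partial))=D_2(L^{\frac1N}(\partial))$. Write $R(\partial)=L^{\frac1N}(\partial)$, a monic pseudodifferential operator of order $1$, so $R^N=L$. Set $\Delta=D_1-D_2$, a derivation of $\mc V((\partial^{-1}))$ with $\Delta(L)=0$; I want to show $\Delta(R)=0$. Applying $\Delta$ to $R^N=L$ and using the Leibniz rule gives
\[
\sum_{j=0}^{N-1}R^{j}\circ\Delta(R)\circ R^{N-1-j}=0\,.
\]
Now $\Delta(R)$ has order at most $0$ (since $R$ is monic of order $1$, the leading coefficient $1$ is killed by any derivation, hence $\ord(\Delta(R))\leq 0$). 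Write $S=\Delta(R)$ and suppose $S\neq0$ with leading term $s\,\partial^m$, $s\neq0$, $m\leq 0$. Each summand $R^{j}\circ S\circ R^{N-1-j}$ has order $j+m+(N-1-j)=N-1+m$ with leading coefficient exactly $s$ (because $R$ and all its powers are monic). Hence the leading term of the sum is $Ns\,\partial^{N-1+m}$, and since $\mb F$ has characteristic zero and $N\geq1$, this is nonzero, contradicting that the sum vanishes. Therefore $S=\Delta(R)=0$, as desired.

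The only point requiring a little care — and the step I would flag as the main (mild) obstacle — is the bookkeeping of leading orders when summing $R^{j}\circ S\circ R^{N-1-j}$: one must be sure that no cancellation of top-order terms can occur, which is exactly why monicity of $R$ (hence of all $R^j$) and $\operatorname{char}\mb F=0$ are used; the same computation in effect re-proves the recursive uniqueness already established in Lemma~\ref{20170720:lem}(b), now in a "linearized" form. Everything else is a routine application of the Leibniz rule for derivations of $\mc V((\partial^{-1}))$ and the identity $L^{\frac kN}=\bigl(L^{\frac1N}\bigr)^k$.
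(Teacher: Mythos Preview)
Your proof is correct and follows essentially the same route as the paper: reduce to $k=1$ via the Leibniz rule, expand $0=\Delta(R^N)=\sum_{j}R^j\,\Delta(R)\,R^{N-1-j}$, and use monicity of $R=L^{1/N}$ to see that the leading coefficient of the sum is $N$ times that of $\Delta(R)$, forcing $\Delta(R)=0$. The paper's write-up is more terse (it does not spell out the $k\leq 0$ cases or the characteristic-zero remark), but the argument is the same.
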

\begin{proof}
By the Leibniz rule, it suffices to prove the claim for $k=1$.
We have
$$
\begin{array}{l}
\displaystyle{
\vphantom{\Big(}
0
=D_1(L(\partial))-D_2(L(\partial))
=D_1((L^{\frac1N}(\partial))^N)-D_2((L^{\frac1N}(\partial))^N)
} \\
\displaystyle{
\vphantom{\Big(}
=\sum_{h=0}^{N-1}L^{\frac hN}(\partial)\left(D_1(L^{\frac1N}(\partial))-D_2(L^{\frac1N}(\partial))\right)L^{\frac{N-h-1}N}(\partial)
\,.}
\end{array}
$$
Since $L^{\frac 1N}(\partial)$ is monic, the leading coefficient of the RHS is $N$ times the leading coefficient
of $D_1(L^{\frac1N}(\partial))-D_2(L^{\frac1N}(\partial))$, which must therefore be zero.
\end{proof}

%%%
\subsection{
Action of pseudodifferential operators on oscillating functions
}
\label{sec:2.6}

The algebra of differential operators $\mc V[\partial]$ acts naturally on $\mc V$,
and likewise the algebra of differential operators over $\mc F$ acts naturally on $\mc F$.
On the other hand, pseudodifferential operators 
do not act in any way on $\mc V$ or $\mc F$.

To overcome this problem suppose that $\mc F$ is an algebra of functions on space-time, with 
space variable $x$ and time variables $t_k$, $k\in\mc Z$. 
In this case, we define the \emph{space of oscillating functions} as
$$
\mc F((z^{-1})) e^{z\cdot\bm t}
=
\big\{
S(z)e^{z\cdot\bm t}\,\big|\, S(z)\in\mc F((z^{-1}))
\big\}
\,,
$$
where $e^{z\cdot\bm t}$ is just a formal symbol,
defined by the following rules
\begin{equation}\label{20210407:eq1}
\frac{\partial}{\partial x}e^{z\cdot\bm t}=ze^{z\cdot\bm t}
\,,\quad
\frac{\partial}{\partial t_k}e^{z\cdot\bm t}=z^ke^{z\cdot\bm t}\,,\quad k\in\mc Z\,,\quad
e^{z\cdot\bm t}|_{x=0,\bm t=0}=1\,.
\end{equation}
Namely, we should think of $z\cdot{\bm t}$ as $zx+\sum_{k\in\mc Z}z^kt_k$.

We have a natural representation of the algebra $\mc F((\partial^{-1}))$
on the space of oscillating functions, given by
$$
P(\partial)(S(z)e^{z\cdot\bm t})=(P\circ S)(z)e^{z\cdot\bm t}
\,,\,\,
\text{ for every }\,
P(\partial),S(\partial)\in\mc F((\partial^{-1}))
\,.
$$

On the space of oscillating functions we have a well-defined (commuting) action
of all time derivatives $\frac{\partial}{\partial t_k}$, $k\in\mc Z$,
induced by their action on $\mc F$ and by \eqref{20210407:eq1}:
\begin{equation}\label{20170725:eq5}
\frac{\partial}{\partial t_k}(S(z)e^{z\cdot\bm t})
=
\big(\frac{\partial S(z)}{\partial t_k}+z^kS(z)\big)e^{z\cdot\bm t}
\,.
\end{equation}

Recall that on the algebra of functions $\mc F$ we have the evaluation map 
$\mc F\twoheadrightarrow\mb F$ defined in \eqref{20170718:eq8b}.
It induces an \emph{evaluation map} on the space of oscillating functions
$\mc F((z^{-1}))e^{z\cdot\bm t}\twoheadrightarrow\mb F((z^{-1}))$,
defined by
\begin{equation}\label{20170725:eq6}
w(z)=S(z)e^{z\cdot\bm t}\mapsto w(z)|_{x=0,t=0}:=S(z)|_{x=0,t=0}
\,.
\end{equation}

\begin{lemma}\label{20170725:lem}
Let $w(z)=S(z)e^{z\cdot\bm t}$ be an oscillating function 
associated to a monic pseudodifferential operator 
$S(\partial)=\partial^N+s_0\partial^{N-1}+s_1\partial^{N-2}+\dots\in\mc F((\partial^{-1}))$.
If $P(\partial)\in\mc F((\partial^{-1}))$ is a pseudodifferential operator
such that $P(\partial)w(z)=0$, then $P(\partial)=0$.
\end{lemma}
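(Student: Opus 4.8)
The plan is to argue by contradiction using the multiplicative structure of $\mc F((\partial^{-1}))$. Suppose $P(\partial)\neq 0$ and write $P(\partial)=p_M\partial^M+p_{M-1}\partial^{M-1}+\dots$ with $p_M\neq 0$ its leading coefficient. The hypothesis is that $(P\circ S)(z)e^{z\cdot\bm t}=0$ as an oscillating function, which by the very definition of the space of oscillating functions (as $\mc F((z^{-1}))e^{z\cdot\bm t}$ with $e^{z\cdot\bm t}$ a free formal symbol) forces the symbol $(P\circ S)(z)=0$, i.e.\ $P(\partial)\circ S(\partial)=0$ in $\mc F((\partial^{-1}))$. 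Since $S(\partial)$ is monic of order $N$, it is invertible in $\mc F((\partial^{-1}))$ by Lemma \ref{20170720:lem}(a) (applied over $\mc F$ rather than $\mc V$, which is legitimate as noted after \eqref{20170718:eq9c}). Composing $P(\partial)\circ S(\partial)=0$ on the right with $S^{-1}(\partial)$ gives $P(\partial)=0$, contradicting $P(\partial)\neq 0$.

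Alternatively, one can avoid invoking invertibility and argue directly with orders: by \eqref{eq:ord} we have $\ord(P\circ S)=M+N$, and the leading coefficient of $P\circ S$ is $p_M\cdot 1=p_M\neq 0$ since $S$ is monic. Hence $P(\partial)\circ S(\partial)\neq 0$, so $(P\circ S)(z)\neq 0$, so $(P\circ S)(z)e^{z\cdot\bm t}\neq 0$. Either way the key point is just that the map $P(\partial)\mapsto P(\partial)(w(z))=(P\circ S)(z)e^{z\cdot\bm t}$ is injective because right multiplication by the monic (hence non-zero-divisor, indeed invertible) operator $S(\partial)$ is injective, and because the assignment $T(z)\mapsto T(z)e^{z\cdot\bm t}$ is a bijection from $\mc F((z^{-1}))$ onto the space of oscillating functions.

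The only point requiring a word of care is the passage from "$(P\circ S)(z)e^{z\cdot\bm t}=0$ as an oscillating function" to "$(P\circ S)(z)=0$ in $\mc F((z^{-1}))$": this holds precisely because $e^{z\cdot\bm t}$ is a formal symbol and the representation of $\mc F((\partial^{-1}))$ on oscillating functions is defined by $P(\partial)(S(z)e^{z\cdot\bm t})=(P\circ S)(z)e^{z\cdot\bm t}$, so two oscillating functions coincide iff their symbols do. There is no analytic subtlety here and no real obstacle; the statement is essentially the observation that a non-zero pseudodifferential operator cannot annihilate an oscillating function whose symbol is a monic pseudodifferential operator (equivalently, a unit up to the formal exponential factor).
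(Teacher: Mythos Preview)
Your proof is correct and takes essentially the same approach as the paper: the paper's proof is the single sentence ``It is a consequence of the obvious fact that a monic pseudodifferential operator is not a zero divisor in $\mc F((\partial^{-1}))$,'' and your argument simply unpacks this, first reducing $P(\partial)w(z)=0$ to $P(\partial)\circ S(\partial)=0$ via the formal nature of $e^{z\cdot\bm t}$, then using that $S(\partial)$ monic implies it is not a zero divisor (either by invertibility or by looking at leading terms). There is nothing to add or correct.
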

\begin{proof}
It is a consequence of the obvious fact 
that a monic pseudodifferential operator  is not a zero divisor in $\mc F((\partial^{-1}))$.
\end{proof}

We also define the space of \emph{anti-oscillating functions} as
$\mc F((z^{-1}))e^{-z\cdot\bm t}$.
We define the action of $\mc F((\partial^{-1}))$ on the space of anti-oscillating functions
$\mc F((z^{-1}))e^{-z\cdot\bm t}$ by
$$
P(\partial)\big(S(-z)e^{-z\cdot\bm t}\big)
=
(P\circ S)(-z) e^{-z\cdot\bm t}
\,.
$$
Furthermore, we define the action of the partial derivatives $\frac{\partial}{\partial t_k}$, $k\in\mc Z$,
on the space of anti-oscillating functions $\mc F((z^{-1}))e^{-z\cdot\bm t}$ by
\begin{equation}\label{20170725:eq5b}
\frac{\partial}{\partial t_k}(S(z)e^{-z\cdot\bm t})
=
\big(\frac{\partial S(z)}{\partial t_k}-z^kS(z)\big)e^{-z\cdot\bm t}
\,.
\end{equation}

Given an oscillating function $w(z)=S(z)e^{z\cdot\bm t}$,
associated to an invertible pseudodifferential operator $S(\partial)\in\mc F((\partial^{-1}))$,
we define the corresponding \emph{adjoint} anti-oscillating function as
\begin{equation}\label{20170725:eq8}
w^\star(z)
:=
(S^*)^{-1}(-z)e^{-z\cdot\bm t}
\,.
\end{equation}

Note that two oscillating functions, or two anti-oscillating functions, cannot be multiplied.
On the other hand, we can multiply an oscillating function $w(z)=P(z)e^{z\cdot\bm t}$
and an anti-oscillating function $\omega(z)=Q(-z)e^{-z\cdot\bm t}$,
the result being a Laurent series:
\begin{equation}\label{20170726:eq1}
w(z)\omega(z)
=
P(z)Q(-z)\,\in\mc F((z^{-1}))
\,.
\end{equation}
It then also makes sense to take its residue $\Res_z(w(z)\omega(z))\in\mc F$,
which is, by definition, the coefficient of $z^{-1}$ of \eqref{20170726:eq1}.

%%%%%%%%%%%%%%%%%%%%%%%%%%%%%%%%%%%%%%%
\section{Lax operators and hierarchies of Lax equations}
\label{sec:3}

%%%
\subsection{Fractional powers}
\label{sec:3.1}

As in Section \ref{sec:2},
we let $\mc V$ be the algebra \eqref{20170721:eq3} of differential polynomials 
in $\ell$ variables $u_\alpha$, $\alpha\in I$.
Moreover, throughout this section
we fix a monic pseudodifferential operator $L(\partial)\in\mc V((\partial^{-1}))$ 
of order $N\geq1$:
\begin{equation}\label{eq:L}
L(\partial)=\partial^N+a_0\partial^{N-1}+a_1\partial^{N-2}+a_2\partial^{N-3}+\dots
\,\,,\,\,\,\,
a_i\in\mc V
\,.
\end{equation}
Recalling the definition \eqref{20170720:eq1} of the $N$-th root $L^{\frac1N}(\partial)$,
we let, for $k\in\mb Z$,
\begin{equation}\label{20170720:eq4}
B_k(\partial)=(L^{\frac kN}(\partial))_+\in\mc V[\partial]
\,\,,\,\,\,\,
B_k^{(-)}(\partial)=(L^{\frac kN}(\partial))_-\in\mc V[[\partial^{-1}]]\partial^{-1}
\,,
\end{equation}
so that
\begin{equation}\label{20170720:eq5}
L^{\frac kN}(\partial)=B_k(\partial)+B_k^{(-)}(\partial)
\,.
\end{equation}

%We recall from Section \ref{sec:2.5} the composition $A(\partial)\circ B(\partial)$ of pseudodifferential operators
%$A(\partial),B(\partial)\in\mc V((\partial^{-1}))$, and we denote by
%$[A(\partial),B(\partial)]=A(\partial)\circ B(\partial)-B(\partial)\circ A(\partial)$ their commutator.
%%
%Particularly important for us will be the space
%$$
%[\mc V[\partial],L(\partial)]=\Big\{[B(\partial),L(\partial)]\,\Big|\,B(\partial)\in\mc V[\partial]\Big\}\subset\mc V((\partial^{-1}))
%\,.
%$$

Recall from \eqref{eq:frechet} the definition of the Frechet derivative $D_\alpha^{(f)}(\partial)\in\mc V[\partial]$, $\alpha\in I$,
associated to $f\in\mc V$.
It can be viewed as a differential operator
$$
D^{(f)}(\partial):\mc V^I\to\mc V\,,
$$
mapping $P=(P_\alpha)_{\alpha\in I}\mapsto\sum_{\alpha\in I}D_\alpha^{(f)}(\partial)P_\alpha$.
We generalize this map by replacing $f\in\mc V$ with an arbitrary pseudodifferential operator $L(\partial)\in\mc V((\partial^{-1}))$
as in \eqref{eq:L}. As a result we obtain a map
$$
D^{(L)}:\mc V^I\to\mc V((\partial^{-1}))\,,
$$
defined by taking the Frechet derivative of the coefficients $a_i\in\mc V$ of $L(\partial)$. In other words, recalling
\eqref{eq:frechet2}
$$
D^{(L)}(P)=\sum_{i=0}^\infty X_P(a_i)\partial^{N-1-i}
\,.
$$
We denote by $D^{(L)}(\mc V^I)$ the image of this map:
$$
D^{(L)}(\mc V^I)=\{D^{(L)}(P)|P\in\mc V^I\}\subset\mc V((\partial^{-1}))
\,.
$$

%%%
In the present section we review some results about scalar Lax operators, see e.g. \cite{GD76,DS85}. 
Let $L(\partial)$ be as in \eqref{eq:L}, let $B(\partial)\in\mc V[\partial]$, and consider the associated Lax equation
\begin{equation}\label{eq:lax0}
\frac{\partial L(\partial)}{\partial t_B}=[B(\partial),L(\partial)]\,.
\end{equation}
Recall that, if $\frac{\partial u}{\partial t_B}=P\in\mc V^I$, then,
by the chain rule, $L(\partial)$ evolves according to
$$
\frac{\partial L(\partial)}{\partial t_B}=\sum_{i=0}^{\infty}X_P(a_i)\partial^{N-1-i}=D^{(L)}(P)\in\mc V((\partial^{-1}))\,.
$$
Hence, equation \eqref{eq:lax0} can be rewritten as
\begin{equation}\label{eq:lax2}
D^{(L)}\left(\frac{\partial u}{\partial t_B}\right)=[B(\partial),L(\partial)]\,.
\end{equation}
We then obtain that equation \eqref{eq:lax0} is consistent (cf. Definition \ref{def:consistency}) if and only if
\begin{equation}\label{eq:A3}
[B(\partial),L(\partial)]=D^{(L)}(P) (=X_P(L(\partial)))\in\mc V((\partial^{-1}))\,,
\end{equation}
for some $P\in\mc V^I$.
With a slight abuse of notation, if \eqref{eq:A3} holds, we shall denote by $\frac{\partial}{\partial t_B}$ the (rather an)
associated evolutionary vector field $X_P$. The element $P=(P_\alpha)_{\alpha\in I}\in\mc V^I$ is defined up to adding an element in the kernel of $D^{(L)}$.

We want to describe more explicitly the space of operators $B(\partial)$ for which \eqref{eq:A3} holds.
Clearly, a necessary condition for \eqref{eq:A3} to hold is that
$\ord([B(\partial),L(\partial)])=\ord(X_P(L))\leq N-1$.
\begin{lemma}\label{lem:1}
For $B(\partial)\in\mc V[\partial]$ we have $\ord([B(\partial),L(\partial)])\leq N-1$ if and only if
$$
B(\partial)\in
\mc V+\Span_{\mb F}\{B_k(\partial)\}_{k\geq1}\subset\mc V[\partial]\,.
$$
In fact, 
\begin{equation}\label{eq:vic}
\ord([B_k(\partial),L(\partial)])\leq N-2
\,\,\text{ for every }\,
k\geq1\,.
\end{equation}
\end{lemma}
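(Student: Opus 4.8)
The plan is to prove \eqref{eq:vic} first, and then deduce the ``if and only if'' statement for general $B(\partial)$ from it. For the key estimate \eqref{eq:vic}, the natural starting point is that $L^{\frac kN}(\partial)$ commutes with $L(\partial)$: indeed, both are polynomials (with constant coefficients, after the recursion of Lemma \ref{20170720:lem}) in the single pseudodifferential operator $L^{\frac1N}(\partial)$, so $[L^{\frac kN}(\partial),L(\partial)]=0$. Splitting $L^{\frac kN}(\partial)=B_k(\partial)+B_k^{(-)}(\partial)$ as in \eqref{20170720:eq5}, this gives
\begin{equation*}
[B_k(\partial),L(\partial)]=-[B_k^{(-)}(\partial),L(\partial)]\,.
\end{equation*}
Now I would estimate the order of each side separately. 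The left-hand side is a commutator of differential operators, $B_k(\partial)\in\mc V[\partial]$ of order $k$ and $L(\partial)_+$ has order $N$ (plus the singular part of $L$, which only lowers orders); by \eqref{eq:ord} the commutator of two differential operators of orders $k$ and $N$ has order $\le k+N-1$, so $[B_k(\partial),L(\partial)]$ is a \emph{differential} operator of order $\le k+N-1$. The right-hand side is $-[B_k^{(-)}(\partial),L(\partial)]$, where $B_k^{(-)}(\partial)\in\mc V[[\partial^{-1}]]\partial^{-1}$ has order $\le -1$ and $L(\partial)$ has order $N$; hence by \eqref{eq:ord} this commutator has order $\le -1+N-1=N-2$. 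Since the two sides are equal, the common value is a pseudodifferential operator that is simultaneously differential and of order $\le N-2$, i.e. it lies in $\bigoplus_{j=0}^{N-2}\mc V\partial^j$; in particular its order is $\le N-2$, which is \eqref{eq:vic}.

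Granting \eqref{eq:vic}, the ``if'' direction of the lemma is immediate: if $B(\partial)=b+\sum_{k}c_kB_k(\partial)$ with $b\in\mc V$ and $c_k\in\mb F$ (finitely many nonzero), then $[b,L(\partial)]$ has order $\le N-2$ by \eqref{eq:ord} (since $b$ has order $0$ and $[\,\cdot\,,\,\cdot\,]$ drops the order by at least $1$), and each $c_k[B_k(\partial),L(\partial)]$ has order $\le N-2$ by \eqref{eq:vic}; summing, $\ord([B(\partial),L(\partial)])\le N-2\le N-1$.

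For the ``only if'' direction, suppose $B(\partial)=\sum_{j=0}^m b_j\partial^j\in\mc V[\partial]$ satisfies $\ord([B(\partial),L(\partial)])\le N-1$. The strategy is to subtract off a suitable $\mb F$-linear combination of the $B_k(\partial)$ to kill the top coefficient, and induct downward on $m$. Write $L(\partial)=\partial^N+\text{lower}$; the coefficient of $\partial^{m+N-1}$ in $[B(\partial),L(\partial)]=BL-LB$ comes only from the leading terms and equals $\binom{m}{1}b_m\partial^{m-1}\cdot 1 - \binom{N}{1}\partial^{N-1}(b_m)\cdot(\dots)$ --- more precisely one computes that the coefficient of $\partial^{m+N-1}$ is $N\partial(b_m)=Nb_m'$ (the $\partial^{m+N}$ terms cancel since $L$ is monic). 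Wait --- rather than risk the combinatorics, the clean route is: the hypothesis forces $b_m'=0$, i.e. $b_m\in\mb F$ by \eqref{20170721:eq1}; but by Lemma \ref{20170720:lem}(b) $B_m(\partial)=(L^{\frac mN}(\partial))_+$ is a monic differential operator of order exactly $m$ (when $m\ge1$), so $B(\partial)-b_m B_m(\partial)$ has order $<m$ and, by \eqref{eq:vic} together with the hypothesis, still has $\ord([\,\cdot\,,L(\partial)])\le N-1$. Iterating down to order $0$ leaves an element of $\mc V$, which proves $B(\partial)\in\mc V+\Span_{\mb F}\{B_k(\partial)\}_{k\ge1}$.

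The main obstacle I anticipate is the bookkeeping in the ``only if'' direction: one must check that the leading coefficient of $[B(\partial),L(\partial)]$ really is a nonzero multiple of $b_m'$ (so that the order bound genuinely forces $b_m\in\mb F$), and that subtracting $b_mB_m(\partial)$ does not reintroduce a top-order term --- this is exactly where monicity of $L$ and of $B_m=(L^{m/N})_+$ is used. Everything else is a routine application of the order estimates \eqref{eq:ord} and the commutativity of the fractional powers of $L$.
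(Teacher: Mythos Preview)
Your argument is essentially the paper's own proof: derive \eqref{eq:vic} from $[L^{k/N},L]=0$ via the splitting $L^{k/N}=B_k+B_k^{(-)}$ and the order bound on $[B_k^{(-)},L]$, then induct downward on $\ord(B)$ for the converse, using that the coefficient of $\partial^{m+N-1}$ in $[B,L]$ is $-Nb_m'$ to force $b_m\in\mb F$. Two harmless slips: for $b\in\mc V$ one only gets $\ord([b,L])\le N-1$ (not $N-2$), and $[B_k,L]$ is not in general a \emph{differential} operator when $L$ has a singular part --- but neither affects anything, since only the bound $\ord([B_k,L])\le N-2$ is actually used.
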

\begin{proof}
If $f\in\mc V$, by \eqref{eq:ord}, $[f,L(\partial)]$ has order at most $N-1$.
Obviously, $[L^{\frac{k}{N}}(\partial),L(\partial)]=0$ (since $L(\partial)=(L^{\frac1N}(\partial))^N$). Hence, by \eqref{20170720:eq5}, we have
$$
[B_k(\partial),L(\partial)]=-[B_k^{(-)}(\partial),L(\partial)]\,,
$$
which has order at most $N-2$. This proves the ``if'' part.

For the opposite implication, let $B(\partial)\in\mc V[\partial]$ be such that $\ord([B(\partial),L(\partial)])\leq N-1$. We shall prove,
by induction on $k=\ord (B(\partial))\geq0$, that $B(\partial)\in\mc V+\Span_{\mb F}\{B_h(\partial)\}_{h\geq1}$.
For $k=0$, the claim is obvious, so we can assume $k\geq1$. Let $a\in\mc V$ be the leading coefficient of $B(\partial)$:
$$
B(\partial)=a\partial^k+\text{ lower order terms}\,.
$$
We have
$$
[B(\partial),L(\partial)]=-Na'\partial^{N+k-1}+\text{ lower order terms}\,.
$$
Hence, $a'=0$, i.e. $a\in\mb F$. Since, by Lemma \ref{20170720:lem}(b), $L^{\frac1N}(\partial)$ (hence $B_k(\partial)$) is monic,
$$
\ord(B(\partial)-aB_k(\partial))\leq k-1\,,
$$
and, by the ``if'' part,
$$
\ord([B(\partial)-aB_k(\partial),L(\partial)])\leq N-1\,.
$$
The claim follows by the assumption of induction.
\end{proof}
Recall from Section \ref{sec:2.1}
the notion of \emph{compatibility}. We can study when two consistent Lax equations \eqref{eq:lax0} are compatible. This is discussed in the following proposition.
\begin{proposition}\label{20170719:prop}
Let $B(\partial),C(\partial)\in\mc V[\partial]$ be two differential operators satisfying the consistency condition \eqref{eq:A3}.
Then,
assuming that $B(\partial),C(\partial)\in\Span_{\mb F}\{B_k(\partial)\}_{k\geq1}$,
the corresponding Lax equations
\begin{equation}\label{20210329:eq*}
\frac{\partial L(\partial)}{\partial t_B}=[B(\partial),L(\partial)]\quad\text{and}\quad\frac{\partial L(\partial)}{\partial t_C}=[C(\partial),L(\partial)]
\end{equation}
are compatible on the subalgebra of $\mc V$ generated by the coefficients of $L(\partial)$.
In other words, the mixed second derivatives of $L(\partial)$, 
in virtue of \eqref{20210329:eq*}, coincide:
\begin{equation}\label{20170720:eq2}
\Big(\frac{\partial^2 L(\partial)}{\partial t_B\partial t_C}
=\Big)\,
\frac{\partial}{\partial t_B}
\big[C(\partial),L(\partial)\big]
=
\frac{\partial}{\partial t_C}
\big[B(\partial),L(\partial)\big]
\,.
\end{equation}
\end{proposition}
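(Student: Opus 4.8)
The plan is to work entirely inside the associative algebra $\mc V((\partial^{-1}))$ of pseudodifferential operators, where the two Lax flows act as derivations, and to reduce the statement to a Jacobi-type identity for commutators together with Lemma~\ref{20170720:lem1}. First I would introduce the evolutionary vector fields: since both Lax equations are consistent, we may pick $P,\tilde P\in\mc V^I$ with $[B(\partial),L(\partial)]=X_P(L(\partial))$ and $[C(\partial),L(\partial)]=X_{\tilde P}(L(\partial))$, and set $\frac{\partial}{\partial t_B}=X_P$, $\frac{\partial}{\partial t_C}=X_{\tilde P}$ as derivations of $\mc V((\partial^{-1}))$. The identity \eqref{20170720:eq2} to be proved is then $X_P([C(\partial),L(\partial)])-X_{\tilde P}([B(\partial),L(\partial)])=0$, i.e. (since $X_P,X_{\tilde P}$ are derivations of the associative product) it equals $[X_P(C(\partial)),L(\partial)]+[C(\partial),X_P(L(\partial))]-[X_{\tilde P}(B(\partial)),L(\partial)]-[B(\partial),X_{\tilde P}(L(\partial))]$. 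Substituting $X_P(L)=[B,L]$ and $X_{\tilde P}(L)=[C,L]$ in the middle two terms and using the Jacobi identity, the sum of those two middle terms becomes $[[B,C],L]$ plus the cross terms, so after rearrangement the whole expression collapses to $[X_P(C(\partial))-X_{\tilde P}(B(\partial))-[B(\partial),C(\partial)],\,L(\partial)]$.

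Thus the proposition will follow once I show that $X_P(C(\partial))-X_{\tilde P}(B(\partial))-[B(\partial),C(\partial)]$ lies in the centralizer of $L(\partial)$, and in fact I expect it to vanish (or at least to lie in $\mc V$ and act trivially on the coefficients of $L$). Here is where the hypothesis $B,C\in\Span_{\mb F}\{B_k(\partial)\}_{k\geq1}$ enters. By linearity it suffices to treat $B=B_j(\partial)$, $C=B_k(\partial)$. The key observation is that $B_k(\partial)=(L^{k/N}(\partial))_+$, so $X_P(B_k(\partial))=\big(X_P(L^{k/N}(\partial))\big)_+$; by Lemma~\ref{20170720:lem1}, applied to the derivations $X_P$ of $\mc V((\partial^{-1}))$ and the relation $X_P(L(\partial))=[B_j(\partial),L(\partial)]$, we get $X_P(L^{k/N}(\partial))=[B_j(\partial),L^{k/N}(\partial)]$ — using that $\frac{\partial}{\partial t_B}$ applied to $L$ equals $[B_j,\cdot]$ applied to $L$, since $\ad_{B_j}$ is also a derivation of $\mc V((\partial^{-1}))$ agreeing with $X_P$ on $L$. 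Therefore $X_P(B_k(\partial))=\big([B_j(\partial),L^{k/N}(\partial)]\big)_+$, and similarly $X_{\tilde P}(B_j(\partial))=\big([B_k(\partial),L^{j/N}(\partial)]\big)_+$.

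The remaining step is then a purely algebraic identity among the differential parts: writing $L^{j/N}=B_j+B_j^{(-)}$ and $L^{k/N}=B_k+B_k^{(-)}$, and using that $[L^{j/N},L^{k/N}]=0$, one computes $\big([B_j,L^{k/N}]\big)_+-\big([B_k,L^{j/N}]\big)_+-[B_j,B_k]$ and checks it is zero. Expanding $[B_j,L^{k/N}]=[B_j,B_k]+[B_j,B_k^{(-)}]$ and $[B_k,L^{j/N}]=[B_k,B_j]+[B_k,B_j^{(-)}]$, the $[B_j,B_k]$ terms cancel against the explicit $-[B_j,B_k]$, and what is left is $\big([B_j,B_k^{(-)}]\big)_+ + \big([B_k^{(-)},B_j^{(-)}]\big)_+$ — wait, one must track that $[B_j,B_k^{(-)}]+[B_j^{(-)},B_k^{(-)}]=[L^{j/N},B_k^{(-)}]$, whose positive part one argues vanishes because $[L^{j/N},L^{k/N}]=0$ forces $[L^{j/N},B_k^{(-)}]=-[L^{j/N},B_k]$, an operator whose differential part is the differential part of a commutator of two differential operators with $L^{j/N}$... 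This bookkeeping — splitting each product into $+$ and $-$ parts and using $[L^{j/N},L^{k/N}]=0$ repeatedly — is the one genuinely computational point, and it is exactly the classical Zakharov--Shabat computation. I expect this to be the main (though routine) obstacle; once it is done, $X_P(C(\partial))-X_{\tilde P}(B(\partial))-[B(\partial),C(\partial)]=0$, hence \eqref{20170720:eq2} holds identically, and in particular on the subalgebra generated by the coefficients of $L(\partial)$. Finally I would remark that the choice of $P,\tilde P$ was arbitrary only up to $\ker D^{(L)}$, which acts trivially on the coefficients of $L$, so the conclusion is independent of that choice, matching the phrasing of the statement.
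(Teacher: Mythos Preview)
Your proposal is correct and follows essentially the same approach as the paper: both arguments hinge on Lemma~\ref{20170720:lem1} to propagate the Lax equation to fractional powers, then split $L^{j/N}=B_j+B_j^{(-)}$ and use $[L^{j/N},L^{k/N}]=0$ to collapse the remaining commutators. Your organization differs only cosmetically---you first factor the difference of mixed derivatives as $[X_P(C)-X_{\tilde P}(B)-[B,C],\,L]$ via Jacobi and then verify the Zakharov--Shabat identity $([B_j,L^{k/N}])_+-([B_k,L^{j/N}])_+=[B_j,B_k]$, whereas the paper expands $[X_{P_B},X_{P_C}](L)$ directly and arrives at $[[L^{h/N},L^{k/N}]_+,L]=0$; the underlying algebra is the same, and the ``bookkeeping'' you flag as the main obstacle is indeed routine (it follows in one line from $([B_j^{(-)},B_k]+[B_j,B_k^{(-)}])_+=(-[B_j,B_k]-[B_j^{(-)},B_k^{(-)}])_+=-[B_j,B_k]$).
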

\begin{proof}
(See \cite{Dic03}.)
Recall that the Lax equations \eqref{20210329:eq*} are equivalent to linear systems of quasi-evolution equations as in \eqref{eq:lax2}.
Moreover,
consistency of \eqref{20210329:eq*} means that there exist $P_B,P_C\in\mc V^I$ such that
$[B(\partial),L(\partial)]=X_{P_B}(L(\partial))$ and $[C(\partial),L(\partial)]=X_{P_C}(L(\partial))$.
As before, by $\frac{\partial}{\partial t_B}$ and $\frac{\partial}{\partial t_C}$ we denote the evolutionary vector fields $X_{P_B}$
and $X_{P_C}$ respectively.
The compatibility condition \eqref{20170720:eq2} is then equivalent to
$$
[X_{P_B},X_{P_C}](L(\partial))=0\,.
$$
Since $\frac{\partial}{\partial t_B}$ and $[B(\partial),\,\cdot]$
are both derivations of $\mc V((\partial^{-1}))$
and $\frac{\partial L(\partial)}{\partial t_B}=[B(\partial),L(\partial)]$, 
by Lemma \ref{20170720:lem1} we have 
\begin{equation}\label{20170720:eq6}
\frac{\partial L^{\frac kN}(\partial)}{\partial t_B}
=
\big[B(\partial),L^{\frac kN}(\partial)\big]
\,,
\end{equation}
for every $k\geq0$.
By assumption, $B(\partial)=\sum_{k\in\mb Z_{\geq0}}\beta_kB_k(\partial)$, $C(\partial)=\sum_{k\in\mb Z_{\geq0}}\gamma_kB_k(\partial)$, 
with $\beta_k,\gamma_k\in\mb F$.
Let us denote $B^{(-)}(\partial)=\sum_k\beta_kB_k^{(-)}(\partial)$ and similarly for $C^{(-)}(\partial)$.
Hence, by the Leibniz rule,
\begin{equation}\label{20170720:eq3}
\begin{array}{l}
\displaystyle{
\vphantom{\Big(}
[X_{P_B},X_{P_C}](L(\partial))=\frac{\partial}{\partial t_B}
\big[C(\partial),L(\partial)\big]
-
\frac{\partial}{\partial t_C}
\big[B(\partial),L(\partial)\big]
%=
%\frac{\partial}{\partial t_h}
%\big[(L^{\frac kN})_+,L\big]
%-
%\frac{\partial}{\partial t_k}
%\big[(L^{\frac hN})_+,L\big]
} \\
\displaystyle{
\vphantom{\Big(}
=
\sum_{k\in\mb Z_{\geq0}}\left(
\gamma_k\Big[(\frac{\partial L^{\frac kN}(\partial)}{\partial t_B})_+,L(\partial)\Big]
-
\beta_k\Big[(\frac{\partial L^{\frac kN}(\partial)}{\partial t_C})_+,L(\partial)\Big]\right)
}
\\
\displaystyle{
\vphantom{\Big(}
+
\Big[C(\partial),\frac{\partial L(\partial)}{\partial t_B}\Big]
-
\Big[B(\partial),\frac{\partial L}{\partial t_C}\Big]
} \\
\displaystyle{
\vphantom{\Big(}
=
\sum_{k\in\mb Z_{\geq0}}\left(
\gamma_k\big[\big[B(\partial),L^{\frac kN}(\partial)\big]_+,L(\partial)\big]
+\beta_k\big[\big[L^{\frac kN}(\partial),C(\partial)\big]_+,L(\partial)\big]
\right)
} \\
\displaystyle{
\vphantom{\Big(}
+\big[C(\partial),\big[B(\partial),L(\partial)\big]\big]
-\big[B(\partial),\big[C(\partial),L(\partial)\big]\big]}
\\
\displaystyle{
\vphantom{\Big(}
=
\big[\big[B(\partial),C(\partial)+C^{(-)}(\partial)\big]_+,L(\partial)\big]
+\big[\big[B(\partial)+B^{(-)}(\partial),C(\partial)\big]_+,L(\partial)\big]
}
\\
\displaystyle{
\vphantom{\Big(}
-\big[\big[B(\partial),C(\partial)\big],L(\partial)\big]
\,.}
\end{array}
\end{equation}
In the second equality we used the obvious fact that applying $\frac{\partial}{\partial t}$
to an element of $\mc V((\partial^{-1}))$ 
commutes with taking its positive part,
and in the third equality we used the Lax equations \eqref{20210329:eq*} and \eqref{20170720:eq6}.
Note that 
$$
\begin{array}{l}
\displaystyle{
\vphantom{\Big(}
\big[B(\partial)+B^{(-)}(\partial),C(\partial)\big]_+
-\big[B(\partial),C(\partial)]
}
\\
\displaystyle{
\vphantom{\Big(}
=
\big[B(\partial)+B^{(-)}(\partial),C(\partial)\big]_+
-\big[B(\partial),C(\partial)\big]_+
} \\
\displaystyle{
\vphantom{\Big(}
=
\big[B^{(-)}(\partial),C(\partial)\big]_+
=
\big[B^{(-)}(\partial),C(\partial)+C^{(-)}(\partial)\big]_+
\,.}
\end{array}
$$
Hence, the RHS of \eqref{20170720:eq3} becomes
$$
\begin{array}{l}
\displaystyle{
\vphantom{\Big(}
\big[\big[B(\partial)+B^{(-)}(\partial),C(\partial)+C^{(-)}(\partial)\big]_+,L(\partial)\big]
}\\
\displaystyle{
\vphantom{\Big(}
=\sum_{h,k\in\mb Z_{\geq0}}\beta_h\gamma_k
\big[\big[L^{\frac hN}(\partial),L^{\frac kN}(\partial)\big]_+,L(\partial)\big]
=0
\,.}
\end{array}
$$
\end{proof}

The Lax equation \eqref{eq:lax0} is useful since it admits a large family of integrals of motion.
For $h\in\mc V$ we denote $\tint h$ its image in the quotient space $\mc V/\partial\mc V$.
Recall that an \emph{integral of motion} for an evolution equation
is an element $\tint h\in\mc V/\partial\mc V$ such that $\frac{d}{dt}\tint h=0$.
The following fact is well known (see e.g. \cite{Adl79,Dic03}).
\begin{proposition}\label{prop:int-lax}
Assume that the Lax equation \eqref{eq:lax0} is consistent, for a pseudodifferential operator $L(\partial)$, monic of order $N\geq1$,
and a differential operator $B(\partial)$.
Then, $\tint\Res_\partial L^{\frac nN}(\partial)$ is an integral of motion of \eqref{eq:lax0} for every $n\geq0$.
\end{proposition}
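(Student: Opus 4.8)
The plan is to reduce the statement to the two residue lemmas already proved, Lemma \ref{lem:int-res} and Lemma \ref{20170720:lem1}. First I would invoke consistency: by \eqref{eq:A3}, since the Lax equation \eqref{eq:lax0} is consistent, there is $P\in\mc V^I$ with $X_P(L(\partial))=[B(\partial),L(\partial)]$, and by convention $\frac{\partial}{\partial t_B}$ denotes the evolutionary vector field $X_P$ extended to all of $\mc V((\partial^{-1}))$ (acting coefficientwise). Both $X_P$ and $[B(\partial),\,\cdot\,]$ are derivations of $\mc V((\partial^{-1}))$, and they agree on $L(\partial)$, so Lemma \ref{20170720:lem1} gives
\begin{equation*}
\frac{\partial L^{\frac nN}(\partial)}{\partial t_B}
=
X_P\big(L^{\frac nN}(\partial)\big)
=
\big[B(\partial),L^{\frac nN}(\partial)\big]
\end{equation*}
for every $n\geq 0$ (in fact every $n\in\mb Z$).

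Next I would take residues of both sides. Since $X_P$ acts on a pseudodifferential operator by acting on each of its coefficients, it commutes with $\Res_\partial$, so
\begin{equation*}
\frac{\partial}{\partial t_B}\Res_\partial L^{\frac nN}(\partial)
=
\Res_\partial\,\frac{\partial L^{\frac nN}(\partial)}{\partial t_B}
=
\Res_\partial\big[B(\partial),L^{\frac nN}(\partial)\big]
\,.
\end{equation*}
By Lemma \ref{lem:int-res} the right-hand side lies in $\partial\mc V$. Finally, because $X_P=\frac{\partial}{\partial t_B}$ commutes with $\partial$, it descends to a well-defined derivation of the quotient $\mc V/\partial\mc V$; hence, passing to $\mc V/\partial\mc V$,
\begin{equation*}
\frac{d}{dt_B}\tint\Res_\partial L^{\frac nN}(\partial)
=
\tint\Res_\partial\big[B(\partial),L^{\frac nN}(\partial)\big]
=0
\,,
\end{equation*}
which is exactly the assertion that $\tint\Res_\partial L^{\frac nN}(\partial)$ is an integral of motion of \eqref{eq:lax0}.

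There is no genuine obstacle here: the argument is a two-line combination of Lemmas \ref{20170720:lem1} and \ref{lem:int-res}. The only points requiring a word of care are the bookkeeping ones — that the chosen $P$ is only defined modulo $\ker D^{(L)}$ but the computation above is independent of that choice (both sides are determined by the equality of the two derivations on $L(\partial)$), and that $\Res_\partial$ intertwines the coefficientwise action of $X_P$ with the action of $X_P$ on $\mc V$, so that $\frac{\partial}{\partial t_B}$ may be pulled in and out of $\Res_\partial$ freely. Neither needs a separate lemma.
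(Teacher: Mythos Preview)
Your proof is correct and follows essentially the same route as the paper: the paper's one-line proof cites equation \eqref{20170720:eq6} (which is precisely the identity $\frac{\partial L^{\frac nN}(\partial)}{\partial t_B}=[B(\partial),L^{\frac nN}(\partial)]$ obtained from Lemma \ref{20170720:lem1}) together with Lemma \ref{lem:int-res}, which is exactly what you do.
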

\begin{proof}
It immediately follows from \eqref{20170720:eq6} and Lemma \ref{lem:int-res}.
\end{proof}

%%%
\subsection{Lax operators}
\label{sec:3.2}

In view of Lemma \ref{lem:1} and Proposition \ref{20170719:prop} we introduce the following:
\begin{definition}\label{def:lax}
A (scalar) \emph{Lax operator} is a pseudodifferential operator $L(\partial)\in\mc V((\partial^{-1}))$ such that 
the \emph{Lax equation}
\begin{equation}\label{eq:lax}
\frac{\partial L(\partial)}{\partial t_k}=[B_k(\partial),L(\partial)]\,,
\end{equation}
is consistent and non-zero (i.e. the RHS is non-zero) for infinitely may values of $k\geq1$.
We denote by $\mc Z_L\subset\mb Z_{\geq1}$ the (infinite) set of $k$'s such that equation \eqref{eq:lax}
is consistent.
\end{definition}
As an immediate consequence of Lemma \ref{lem:1} and Proposition \ref{20170719:prop} we have the following result.
\begin{corollary}
For a scalar Lax operator $L(\partial)\in\mc V((\partial^{-1}))$ all equations of the hierarchy \eqref{eq:lax},
with $k\in\mc Z_L$, are consistent and are compatible on the differential subalgebra of $\mc V$ 
generated by the coefficients of $L(\partial)$.
\end{corollary}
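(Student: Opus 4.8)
The plan is to deduce both assertions directly from the two preceding results, so the proof will be very short. Consistency is immediate: by Definition \ref{def:lax} the set $\mc Z_L$ is by construction the set of those $k\geq1$ for which the Lax equation \eqref{eq:lax} is consistent, so there is nothing to prove for the first statement.

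For compatibility, I would fix $h,k\in\mc Z_L$ and set $B(\partial)=B_k(\partial)$, $C(\partial)=B_h(\partial)$. Both of these lie in $\Span_{\mb F}\{B_m(\partial)\}_{m\geq1}$, being elements of the spanning set itself; and by Lemma \ref{lem:1}, more precisely by \eqref{eq:vic}, we have $\ord([B_k(\partial),L(\partial)])\leq N-2$ and likewise for $B_h(\partial)$, so the necessary order condition underlying \eqref{eq:A3} is automatic. By the definition of $\mc Z_L$, the consistency condition \eqref{eq:A3} holds for both $B_k$ and $B_h$, i.e. there exist $P_k,P_h\in\mc V^I$ with $[B_k(\partial),L(\partial)]=X_{P_k}(L(\partial))$ and $[B_h(\partial),L(\partial)]=X_{P_h}(L(\partial))$.

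I would then invoke Proposition \ref{20170719:prop} applied to this pair $B(\partial),C(\partial)$: its two hypotheses, the consistency condition \eqref{eq:A3} and membership in $\Span_{\mb F}\{B_m(\partial)\}_{m\geq1}$, are exactly what we have just checked. The conclusion is that the two Lax equations $\frac{\partial L(\partial)}{\partial t_k}=[B_k(\partial),L(\partial)]$ and $\frac{\partial L(\partial)}{\partial t_h}=[B_h(\partial),L(\partial)]$ are compatible on the subalgebra generated by the coefficients of $L(\partial)$, that is, the mixed $t_k,t_h$-derivatives of each coefficient $a_i$ coincide in virtue of \eqref{20170720:eq2}. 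Since $h,k$ were arbitrary in $\mc Z_L$, this gives the compatibility of the whole hierarchy \eqref{eq:lax} on the differential subalgebra of $\mc V$ generated by the $a_i$.

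There is no real obstacle here; the one point worth flagging is the reason the claim is restricted to the differential subalgebra generated by the coefficients of $L(\partial)$ rather than to all of $\mc V$. The elements $P_k$ are determined only modulo $\ker D^{(L)}$, so the associated evolutionary vector fields $X_{P_k}$ are not canonical and need not commute as derivations of the whole of $\mc V$; what Proposition \ref{20170719:prop} provides, through \eqref{20170720:eq2}, is exactly the commutation of the induced flows on the $a_i$, which is precisely the content of the corollary.
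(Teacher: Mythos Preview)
Your proposal is correct and matches the paper's approach: the paper simply states this corollary as ``an immediate consequence of Lemma \ref{lem:1} and Proposition \ref{20170719:prop}'' without giving a separate proof, and your argument spells out exactly that deduction. Your final paragraph explaining why compatibility is asserted only on the subalgebra generated by the coefficients of $L(\partial)$ is a helpful clarification that goes slightly beyond what the paper makes explicit.
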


%%%
\subsection{Examples of Lax operators}
\label{sec:3.3}

Let $L(\partial)$ be a monic pseudodifferential operator as in \eqref{eq:L}.
By \eqref{eq:vic},
the Lax equation \eqref{eq:lax} implies $\frac{\partial a_0}{\partial t_k}=0$.
Hence, without loss of generality, we may assume that $a_0=0$.
In this case $B_1(\partial)=\partial$, hence \eqref{eq:lax} for $k=1$
is the trivial equation $\frac{\partial a_j}{\partial t_1}=\partial a_j$.
\begin{example}\label{ex:1}
Let $\mc V$ be the algebra of differential polynomials in infinitely many differential variables $u_i$, $i\in\mb Z_{\geq1}$.
Let $L(\partial)$ be the Sato operator, i.e. the pseudodifferential operator \eqref{eq:L}, with $a_0=0$ and $a_i=u_i$ for all $i\geq1$.
It follows from \eqref{eq:vic}
that the Lax equation \eqref{eq:lax} is consistent for all $k\geq1$.
So, in this case $\mc Z_L=\mb Z_{\geq1}$ and all Lax equations are non-zero, hence $L(\partial)$ is a Lax operator.
The corresponding hierarchy of (compatible) Lax equations is called the \emph{KP hierarchy}.
Note that this hierarchy is isomorphic to that with $N=1$ \cite{DSKV15}.
As will be shown in Example \ref{exa:KP},
from the Lax equation one derives, in this case, the classical KP equation.
\end{example}
\begin{example}\label{ex:2}
Let $\mc V$ be the algebra of differential polynomials in $u_1,\dots,u_{N-1}$, where $N\geq2$,
and let $L(\partial)=\partial^N+u_1\partial^{N-2}+\dots+u_{N-1}$,
which is called the Gelfand-Dickey operator.
Again thanks to \eqref{eq:vic}, all Lax equations \eqref{eq:lax} are consistent and they are nonzero if $k$ is not divisible by $N$.
Hence $\mc Z_L=\mb Z_{\geq1}$ and $L(\partial)$ is a Lax operator.
The corresponding hierarchy of evolution equation is called the $N$-th Gelfand-Dickey or $N$-th KdV hierarchy.
For $N=2$, $\mc V$ is an algebra of differential polynomials in a single variable $u=u_1$, 
the Lax operator is $L(\partial)=\partial^2+u$,
and the first non-trivial Lax equation is $\frac{\partial L(\partial)}{\partial t_3}=[B_3(\partial),L(\partial)]$,
which is the KdV equation $\frac{\partial u}{\partial t_3}=\frac14 u'''+\frac32 uu'$.
For $N=3$, $\mc V$ is an algebra of differential polynomials in two variables $u=u_1,v=u_2$ and 
$L(\partial)=\partial^3+u\partial+v$.
One can easily get $B_2(\partial)=\partial^2+\frac23u$.
Hence, the Lax equation \eqref{eq:lax} for $k=2$
becomes the Boussinesq equation
$$
\frac{\partial u}{\partial t_2}=2v'-u''
\,\,\,,\,\,\,
\frac{\partial v}{\partial t_2}=v''-\frac{2}{3}u'''-\frac23 uu'
\,.
$$
However, one can show that $L(\partial)=\partial^3+u$ is not a Lax operator.
\end{example}
\begin{example}\label{ex:3}
Let $\mc V$ be the algebra of differential polynomials in the variables $u_1,\dots,u_{n-1},p_1,q_1,\dots,p_s,q_s$,
and consider the pseudodifferential operator
$$
L(\partial)=\partial^n+u_1\partial^{n-2}+\dots+u_{n-1}+\sum_{i=1}^sp_i\partial^{-1}\circ q_i
\,.
$$
This is a Lax operator with $\mc Z_L=\mb Z_{\geq1}$ \cite{Che92,KSS91,KS92,SS93}.
The corresponding Lax hierarchy \eqref{eq:lax} is called the $s$-vector $n$-constrained KP hierarchy.
It is obtained by a certain reduction of the KP hierarchy,
and it follows from \cite{DSKV16a,CDSKVvdL20} that it is compatible.
The simplest special case $L(\partial)=\partial+p\partial^{-1}\circ q$
gives the famous non-linear Schroedinger hierarchy,
while $L(\partial)=\partial^2+u+p\partial^{-1}\circ q$ gives the Yajima-Oikawa hierarchy \cite{YO76}.
\end{example}
\begin{example}\label{ex:4}
A generalization of both Examples \ref{ex:2} and \ref{ex:3}
is considered in \cite{DSKV16a}.
Namely, for each partition $\underline p$ of $N$
we construct an $r\times r$ matrix Lax operator $L(\partial)$,
where $r$ is the multiplicity of the largest part in $\underline p$,
for which $\mc Z_L=\mb Z_{\geq1}$.
The special case of the partition $N=N$ corresponds 
to the $N$-th Gelfand Dickey hierarchy of Example \ref{ex:2},
while the partition $N=n+1+\dots+1$ (with $s=N-n$ ones)
corresponds to the $s$-vector $n$-constrained KP hierarchy in Example \ref{ex:3}.
This is obtained by making use of the classical $\mc W$-algebra
associated to the nilpotent element of $\mf{sl}_N$ with Jordan form
corresponding to the partition $\underline{p}$,
generalizing the construction of \cite{DS85}.
\end{example}
\begin{remark}\label{rem:so_sp}
The $r\times r$ matrix Lax operator $L(\partial)$ in Example \ref{ex:4} is obtained as the quasideterminant (cf. \cite{DSKV16a})
$$
L(\partial)=(JA(\partial)^{-1}I)^{-1}
$$
where $f^{p_{1}-1}=JI$ is the ``canonical decomposition'' of the $p_1-1$-th power of the nilpotent element $f$
of $\mf{sl}_N$ with Jordan form
corresponding to the partition $\underline{p}$ and $A(\partial)$ is a certain first order matrix differential operator.
For $\mf g=\mf {so}_N$ or $\mf{sp}_N$ we have that $A^{*}(\partial)=-A(\partial)$ (cf. \cite{DSKV18}).
Hence, using Lemma 5.5 in \cite{DSKV18} (with $T=f^{p_1-1}$) we get that the Lax operator for the integrable hierarchy corresponding to the classical $\mc W$-algebra
associated to $\mf g$ and the partition $\underline p$ satisfies the further condition
$$
L^*(\partial)=(-1)^{p_1}L(\partial)\,,
$$
where the adjoint is computed with respect to a bilinear form on an $r$-dimensional vector space which has
parity $\epsilon (-1)^{p_1-1}$, with $\epsilon=1$ for $\mf{so}_N$ and $\epsilon=-1$ for $\mf {sp}_N$.
In this case we have $\mc Z_L=2\mb Z_{\geq0}+1$.
\end{remark}
\begin{remark}\label{rem:odd}
Let $L(\partial)$ be a monic pseudodifferential operator of order $N$
which is $(-1)^N$-adjoint, i.e. selfadjoint for even $N$ and skewadjoint for odd $N$.
Then by uniqueness of the $N$-th root of a monic operator,
$L^{\frac1N}(\partial)$ is necessarily a skewadjoint operator.
Therefore, $L^{\frac kN}(\partial)$ is $(-1)^k$-adjoint,
hence $B_k(\partial)$ is $(-1)^k$-adjoint as well.
Therefore $[B_k(\partial),L(\partial)]$ is $(-1)^{N+k+1}$-adjoint.
And therefore the corresponding Lax equation \eqref{eq:lax} can be consistent only for odd $k$.
\end{remark}
\begin{example}\label{ex:5}
Let $\mc V$ be the algebra of differential polynomials in the variables $v_j$ with $j\geq1$ odd,
and let $L(\partial)$ be the ``generic'' monic skewadjoint pseudodifferential operator of order $1$, namely
\begin{equation}\label{eq:312a}
L(\partial)=\partial+\frac12\sum_{j\geq1,\,\text{odd}}(\partial^{-j}\circ v_j+v_j\partial^{-j})
\,.
\end{equation}
Then, by \eqref{eq:vic}, $[B_k(\partial),L(\partial)]$ has negative order, and by Remark \ref{rem:odd} it is skewadjoint for every odd $k$.
Hence, since $L(\partial)$ is generic, the corresponding Lax equation \eqref{eq:lax} is consistent and non-zero for every odd $k$,
so that $\mc Z_L=2\mb Z_{\geq0}+1$ and \eqref{eq:312a} is a Lax operator.
The corresponding hierarchy of Lax equation is called the CKP hierarchy \cite{DJKM81}.
Note that as in the KP case we could have taken instead $L(\partial)$ be a generic monic $(-1)^N$-adjoint pseudodifferential 
operator of order $N$, again obtaining the CKP hierarchy.
\end{example}
\begin{remark}\label{rem:odd-d}
Let $L(\partial)$ be a monic pseudodifferential operator of order $N$
satisfying
\begin{equation}\label{eq:sesquiadj}
L^*(\partial)=(-1)^N\partial\circ L(\partial)\partial^{-1}
\,.
\end{equation}
We call such an operator $(-1)^N$-\emph{sesquiadjoint}.
Observe that any such operator can be written as
\begin{equation}\label{eq:sesquiadj1}
L(\partial)
=
\partial^N
+
\frac12\sum_{j\in\mb Z_{\geq1}}(a_j\partial^{N-2j}+\partial^{N-2j-1}\circ a_j\partial)
\,,
\end{equation}
with $a_j\in\mc V$ for every $j$.
Observe also that, if $N$ is odd, then
\begin{equation}\label{eq:sesquiadj2}
\Res_\partial L(\partial)\partial^{-1}=0
\,.
\end{equation}
Indeed, since $N$ is odd, neither $a_j\partial^{N-2j}$ nor $\partial^{N-2j-1}\circ a_j\partial$ 
can contribute to the residue \eqref{eq:sesquiadj2} for any $j\in\mb Z_{\geq0}$.
It is immediate to check, by the uniqueness of the $N$-th root of a monic operator,
that $L^{\frac1N}(\partial)$ is automatically $(-1)$-sesquiadjoint
and, therefore, $L^{\frac kN}(\partial)$ is $(-1)^k$-sesquiadjoint.
As a consequence, for odd $k$, 
$$
B_k(\partial)^*
=
\big((L^{\frac kN}(\partial))_+\big)^*
=
\big((L^{\frac kN}(\partial))^*\big)_+
=
(-1)^k\big(\partial\circ L^{\frac kN}(\partial)\partial^{-1}\big)_+
=
-\partial\circ B_k(\partial)\partial^{-1}
$$
since, by \eqref{eq:sesquiadj2}, $L^{\frac kN}(\partial)$ does not have constant term.
Hence, $B_k(\partial)$ is $(-1)^k$-sesquiadjoint as well.
It follows that $[B_k(\partial),L(\partial)]$ is $(-1)^{N}$-sesquiadjoint only for odd $k$,
so the corresponding Lax equation \eqref{eq:lax} can be consistent only for odd $k$.
\end{remark}
\begin{example}\label{ex:6}
Let $\mc V$ be the same as in Example \ref{ex:5}
and let $L(\partial)$ be the ``generic'' monic pseudodifferential operator of order $1$ which is $(-1)$-sesquiadjoint,
as defined in \eqref{eq:sesquiadj}.
Namely, by \eqref{eq:sesquiadj1},
\begin{equation}\label{eq:312b}
L(\partial)=\partial+\frac12\sum_{j\geq1,\,\text{odd}}(v_j\partial^{-j}+\partial^{-j-1}\circ v_j\partial)
\,.
\end{equation}
Then by \eqref{eq:vic} $[B_k(\partial),L(\partial)]$ has negative order and by Remark \ref{rem:odd-d} it is $(-1)$-sesquiadjoint for every odd $k$.
Hence, since $L(\partial)$ is generic,  the corresponding Lax equation \eqref{eq:lax} is consistent and non-zero for every odd $k$,
so that $\mc Z_L=2\mb Z_{\geq0}+1$ and \eqref{eq:312b} is a Lax operator.
The corresponding hierarchy of Lax equations is called the BKP hierarchy \cite{DJKM81}.
As in Example \ref{ex:5} we could have taken $L(\partial)$ of order $N$, obtaining again the BKP hierarchy.
\end{example}
\begin{example}\label{ex:7}
Let $L(\partial)$ be the ``generic'' monic $(-1)^N$-adjoint differential operator of order $N$, namely
\begin{equation}\label{eq:315}
L(\partial)=\partial^N+\frac12\sum_{j=1}^{\left[\frac N2\right]}(\partial^{N-2j}\circ v_j+v_j\partial^{N-2j})
\,,
\end{equation}
where $\{v_j\}_{j=1}^{\left[\frac N2\right]}$ are the differential variables generating the algebra of differential polynomials $\mc V$.
Then, by \eqref{eq:vic}, $[B_k(\partial),L(\partial)]$ is a differential operator of order bounded by $N-2$, 
and by Remark \ref{rem:odd} it is $(-1)^N$-adjoint for every odd $k$.
Hence, since $L(\partial)$ is generic, the corresponding Lax equation \eqref{eq:lax} is consistent for every odd $k$
and it is non-zero provided that $k$ is not divisible by $N$,
so that $\mc Z_L=2\mb Z_{\geq0}+1$ and \eqref{eq:315} is a Lax operator.
The corresponding hierarchy of Lax equations is called the Drinfeld-Sokolov hierarchy associated 
to $(A_{N-1}^{(2)},c_0)$ for odd $N$ and to $\mf{sp}_N$ for even $N$, \cite{DS85}.
The simplest case $N=2$ gives the KdV hierarchy,
while $N=3$ gives the Kaup-Kupershmidt operator $L(\partial)=\partial^3+u\partial+\frac12 u'$,
and the Lax equation \eqref{eq:lax} with $k=5$ is the Kaup-Kupershmidt equation \cite{Kaup80}.
\end{example}
\begin{example}\label{ex:8}
Let $L(\partial)$ be the ``generic'' monic $(-1)^N$-sesquiadjoint differential operator of order $N$, namely
\begin{equation}\label{eq:315a}
L(\partial)=\partial^N+\frac12\sum_{j=1}^{\left[\frac{N-1}2\right]} (v_j\partial^{N-2j}+\partial^{N-2j-1}\circ v_j\partial)
\,,
\end{equation}
where $\{v_j\}_{j=1}^{\left[\frac{N-1}2\right]}$ are the differential variables generating the algebra of differential polynomials $\mc V$.
Then, by \eqref{eq:vic}, $[B_k(\partial),L(\partial)]$ is a differential operator of order bounded by $N-2$, 
and by Remark \ref{rem:odd} it is $(-1)^N$-sesquiadjoint for every odd $k$.
Hence, since $L$ is generic, the corresponding Lax equation \eqref{eq:lax} is consistent for every odd $k$
and it is non-zero provided that $k$ is not divisible by $N$,
so that $\mc Z_L=2\mb Z_{\geq0}+1$ and \eqref{eq:315a} is a Lax operator.
The corresponding hierarchy of Lax equations is called the Drinfeld-Sokolov hierarchy associated to $(A_{N-1}^{(2)},c_{\frac{N-1}{2}})$ for odd $N$ 
and associated to $(D_{\frac{N}{2}}^{(2)},c_0)$ for even $N$.
The simplest case $N=3$ gives $L(\partial)=\partial^3+u\partial$
and the Lax equation \eqref{eq:lax} for $k=5$ is the Sawada-Kotera equation \cite{SK74}.
\end{example}
\begin{example}\label{ex:9}
For odd $N$ and any $s\geq1$, let 
$$
L(\partial)=\partial^{N}+\frac12\sum_{j=1}^{\frac{N-1}2}(\partial^{N-2j}\circ v_j+v_j\partial^{N-2j})
+\sum_{i=1}^sp_{i}\partial^{-1}\circ p_{i}
\,,
$$
where $v_1,\dots,v_{\frac {N-1}2},p_1,\dots,p_s$ are the differential variables generating the algebra of differential polynomials $\mc V$.
For $s=1$ it is a Lax operator with $\mc Z_L=2\mb Z_{\geq0}+1$,
since the corresponding hierarchy of Lax equations is the Drinfeld-Sokolov hierarchy associated 
to $(A_{N}^{(2)},c_{\frac{N-1}{2}})$.
The simplest case $N=2$ gives the operator $L(\partial)=\partial+u\partial^{-1}\circ u$
and the corresponding hierarchy is the modified KdV hierarchy.
For arbitrary $s\geq1$ 
the pseudodifferential operator $L(\partial)$ can be characterized as the ``generic'' skewadjoint pseudodifferential operator of the form given in Example \ref{ex:3}. Hence, by Remark \ref{rem:so_sp}, this is the Lax operator for the integrable hierarchy 
(after a suitable Dirac reduction) constructed
in \cite{DSKV18} for the classical affine $\mc W$-algebra associated to $\mf{so}_{N+s}$ and a nilpotent element $f$ 
with Jordan form corresponding to the partition $\underline p=(N,1^s)$.
\end{example}
\begin{example}\label{ex:9b}
For even $N$ and even $s\geq1$, let 
$$
L(\partial)=\partial^{N}+\frac12\sum_{j=1}^{\frac{N}2}(\partial^{N-2j}\circ v_j+v_j\partial^{N-2j})
+\sum_{i=1}^{\frac {s}{2}}\big(p_{i}\partial^{-1}\circ q_{i}-q_i\partial^{-1}\circ p_i\big)
\,,
$$
where $v_1,\dots,v_{\frac {N}2},p_1,q_1,\dots,p_{\frac s2},q_{\frac{s}{2}}$ are the differential variables generating the algebra of differential polynomials $\mc V$.
As in Example \ref{ex:9}, by Remark \ref{rem:so_sp}, it follows that it is a Lax operator (with $\mc Z_L=2\mb Z_{\geq0}+1)$
for the integrable hierarchy (after a suitable Dirac reduction) constructed
in \cite{DSKV18} for the classical affine $\mc W$-algebra associated to $sp_{N+s}$ and a nilpotent element $f$ whose Jordan form
corresponds to the partition $\underline p=(N,1^s)$. In fact, $L(\partial)$ can be characterized as the ``generic'' selfadjoint pseudodifferential operator of the form given in Example \ref{ex:3}.
\end{example}
\begin{example}\label{ex:KN}
Consider the ``generic'' selfadjoint differential operator of order $4$:
\begin{equation}\label{eq:kn1}
L(\partial)=\partial^4+\partial\circ a\partial+b
\,,
\end{equation}
over the algebra $\mc V$ of differential polynomials in $a$ and $b$.
Its $4$-th root has the form
\begin{equation}\label{eq:kn2}
L^{\frac14}(\partial)=\partial+\sum_{j=1}^\infty v_j\partial^{-j}
\,,
\end{equation}
where the coefficients $v_j\in\mc V$ satisfy a certain explicit recurrence relation,
obtained from the identity $(L^{\frac14}(\partial))^4=L(\partial)$.
In particular, 
\begin{equation}\label{eq:kn3}
v_1=\frac{a}4
\,,\,\,
v_2=-\frac{a'}8
\,.
\end{equation}
We want to write explicitly the Lax equation \eqref{eq:lax} associated to $L(\partial)$ for $k=3$.
By \eqref{eq:kn2} and \eqref{eq:kn3} we get
$$
B_3(\partial)
=
(L^{\frac34}(\partial))_+
=
\partial^3+3v_1\partial+3(v_2+v_1')
=
\partial^3+\frac34a\partial+\frac38 a'
\,.
$$
Hence, by direct computation,
$$
[B_3(\partial),L(\partial)]
=
\partial\circ P\partial+Q
\,,
$$
where
$$
P=3b'-\frac54a'''-\frac34aa'
\,\,,\,\,\,\,
Q=b'''-\frac38a^{(5)}-\frac38(aa'''+a'a'')+\frac34ab'
\,.
$$
As a consequence, the Lax equation \eqref{eq:lax} for $k=3$ and the operator \eqref{eq:kn1} 
is the following system of evolution equations on $a$ and $b$:
\begin{equation}\label{eq:kn4}
\frac{\partial a}{\partial t}
=
3b'-\frac54a'''-\frac34aa'
\,\,,\,\,\,\,
\frac{\partial b}{\partial t}
=
b'''-\frac38a^{(5)}-\frac38(aa'''+a'a'')+\frac34ab'
\,.
\end{equation}
Next, we notice that the system \eqref{eq:kn4} remains consistent once we impose the constraint
\begin{equation}\label{eq:kn5}
b=\frac12 a''+\frac14a^2
\,.
\end{equation}
Indeed, as one can easily check,
$$
\frac{\partial b}{\partial t}
=
Q
=
\frac12P''+\frac12aP
=
\frac{\partial}{\partial t}\Big(\frac12 a''+\frac14a^2\Big)
\,,
$$
provided that \eqref{eq:kn5} holds.
As a consequence, the system \eqref{eq:kn4} can be greatly simplified with the change of variables $(a,b)\to(a,u)$,
where
\begin{equation}\label{eq:kn6}
b=\frac12 a''+\frac14a^2-\frac{c}3u
\,,
\end{equation}
and $c$ is an arbitrary non-zero constant.
Indeed, equations \eqref{eq:kn4} can be rewritten in terms of these new variables as
\begin{equation}\label{eq:kn7}
\frac{\partial u}{\partial t}
=
-\frac12u'''-\frac34au'
\,\,,\,\,\,\,
\frac{\partial a}{\partial t}
=
\frac14a'''+\frac34aa'-cu'
\,.
\end{equation}
Again we can ask under which constraint relating $a$ and $u$ the system \eqref{eq:kn7} remains consistent.
A solution to this question is given by the following constraint (see \cite{Sok84}):
\begin{equation}\label{eq:kn8}
a=-\frac{u'''}{u'}+\frac12\frac{(u'')^2}{(u')^2}-\frac13\frac{h(u)}{(u')^2}
\,,
\end{equation}
where $h$ is an arbitrary polynomial of degree $3$ in $u$ and the leading coefficient $c$.
In this case the second equation of \eqref{eq:kn7} remains consistent with the first, which reduces to the 
famous Krichever-Novikov equation \cite{KN80}
\begin{equation}\label{eq:kn9}
4\frac{\partial u}{\partial t}
=
u'''-\frac32\frac{(u'')^2}{u'}+\frac{h(u)}{u'}
\,.
\end{equation}
Here consistency means that applying $D_a(\partial)$, for $a$ as in \eqref{eq:kn8},
to the RHS of the first equation in \eqref{eq:kn7}, we obtain the RHS of the second equation.

Consider the differential operator $L(\partial)$ obtained from \eqref{eq:kn1}
by substituting $a$ and $b$ as \eqref{eq:kn6} and \eqref{eq:kn8},
obtaining an operator $L(\partial)$ of order $4$ discovered by Sokolov \cite{Sok84},
who showed that \eqref{eq:kn9} is equivalent to the Lax equation $\frac{\partial L(\partial)}{\partial t_3}=[B_3(\partial),L(\partial)]$.
One may expect that all equations $\frac{\partial L(\partial)}{\partial t_k}=[B_k(\partial),L(\partial)]$ for odd $k$ are consistent,
but this is still an open problem.
\end{example}

%%%%%%%%%%%%%%%%%%%%%%%%%%%%%%%%%%%%%%%%%
\section{Lax equations and Zakharov-Shabat equations}
\label{sec:4}

Let $L(\partial)\in\mc V((\partial^{-1}))$ be as in \eqref{eq:L}.
In the terminology of Section \ref{sec:2.1}
we can talk about the \emph{solutions} of the hierarchy \eqref{eq:lax},
as a collection of functions on space-time $\varphi_\alpha\in\mc F$, $\alpha\in I$,
such that 
\begin{equation}\label{eq:lax-sol}
\frac{\partial L(\varphi;\partial)}{\partial t_k}
=
[B_k(\varphi;\partial),L(\varphi;\partial)]
\,\quad
k\in\mc Z_L
\,,
\end{equation}
where $L(\varphi;\partial)$ and $B_k(\varphi;\partial)$
are obtained applying the evaluation map $\mc V((\partial^{-1}))\to\mc F((\partial^{-1}))$
defined in \eqref{20170718:eq9c}
to $L(\partial)$ and $B_k(\partial)$ respectively.
In \eqref{eq:lax-sol} we assume that the functions on space-time $\varphi\in\mc F$ depend on time variables $t_k$, $k\in\mc Z_L$. 

\begin{proposition}\label{20170720:prop}
\begin{enumerate}[(a)]
\item
Let $L(\partial)$ be a Lax operator, let $h,k\in\mc Z_L$, and let $\frac{\partial}{\partial t_k},\frac{\partial}{\partial t_h}:\mc V\to\mc V$
denote (as usual) the associated evolutionary vector fields. 
Then the following \emph{Zakharov-Shabat} equations ($=$ \emph{zero-curvature} equations)
hold
\begin{equation}\label{eq:zs}
\frac{\partial B_k(\partial)}{\partial t_h}-\frac{\partial B_h(\partial)}{\partial t_k}
+[B_k(\partial),B_h(\partial)]=0
\,,\,\,
h,k\in\mc Z_L
\,,
\end{equation}
as well as the following \emph{complementary Zakharov-Shabat} equations
\begin{equation}\label{eq:czs}
\frac{\partial B_k^{(-)}(\partial)}{\partial t_h}-\frac{\partial B_h^{(-)}(\partial)}{\partial t_k}
-[B_k^{(-)}(\partial),B_h^{(-)}(\partial)]=0
\,,\,\,
h,k\in\mc Z_L
\,,
\end{equation}
where $B_k^{(-)}(\partial)$ is as in \eqref{20170720:eq4}.
\item
Conversely, suppose that there exist infinitely many non-zero
evolutionary vector fields $\frac{\partial}{\partial t_k}:\mc V\to\mc V$, $k\in\tilde{\mc Z}_L$, 
such that the Zakharov-Shabat equation \eqref{eq:zs} holds for all $h,k\in\tilde{\mc Z}_L$.
Then, the Lax equation \eqref{eq:lax} holds as well, for every $k\in\tilde{\mc Z}_L$. 
In particular, $L$ is a Lax operator and  $\tilde{\mc Z}_L\subset\mc Z_L$.
\item
Suppose $L(\partial)$ is a Lax operator. A collection of functions 
$\varphi_\alpha\in\mc F$, $\alpha\in I$,
is a solution of the hierarchy of Lax equations \eqref{eq:lax} 
if and only if
it is a solution of the Zakharov-Shabat equations \eqref{eq:zs}
for every $h,k\in\mc Z_L$.
Furthermore, if $\varphi_\alpha\in\mc F$, $\alpha\in I$,
is a solution of either \eqref{eq:lax} or \eqref{eq:zs},
then it is a solution of the complementary Zakharov-Shabat equations \eqref{eq:czs}.
\end{enumerate}
\end{proposition}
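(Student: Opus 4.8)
The plan is to establish (a) by a direct manipulation of the fractional powers $L^{\frac kN}(\partial)$, then to obtain (b) by running the same manipulation in reverse and invoking the infinitude of $\tilde{\mc Z}_L$, and finally to deduce (c) by carrying out the computations of (a) and (b) inside $\mc F((\partial^{-1}))$, with the time derivations of $\mc F$ in the role of the evolutionary vector fields.

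For (a): since $\frac{\partial}{\partial t_h}$ and the inner derivation $[B_h(\partial),\,\cdot\,]$ of $\mc V((\partial^{-1}))$ agree on $L(\partial)$ by the Lax equation \eqref{eq:lax}, Lemma \ref{20170720:lem1} gives $\frac{\partial L^{\frac kN}(\partial)}{\partial t_h}=[B_h(\partial),L^{\frac kN}(\partial)]$ for all $k\geq1$. Writing $L^{\frac kN}=B_k+B_k^{(-)}$ and taking the differential part (note $[B_h,B_k]$ is already a differential operator) yields $\frac{\partial B_k}{\partial t_h}=[B_h,B_k]+([B_h,B_k^{(-)}])_+$. Subtracting the same identity with $h$ and $k$ interchanged and then substituting $[B_h,B_k]=-([B_h,B_k^{(-)}])_+-([B_h^{(-)},B_k])_+$ — which is the differential part of $0=[B_h+B_h^{(-)},B_k+B_k^{(-)}]=[L^{\frac hN},L^{\frac kN}]$, the term $([B_h^{(-)},B_k^{(-)}])_+$ being zero by the order bound \eqref{eq:ord} — collapses the difference to $[B_h,B_k]$, i.e. to \eqref{eq:zs}. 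Taking the \emph{singular} part throughout instead (now $([B_h,B_k])_-=0$ and $([B_h^{(-)},B_k^{(-)}])_-=[B_h^{(-)},B_k^{(-)}]$) produces \eqref{eq:czs}.

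For (b): put $\rho_k:=\frac{\partial L^{\frac1N}(\partial)}{\partial t_k}-[B_k(\partial),L^{\frac1N}(\partial)]$; using $[L^{\frac kN},L^{\frac1N}]=0$ one has $\rho_k=\frac{\partial L^{\frac1N}}{\partial t_k}+[B_k^{(-)},L^{\frac1N}]$, of order $\leq0$. By the Leibniz rule $\frac{\partial L(\partial)}{\partial t_k}-[B_k(\partial),L(\partial)]=\sum_{j=0}^{N-1}L^{\frac jN}\rho_k\,L^{\frac{N-1-j}N}$, whose leading coefficient is $N$ times that of $\rho_k$, so the Lax equation for $t_k$ is equivalent to $\rho_k=0$; thus it suffices to prove $\rho_k=0$ for all $k\in\tilde{\mc Z}_L$. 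Next, writing $\frac{\partial B_k}{\partial t_h}$ as the differential part of $\sum_{j=0}^{k-1}L^{\frac jN}\big(\frac{\partial L^{\frac1N}}{\partial t_h}\big)L^{\frac{k-1-j}N}=\sum_j L^{\frac jN}\rho_h L^{\frac{k-1-j}N}-[B_h^{(-)},L^{\frac kN}]$ and using, as in (a), $([B_h^{(-)},L^{\frac kN}])_+=-[B_h,B_k]-([B_h,B_k^{(-)}])_+$ together with $[B_h,B_k]=-([B_h,B_k^{(-)}])_++([B_k,B_h^{(-)}])_+$, one checks that the Zakharov--Shabat equation \eqref{eq:zs} for a pair $h,k$ reduces to the single relation $\big(\sum_{j=0}^{k-1}L^{\frac jN}\rho_h L^{\frac{k-1-j}N}\big)_+=\big(\sum_{j=0}^{h-1}L^{\frac jN}\rho_k L^{\frac{h-1-j}N}\big)_+$. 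The crux is now an order count: if $\rho_h\neq0$, say of order $-m$ with leading coefficient $r\neq0$, then for $k\geq m+1$ the left-hand side has order exactly $k-1-m$ with leading coefficient $kr$ (nonzero since $\mb F$ has characteristic zero), while the right-hand side always has order $\leq h-1$; choosing $k\in\tilde{\mc Z}_L$ with $k>h+m$, possible because $\tilde{\mc Z}_L$ is infinite, is a contradiction. Hence $\rho_k=0$ for all $k\in\tilde{\mc Z}_L$, so \eqref{eq:lax} holds there; as the vector fields are non-zero this shows $\tilde{\mc Z}_L\subseteq\mc Z_L$ and that $L(\partial)$ is a Lax operator. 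The main obstacle is precisely this converse: direction (a) is a routine rearrangement of commutators, whereas (b) requires both the reduction of \eqref{eq:zs} to the identity on $\big(\sum_j L^{\frac jN}\rho_h L^{\frac{k-1-j}N}\big)_+$ and the order argument that turns the infinitude of $\tilde{\mc Z}_L$ into the vanishing of each $\rho_k$.

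For (c): evaluation at $u=\varphi$ turns $L(\partial)$ into a monic operator $L(\varphi;\partial)\in\mc F((\partial^{-1}))$ of order $N$ with $L^{\frac kN}(\varphi;\partial)=L(\varphi;\partial)^{\frac kN}$ and $B_k(\varphi;\partial)=(L(\varphi;\partial)^{\frac kN})_+$, by uniqueness of the $N$-th root; moreover the analogues over $\mc F((\partial^{-1}))$ of Lemma \ref{20170720:lem1}, of $[L^{\frac hN},L^{\frac kN}]=0$, and of the order estimate \eqref{eq:ord} all hold. So if $\varphi$ solves \eqref{eq:lax} the computation of (a) — carried out in $\mc F((\partial^{-1}))$ with the $\mc F$-derivations $\frac{\partial}{\partial t_k}$ in place of the evolutionary vector fields — gives \eqref{eq:zs} and \eqref{eq:czs} for $\varphi$; conversely, if $\varphi$ solves \eqref{eq:zs} for all $h,k\in\mc Z_L$ then, since $\mc Z_L$ is infinite, the argument of (b) forces $\rho_k(\varphi;\partial)=0$, i.e. $\varphi$ solves \eqref{eq:lax}, whence \eqref{eq:czs} again by (a).
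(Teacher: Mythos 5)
Your proposal is correct and follows essentially the same route as the paper's proof: part (a) by splitting the identity $\frac{\partial L^{k/N}}{\partial t_h}=[B_h,L^{k/N}]$ (from Lemma \ref{20170720:lem1}) into its differential and singular parts via $L^{k/N}=B_k+B_k^{(-)}$ and $[L^{h/N},L^{k/N}]=0$; part (b) by the Leibniz expansion $\sum_j L^{j/N}\rho_h L^{(k-1-j)/N}$ of the defect of the Lax equation together with the order count that exploits the infinitude of $\tilde{\mc Z}_L$; and part (c) by repeating both arguments in $\mc F((\partial^{-1}))$. The only difference is organizational (you isolate the positive/negative parts term by term where the paper sums the two Zakharov--Shabat expressions and invokes the direct-sum decomposition), and all your intermediate identities check out.
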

\begin{proof}
(See \cite{Shi86})
Let $N\geq1$ be the order of $L(\partial)$. If $h,k\in\mc Z_L$, then by \eqref{20170720:eq6} we have
\begin{equation}\label{20210329:eqC}
\frac{\partial B_k(\partial)}{\partial t_h}+\frac{\partial B_k^{(-)}(\partial)}{\partial t_h}=
\frac{\partial L^{\frac kN}(\partial)}{\partial t_h}
=\big[B_h(\partial),L^{\frac kN}(\partial)\big]=\big[B_h(\partial),B_k(\partial)+B_k^{(-)}(\partial)\big]
\,.
\end{equation}
As a consequence, we get
\begin{equation}\label{20170720:eq7}
\begin{array}{l}
\displaystyle{
\vphantom{\Big(}
\Big(
\frac{\partial B_k(\partial)}{\partial t_h}-\frac{\partial B_h(\partial)}{\partial t_k}
+\big[B_k(\partial),B_h(\partial)\big]
\Big)
}\\
\displaystyle{
\vphantom{\Big(}
+
\Big(
\frac{\partial B_k^{(-)}(\partial)}{\partial t_h}-\frac{\partial B_h^{(-)}(\partial)}{\partial t_k}
-\big[B_k^{(-)}(\partial),B_h^{(-)}(\partial)\big]
\Big)
} \\
\displaystyle{
\vphantom{\Big(}
=\big[B_h(\partial),B_k(\partial)+B_k^{(-)}(\partial)\big]
-\big[B_k(\partial),B_h(\partial)+B_h(\partial)^{(-)}\big]
} \\
\displaystyle{
\vphantom{\Big(}
+\big[B_k(\partial),B_h(\partial)\big]
-\big[B_k^{(-)}(\partial),B_h^{(-)}(\partial)\big]
} \\
\displaystyle{
\vphantom{\Big(}
=
-\big[B_k(\partial)+B_k^{(-)}(\partial),B_h(\partial)+B_h^{(-)}(\partial)\big]
=
-[L^{\frac hN}(\partial),L^{\frac kN}(\partial)]
=0
\,.}
\end{array}
\end{equation}
Note that the two summands in the LHS of \eqref{20170720:eq7}
are respectively in $\mc V[\partial]$ and $\mc V[[\partial^{-1}]]\partial^{-1}$.
Hence, they must both vanish, proving part (a).

Next, let us prove part (b). For $k,h\in\tilde{\mc Z}_L$, we have, by \eqref{eq:zs},
\begin{equation}\label{20170720:eq8}
\begin{array}{l}
\displaystyle{
\vphantom{\Big(}
\frac{\partial L^\frac{k}{N}(\partial)}{\partial t_h}-\big[B_h(\partial),L^{\frac kN}(\partial)\big]
=
\frac{\partial B_k(\partial)}{\partial t_h}+\frac{\partial B_k^{(-)}(\partial)}{\partial t_h}
-\big[B_h(\partial),B_k(\partial)+B_k^{(-)}(\partial)\big]
}
\\
\displaystyle{
\vphantom{\Big(}
=
\frac{\partial B_k(\partial)}{\partial t_h}-\big[B_h(\partial),B_k(\partial)\big]+\frac{\partial B_k^{(-)}(\partial)}{\partial t_h}
-\big[B_h(\partial),B_k^{(-)}(\partial)\big]
}
\\
\displaystyle{
\vphantom{\Big(}
=
\frac{\partial B_h(\partial)}{\partial t_k}+\frac{\partial B_k^{(-)}(\partial)}{\partial t_h}
-\big[B_h(\partial),B_k^{(-)}(\partial)\big]
\,.}
\end{array}
\end{equation}
Since $B_k^{(-)}(\partial)$ has negative order,
the RHS of \eqref{20170720:eq8}
is a pseudodifferential operator of order less than or equal to $h-1$
(independently of $k$).
Assume, by contradiction, that
$$
R(\partial)
=
\frac{\partial L^{\frac1N}(\partial)}{\partial t_h}
-
\big[B_h(\partial),L^{\frac1N}(\partial)\big]
$$
is a pseudodifferential operator of order $d\in\mb Z$
and leading coefficient $v\in\mc V\backslash\{0\}$.
By the Leibniz rule, we have
\begin{equation}\label{20170720:eq9}
\frac{\partial L^\frac{k}{N}(\partial)}{\partial t_h}-\big[B_h(\partial),L^{\frac kN}(\partial)\big]
=
\sum_{i=0}^{k-1}
L^\frac{i}{N}(\partial)R(\partial)L^{\frac{k-1-i}{N}}(\partial)
\,.
\end{equation}
The RHS of \eqref{20170720:eq9} has order equal to $d+k-1$,
the leading coefficient being equal to $kv\neq0$.
Since the LHS of \eqref{20170720:eq8} and \eqref{20170720:eq9} coincide,
we deduce that $d+(k-1)N\leq h-1$ for every $k\in\tilde{\mc Z}_L$, which is impossible since $\tilde{\mc Z}_L$ is infinite.
Hence, $R$ must be $0$. It follows, setting $k=N$ in \eqref{20170720:eq9}, 
that the Lax equation \eqref{eq:lax} holds for every $k\in\tilde{\mc Z}_L$,
so that $\tilde{\mc Z}_L\subset\mc Z_L$, as claimed.

The proof of part (c) is the same as for parts (a) and (b),
by replacing everywhere $L(\partial)$, $B_k(\partial)$ and $B_h(\partial)$
with their evaluations $L(\varphi;\partial)$, $B_k(\varphi;\partial)$
and $B_h(\varphi;\partial)$ in $\mc F((\partial^{-1}))$.
\end{proof}

\begin{example}\label{exa:KP}
Let $L(\partial)=\partial+\sum_{i=1}^{\infty}u_i\partial^{-i}$ be the Sato Lax operator.
By a straightforward computation we have
$B_1(\partial)=\partial$, $B_2(\partial)=\partial^2+2u_1$ and $B_3(\partial)=\partial^3+3u_1\partial+3u_2+3u_1'$.
The Zakharov-Shabat equation \eqref{eq:zs} for $k=2$ and $h=3$ gives
\begin{equation}\label{20210121:eq1}
\left(6u_2'+3u_1''-3\frac{\partial u_1}{\partial t_2}\right)\partial
+2\frac{\partial u_1}{\partial t_3}-3\frac{\partial u_2}{\partial t_2}-3\frac{\partial u_1'}{\partial t_2}-6u_1u_1'+3u_2''+u_1'''
\,.
\end{equation}
This is equivalent to
$$
\frac{\partial u_1}{\partial t_2}=2u_2'+u_1''\,,\qquad
2\frac{\partial u_1}{\partial t_3}-3\frac{\partial u_2}{\partial t_2}=6u_1u_1'+3u_2''+2u_1'''=0\,.
$$
Applying $\frac{\partial}{\partial t_2}$ to the first equation and $\partial$ to the second we get
$$
\frac{\partial^2 u_1}{\partial t_2^2}=2\frac{\partial u_2'}{\partial t_2}+\frac{\partial u_1''}{\partial t_2}
=2\frac{\partial u_2'}{\partial t_2} +2u_2'''+u_1^{(4)}
$$
and
$$
2\frac{\partial u_1'}{\partial t_3}=3\frac{\partial u_2'}{\partial t_2}+6(u_1\partial u_1)'+3u_2'''+2 u_1^{(4)}
\,,
$$
from which follows that
$$
3\frac{\partial^2 u_1}{\partial t_2^2}
-4\partial \frac{\partial u_1}{\partial t_3}
=-\partial^2\left(6u_1^2+\partial^2u_1\right)
\,.
$$
Renaming $t_2=y$, $t_3=t$, and $u=2u_1$, we get
$$
3 \frac{\partial^2 u}{\partial y^2}
=
\left(4 \frac{\partial u}{\partial t} - u''' - 6u u'\right)'
\,.
$$
which is the Kadomtsev-Petviashvili (KP) equation.
\end{example}

\label{sec:5}
\section{Lax and Sato equations, linear problem and bilinear equation}

Let $L(\partial)\in\mc V((\partial^{-1}))$ be as in \eqref{eq:L}. Let, as in Section \ref{sec:2.2}, $\mc F$ be an algebra of functions
in space-time, with time derivatives $\frac{\partial}{\partial t_k}$ indexed by $k\in\mc Z_L$.

%%%
\subsection{Lax equations and Sato equations}
\label{sec:3.4}

\begin{theorem}\label{thm:sato}
Let $\varphi=(\varphi_\alpha)_{\alpha\in I}$ be a collection of functions 
$\varphi_\alpha\in\mc F$.
\begin{enumerate}[(a)]
\item
If there exists a monic pseudodifferential operator of order $0$ over the algebra of functions on space-time $\mc F$,
\begin{equation}\label{20210407:S}
S(\partial)=1+\sum_{i=1}^{\infty}s_i\partial^{-i}\in\mc F((\partial^{-1}))\,,
\end{equation}
satisfying the following 
\emph{dressing equation}
\begin{equation}\label{eq:dress}
L(\varphi;\partial)=S(\partial)\partial^N S^{-1}(\partial)
\,,
\end{equation}
and \emph{Sato equations}
\begin{equation}\label{eq:sato}
\frac{\partial S(\partial)}{\partial t_k}
=
-B_k^{(-)}(\varphi;\partial)S(\partial)
\,,
\quad\text{for every } k\in\mc Z_L\,,
\end{equation}
then $\varphi$ is a solution of the hierarchy of Lax equations \eqref{eq:lax}.
\item
Conversely, assume that the algebra of functions $\mc F$ is integrable (cf. Definition \ref{def:integrable}) 
and assume that $\varphi$ is a solution of the hierarchy of Lax equations \eqref{eq:lax} 
such that $a_0(\varphi)=0$
($a_0(\varphi)\in\mc F$ being the evaluation at $u=\varphi$).
Then, 
for every monic constant coefficients pseudodifferential operator $S_0(\partial)\in\mb F((\partial^{-1}))$ of order $0$,
there exists a unique order 0 monic pseudodifferential operator $S(\partial)\in\mc F((\partial^{-1}))$ as in \eqref{20210407:S} 
satisfying the dressing equation \eqref{eq:dress} 
and the Sato equations \eqref{eq:sato},
and such that $S(\partial)|_{x=t=0}=S_0(\partial)$.
\end{enumerate}
\end{theorem}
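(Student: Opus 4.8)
The plan is to prove the two implications separately; direction (a) is soft and direction (b) carries the weight. For (a): given $S(\partial)$ satisfying \eqref{eq:dress} and \eqref{eq:sato}, the uniqueness of the $N$-th root in $\mc F((\partial^{-1}))$ (Lemma \ref{20170720:lem}(b)) forces $L^{\frac kN}(\varphi;\partial)=S(\partial)\partial^k S(\partial)^{-1}$, so $B_k(\varphi;\partial)-L^{\frac kN}(\varphi;\partial)=-B_k^{(-)}(\varphi;\partial)$. Differentiating \eqref{eq:dress} in $t_k$ and using $\frac{\partial S^{-1}}{\partial t_k}=-S^{-1}(\frac{\partial S}{\partial t_k})S^{-1}$ gives $\frac{\partial L(\varphi;\partial)}{\partial t_k}=[(\frac{\partial S}{\partial t_k})S^{-1},L(\varphi;\partial)]$; substituting the Sato equation as $(\frac{\partial S}{\partial t_k})S^{-1}=-B_k^{(-)}(\varphi;\partial)$ and using $[L^{\frac kN}(\varphi;\partial),L(\varphi;\partial)]=0$ yields \eqref{eq:lax-sol}. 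I expect this to be a few lines with no obstacle.

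For (b), write $S(\partial)=1+\sum_{i\geq1}s_i\partial^{-i}$. Comparing coefficients of powers of $\partial$, I would first check that \eqref{eq:dress} (that is, $L(\varphi;\partial)S(\partial)=S(\partial)\partial^N$) is equivalent to $a_0(\varphi)=0$ (where the hypothesis enters) together with a triangular recursion $\partial s_k=g_k$, $k\geq1$, in which $s_k$ itself cancels out and $g_k\in\mc F$ is a universal differential polynomial in $s_1,\dots,s_{k-1}$ and the $a_i(\varphi)$; and that each equation \eqref{eq:sato} is equivalent to $\frac{\partial s_k}{\partial t_j}=f_{k,j}$, $k\geq1$, with $f_{k,j}\in\mc F$ a universal differential polynomial in $s_1,\dots,s_{k-1}$ and the coefficients of $B_j^{(-)}(\varphi;\partial)$. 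So everything reduces to solving recursively in $k$, inside $\mc F$, the system
\[
\partial s_k=g_k(s_1,\dots,s_{k-1};\varphi)\,,\qquad \frac{\partial s_k}{\partial t_j}=f_{k,j}(s_1,\dots,s_{k-1};\varphi)\,,\quad j\in\mc Z_L\,,
\]
with the prescribed value $s_k|_{x=t=0}=(S_0)_k$. Since $\mc F$ is integrable, at each step this amounts to verifying the compatibility conditions \eqref{20170721:eq2b} for $(g_k,(f_{k,j})_j)$, and that verification is the main obstacle.

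To carry it out at step $k$ (with $s_1,\dots,s_{k-1}\in\mc F$ already found and satisfying their equations) I would set $S^{(k-1)}(\partial)=1+\sum_{i=1}^{k-1}s_i\partial^{-i}$ and introduce the residuals
\[
P:=L(\varphi;\partial)S^{(k-1)}-S^{(k-1)}\partial^N\,,\qquad Q_j:=\frac{\partial S^{(k-1)}}{\partial t_j}+B_j^{(-)}(\varphi;\partial)S^{(k-1)}\,.
\]
By the equations already solved, $P$ has order $\leq N-1-k$ with coefficient of $\partial^{N-1-k}$ equal to $-Ng_k$, and $Q_j$ has order $\leq -k$ with coefficient of $\partial^{-k}$ equal to $-f_{k,j}$. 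Rewriting the evaluated Lax equation as $\frac{\partial L(\varphi;\partial)}{\partial t_j}=-[B_j^{(-)}(\varphi;\partial),L(\varphi;\partial)]$, a direct computation should give
\[
\frac{\partial P}{\partial t_j}+B_j^{(-)}(\varphi;\partial)P=L(\varphi;\partial)Q_j-Q_j\partial^N\,,
\]
and comparing the coefficients of $\partial^{N-1-k}$ (all other terms having strictly lower order) should yield precisely $\frac{\partial g_k}{\partial t_j}=\partial f_{k,j}$. Similarly, using the complementary Zakharov--Shabat equations \eqref{eq:czs} for $\varphi$ (available from Proposition \ref{20170720:prop}(c)), a direct computation should give
\[
\frac{\partial Q_j}{\partial t_h}-\frac{\partial Q_h}{\partial t_j}=B_j^{(-)}(\varphi;\partial)Q_h-B_h^{(-)}(\varphi;\partial)Q_j\,,
\]
whose coefficient of $\partial^{-k}$ reads $\frac{\partial f_{k,j}}{\partial t_h}=\frac{\partial f_{k,h}}{\partial t_j}$. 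Finding the right residual operators and establishing these two identities is the only real work; the rest is the symbol calculus of Section \ref{sec:2.5}.

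With compatibility in hand, Definition \ref{def:integrable} gives a unique $s_k\in\mc F$ solving the step-$k$ system with the prescribed initial value, and iterating over $k$ produces $S(\partial)=1+\sum_{i\geq1}s_i\partial^{-i}\in\mc F((\partial^{-1}))$, which by construction satisfies \eqref{eq:dress}, all the Sato equations \eqref{eq:sato}, and $S(\partial)|_{x=t=0}=S_0(\partial)$. For uniqueness, the coefficients of any admissible $S$ obey the same triangular recursion with the same initial data, hence agree with those constructed, by the uniqueness clause of Definition \ref{def:integrable}; and applying part (a) to this $S$ returns the hypothesis that $\varphi$ solves the Lax hierarchy, a useful consistency check.
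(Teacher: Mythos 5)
Your proposal is correct and follows essentially the same route as the paper: part (a) by differentiating the dressing equation and substituting the Sato equations, and part (b) by converting \eqref{eq:dress} and \eqref{eq:sato} into a triangular recursion for the $s_k$ whose compatibility conditions \eqref{20170721:eq2b} are exactly the evaluated Lax equations and the complementary Zakharov--Shabat equations \eqref{eq:czs}, then invoking integrability of $\mc F$. Your residual operators $P$ and $Q_j$ for the truncation $S^{(k-1)}$ just make explicit the step-$k$ bookkeeping that the paper leaves implicit, and both operator identities you state do check out.
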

\begin{proof}
(See \cite{Shi86})
Applying $\frac{\partial}{\partial t_k}:\,\mc F\to\mc F$
to both sides of the dressing equation \eqref{eq:dress},
and using the Sato equations \eqref{eq:sato}, we get
$$
\begin{array}{l}
\displaystyle{
\vphantom{\Big(}
\frac{\partial L(\varphi;\partial)}{\partial t_k}
=
\frac{\partial S(\partial)}{\partial t_k}\partial^N S^{-1}(\partial)
-
S(\partial)\partial^N S^{-1}(\partial)\frac{\partial S(\partial)}{\partial t_k}S^{-1}(\partial)
} \\
\displaystyle{
\vphantom{\Big(}
=
-B_k^{(-)}(\varphi;\partial)S(\partial)\partial^N S^{-1}(\partial)
+
S(\partial)\partial^N S^{-1}(\partial)B_k^{(-)}(\varphi;\partial)
} \\
\displaystyle{
\vphantom{\Big(}
=
-[B_k^{(-)}(\varphi;\partial),L(\varphi;\partial)]
=[B_k(\varphi;\partial),L(\varphi;\partial)]
\,.}
\end{array}
$$
This proves (a).

In order to prove part (b), assume that $\varphi_\alpha\in\mc F$, $\alpha\in I$, is a solution 
of the Lax equations \eqref{eq:lax}.
The dressing equation \eqref{eq:dress}
can be rewritten as
\begin{equation}\label{20170725:eq1}
\begin{array}{l}
\displaystyle{
\vphantom{\Big(}
(1+s_1\partial^{-1}+s_2\partial^{-2}+\dots)\partial^N
} \\
\displaystyle{
\vphantom{\Big(}
=
(\partial^N+a_0(\varphi)\partial^{N-1}+a_1(\varphi)\partial^{N-2}+\dots)
(1+s_1\partial^{-1}+s_2\partial^{-2}+\dots)
\,.}
\end{array}
\end{equation}
By looking at the coefficient of $\partial^{N-1}$ in both sides of equation \eqref{20170725:eq1},
we get $a_0(\varphi)=0$,
while, by looking at the coefficient of $\partial^{N-1-i}$,
for $i\geq1$, in both sides of \eqref{20170725:eq1},
we get
\begin{equation}\label{20170725:eq2}
s_i'
=
-\frac1N a_i(\varphi)
-\frac1N\sum_{n=0}^{i-2}\binom{N}{n+2}s_{i-n-1}^{(n)}
-\frac1N\sum_{j=1}^{i-1}\sum_{n=0}^{i-j-1}\binom{N-1-j}{n}a_j(\varphi)s_{i-j-n}^{(n)}
\,.
\end{equation}
Next, the Sato equations \eqref{eq:sato}
can be rewritten as
\begin{equation}\label{20170725:eq3}
\begin{array}{l}
\displaystyle{
\vphantom{\Big(}
\frac{\partial s_1}{\partial t_k}\partial^{-1}+\frac{\partial s_2}{\partial t_k}\partial^{-2}+\dots
} \\
\displaystyle{
\vphantom{\Big(}
=
(c_{1;k}(\varphi)\partial^{-1}+c_{2;k}(\varphi)\partial^{-2}+\dots)
(1+s_1\partial^{-1}+s_2\partial^{-2}+\dots)
\,,}
\end{array}
\end{equation}
where we are letting $-B_k^{(-)}(\partial)=\sum_{j=1}^\infty c_{j;k}\partial^{-j}\in\mc V[[\partial^{-1}]]\partial^{-1}$,
and we denote, as usual, by $c_{j;k}(\varphi)\in\mc F$ the evaluation of $c_{j;k}\in\mc V$ at $u=\varphi$.
By looking at the coefficient of $\partial^{-i}$, for $i\geq1$, 
in both sides of equation \eqref{20170725:eq3},
we get
\begin{equation}\label{20170725:eq4}
\frac{\partial s_i}{\partial t_k}
=
c_{i;k}(\varphi)
+
\sum_{j=1}^{i-1}\sum_{n=0}^{i-j-1}(-1)^n\binom{j+n-1}{n}
c_{j;k}(\varphi)s_{i-j-n}^{(n)}
\,\,,\,\,\,\,
k\in\mc Z_L
\,.
\end{equation}
For every $i\geq1$, equations \eqref{20170725:eq2} and \eqref{20170725:eq4}
form a system of equations in the only unknown function $s_i\in\mc F$,
if we assume that we have solved, recursively, the previous equations on $s_j$ with $j<i$.
Indeed, the RHS of both \eqref{20170725:eq2} and \eqref{20170725:eq4}
only involves the functions $a_j(\varphi)\in\mc F$, which are given,
and functions $s_j^{(n)}$ with $j<i$, which are assumed to be known by the recursive construction.
By the integrability assumption on $\mc F$ (cf. Definition \ref{def:integrable}),
this system has a unique solution $s_i\in\mc F$ satisfying the initial condition $s_i(0)=s_{i;0}$,
the coefficient of $\partial^{-i}$ in $S_0(\partial)$,
provided that the compatibility conditions \eqref{20170721:eq2b} hold.
Recalling that \eqref{20170725:eq2} is equivalent to the dressing equation 
$S(\partial)\partial^NS^{-1}(\partial)=L(\varphi;\partial)$,
the first of the compatibility conditions \eqref{20170721:eq2b}
for the system \eqref{20170725:eq2}--\eqref{20170725:eq4}
amounts to
$$
\frac{\partial L(\varphi;\partial)}{\partial t_k}
=
-[B_k^{(-)}(\varphi;\partial),L(\varphi;\partial)]
\,,
$$
which holds since, by assumption, $\varphi$ is a solution of the Lax equations \eqref{eq:lax}.
On the other hand, \eqref{20170725:eq4} is equivalent to the Sato equation 
$\frac{\partial S(\partial)}{\partial t_k}=-B_k^{(-)}(\varphi;\partial)S(\partial)$,
hence, the second of the compatibility conditions \eqref{20170721:eq2b}
for the system \eqref{20170725:eq2}--\eqref{20170725:eq4}
amounts to
$$
\begin{array}{l}
\displaystyle{
\vphantom{\Big(}
-\frac{\partial B_k^{(-)}(\varphi;\partial)}{\partial t_h}S(\partial)
+B_k^{(-)}(\varphi;\partial)B_h^{(-)}(\varphi;\partial)S(\partial)
}
\\
\displaystyle{
\vphantom{\Big(}
=
-\frac{\partial B_h^{(-)}(\varphi;\partial)}{\partial t_k}S(\partial)
+B_h^{(-)}(\varphi;\partial)B_k^{(-)}(\varphi;\partial)S(\partial)
\,,
}
\end{array}
$$
which is the complementary Zakharov-Shabat equation \eqref{eq:czs},
and therefore holds by Proposition \ref{20170720:prop}(c).
\end{proof}
\begin{remark}\label{20170725:rem}
Note that, by the uniqueness of the $N$-th root of a monic pseudodifferential operator
(cf. Lemma \ref{20170720:lem}(b)),
the dressing equation 
$L(\varphi;\partial)=S(\partial)\partial^N S^{-1}(\partial)$
on the pseudodifferential operator $S(\partial)\in\mc F((\partial^{-1}))$, monic of order $0$,
is equivalent to the equation
\begin{equation}\label{eq:satoN}
L^{\frac1N}(\varphi;\partial)=S(\partial)\partial S^{-1}(\partial)
\,.
\end{equation}
\end{remark}

%%%
\subsection{Sato equation and the linear problem for the wave function}

Consider the space of oscillating functions $\mc F((z^{-1})) e^{z\cdot\bm t}$ 
defined in Section \ref{sec:2.6}.

\begin{theorem}\label{thm:linear}
Let $\varphi=(\varphi_\alpha)_{\alpha\in I}$, $\varphi_\alpha\in\mc F$.
Let $S(\partial)\in\mc F((\partial^{-1}))$ be a monic pseudodifferential operator of order $0$,
and let $w(z)=S(z)e^{z\cdot\bm t}\in\mc F((z^{-1}))e^{z\cdot\bm t}$ be the corresponding oscillating function.
\begin{enumerate}[(a)]
\item
The dressing equation \eqref{eq:dress} on $S(\partial)$
is equivalent to the following \emph{eigenvalue problem} on the \emph{wave function} $w(z)$:
\begin{equation}\label{eq:eigen}
L(\varphi;\partial)w(z)=z^Nw(z)
\,.
\end{equation}
\item
Assuming that the dressing equation \eqref{eq:dress} holds,
the Sato equation \eqref{eq:sato} for $S(\partial)$ is equivalent to the following
\emph{linear problem} for the wave function $w(z)$:
\begin{equation}\label{eq:linear}
\frac{\partial w(z)}{\partial t_k}
=
B_k(\varphi;\partial)w(z)
\,,
\quad
k\in\mc Z_L
\,.
\end{equation}
\end{enumerate}
\end{theorem}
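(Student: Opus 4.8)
The plan is to reduce both equivalences to the calculus of the action of pseudodifferential operators on oscillating functions from Section~\ref{sec:2.6}, closing each step with Lemma~\ref{20170725:lem}: since a monic pseudodifferential operator over $\mc F$ is not a zero divisor, any operator annihilating the oscillating function $w(z)$ must itself vanish. First I would record two elementary facts. For any symbol $Q(z)\in\mc F((z^{-1}))$ one has $Q(\partial)(e^{z\cdot\bm t})=Q(z)e^{z\cdot\bm t}$, in particular $\partial^k(e^{z\cdot\bm t})=z^ke^{z\cdot\bm t}$; and since $z$ is a formal parameter killed by $\partial$, the scalar $z^k$ commutes with the action of every pseudodifferential operator, i.e. $P(\partial)(z^k v(z))=z^k\,P(\partial)v(z)$ for any oscillating function $v(z)$. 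Finally, $S(\partial)$ being monic of order $0$ is invertible, so $w(z)=S(\partial)(e^{z\cdot\bm t})$ and $e^{z\cdot\bm t}=S^{-1}(\partial)w(z)$.

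For part (a), combining these facts I would compute
$$
z^N w(z)=z^N S(\partial)(e^{z\cdot\bm t})=S(\partial)(z^N e^{z\cdot\bm t})=S(\partial)\partial^N(e^{z\cdot\bm t})=S(\partial)\partial^N S^{-1}(\partial)\,w(z)\,,
$$
so the eigenvalue problem \eqref{eq:eigen} is equivalent to $\big(L(\varphi;\partial)-S(\partial)\partial^N S^{-1}(\partial)\big)w(z)=0$, which by Lemma~\ref{20170725:lem} is equivalent to the dressing equation \eqref{eq:dress}.

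For part (b) I would assume the dressing equation \eqref{eq:dress}. By Remark~\ref{20170725:rem} (equation \eqref{eq:satoN}) it is equivalent to $L^{\frac1N}(\varphi;\partial)=S(\partial)\partial S^{-1}(\partial)$, hence $L^{\frac kN}(\varphi;\partial)=S(\partial)\partial^k S^{-1}(\partial)$, and the computation above with $\partial^k$ in place of $\partial^N$ gives $L^{\frac kN}(\varphi;\partial)w(z)=z^k w(z)$. Using the action \eqref{20170725:eq5} of $\frac{\partial}{\partial t_k}$ on oscillating functions, then $L^{\frac kN}=B_k+B_k^{(-)}$ from \eqref{20170720:eq5}, I would get
$$
\frac{\partial w(z)}{\partial t_k}=\frac{\partial S(\partial)}{\partial t_k}(e^{z\cdot\bm t})+z^k w(z)=\Big(\frac{\partial S(\partial)}{\partial t_k}S^{-1}(\partial)+B_k(\varphi;\partial)+B_k^{(-)}(\varphi;\partial)\Big)w(z)\,,
$$
where in the last step $e^{z\cdot\bm t}=S^{-1}(\partial)w(z)$ is used. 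Thus the linear problem \eqref{eq:linear} holds if and only if $\big(\frac{\partial S(\partial)}{\partial t_k}S^{-1}(\partial)+B_k^{(-)}(\varphi;\partial)\big)w(z)=0$, which by Lemma~\ref{20170725:lem} is equivalent to $\frac{\partial S(\partial)}{\partial t_k}S^{-1}(\partial)=-B_k^{(-)}(\varphi;\partial)$, i.e. to the Sato equation \eqref{eq:sato}.

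The argument is a short chain of operator identities, so I do not expect a genuine obstacle; the one place demanding care is the bookkeeping with the formal symbol $e^{z\cdot\bm t}$ --- keeping straight the two roles of $\frac{\partial}{\partial t_k}$ (its action on oscillating functions via \eqref{20170725:eq5} versus differentiation of the coefficients of a pseudodifferential operator) and applying the identity ``$z^k$ passes through $S(\partial)$'' in the correct direction. Everything else is supplied by Section~\ref{sec:2.6} and Lemma~\ref{20170725:lem}.
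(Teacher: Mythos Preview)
Your proof is correct and follows essentially the same route as the paper: both parts reduce the equivalence to an operator identity acting on $w(z)$ (or $e^{z\cdot\bm t}$) and then invoke Lemma~\ref{20170725:lem}, and for part~(b) both use Remark~\ref{20170725:rem} to obtain $L^{\frac kN}(\varphi;\partial)w(z)=z^kw(z)$ before combining it with the formula~\eqref{20170725:eq5} for $\frac{\partial w(z)}{\partial t_k}$. The only cosmetic difference is that in part~(a) the paper writes the vanishing condition as $\big(L(\varphi;\partial)S(\partial)-S(\partial)\partial^N\big)e^{z\cdot\bm t}=0$ rather than your $\big(L(\varphi;\partial)-S(\partial)\partial^N S^{-1}(\partial)\big)w(z)=0$, which amounts to the same thing after right-multiplication by $S^{-1}(\partial)$.
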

\begin{proof}
Equation \eqref{eq:eigen} can be rewritten, using the first equation in \eqref{20210407:eq1}, as
\begin{equation}\label{20170726:eq4}
\big(L(\varphi;\partial)S(\partial)-S(\partial)\partial^N\big)e^{z\cdot\bm t}
=0
\,.
\end{equation}
Claim (a) follows by Lemma \ref{20170725:lem}.
Next, let us prove claim (b).
Note that, by Remark \ref{20170725:rem},
the dressing equation \eqref{eq:dress}
is also equivalent to 
$$
L^{\frac1N}(\varphi;\partial)w(z)=zw(z)
\,.
$$
Hence we have
\begin{equation}\label{20170725:eq7}
z^kw(z)=L^{\frac kN}(\varphi;\partial)w(z)
=\left(B_k(\varphi;\partial)+B_k^{(-)}(\varphi;\partial)\right)w(z)\,.
\end{equation}
By \eqref{20170725:eq5} and \eqref{20170725:eq7},
we get
$$
\begin{array}{l}
\displaystyle{
\vphantom{\Big(}
\frac{\partial w(z)}{\partial t_k}=\Big(\frac{\partial S(\partial)}{\partial t_k}+z^kS(z)\Big)e^{z\cdot\bm t}
=
\Big(\frac{\partial S(\partial)}{\partial t_k} S^{-1}(\partial)+z^k\Big)w(z)
} \\
\displaystyle{
\vphantom{\Big(}
=
\Big(\frac{\partial S(\partial)}{\partial t_k} S^{-1}(\partial)+B_k(\varphi;\partial)+B_k^{(-)}(\varphi;\partial)\Big)
w(z)
\,.}
\end{array}
$$
Hence, if Sato equation \eqref{eq:sato} holds, we automatically get the linear problem \eqref{eq:linear}.
Conversely, if the linear problem \eqref{eq:linear} holds,
Sato equation \eqref{eq:sato} holds due to Lemma \ref{20170725:lem}.
\end{proof}

%%%
\subsection{Lax equations and the bilinear equation on the wave function}
\label{sec:3.6}

\begin{theorem}\label{thm:bilinear}
Let $S(\partial)\in\mc F((\partial^{-1}))$ be an invertible pseudodifferential operator,
let $w(z)=S(z)e^{z\cdot\bm t}\in\mc F((z^{-1}))e^{z\cdot\bm t}$ be the corresponding oscillating function,
and let $w^\star(z)=(S^*)^{-1}(-z)e^{-z\cdot\bm t}$ be the corresponding adjoint anti-oscillating function
(cf. \eqref{20170725:eq8}).
Then, the following conditions are equivalent:
\begin{enumerate}[(i)]
\item
For every $k\in\mc Z_L$, we have
$$
\Big(
\frac{\partial S(\partial)}{\partial t_k}S^{-1}(\partial)+S(\partial)\partial^k S^{-1}(\partial)
\Big)_-=0
\,.
$$
\item
For every $k\in\mc Z_L$, there exists a differential operator $\tilde B_k(\partial)\in\mc F[\partial]$ such that
$$
\frac{\partial w(z)}{\partial t_k}=\tilde B_k(\partial)w(z)
\,.
$$
\item
For every $s\geq0$, $k_1,\dots,k_s\in\mc Z_L$ and $n_0,n_1,\dots,n_s\geq0$, we have
$$
\Res_z\Big(
\frac{\partial^{n_0+n_1+\dots+n_s}w(z)}{\partial x^{n_0}\partial t_{k_1}^{n_1}\cdots\partial t_{k_s}^{n_s}}w^\star(z)
\Big)=0
\,.
$$
\end{enumerate}
\end{theorem}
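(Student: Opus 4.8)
The plan is to prove (i)$\Leftrightarrow$(ii), then (ii)$\Rightarrow$(iii), then (iii)$\Rightarrow$(i); these three facts together give the asserted equivalence. The common thread is the identity
\begin{equation*}
\frac{\partial w(z)}{\partial t_k}
=
P_k(\partial)w(z)\,,
\qquad
P_k(\partial):=\frac{\partial S(\partial)}{\partial t_k}S^{-1}(\partial)+S(\partial)\partial^k S^{-1}(\partial)\in\mc F((\partial^{-1}))\,,
\end{equation*}
which I would obtain from \eqref{20170725:eq5} using that $S^{-1}(\partial)w(z)=e^{z\cdot\bm t}$ and $\partial^k e^{z\cdot\bm t}=z^k e^{z\cdot\bm t}$. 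The point to keep in mind is that $P_k(\partial)_-$ is exactly the pseudodifferential operator appearing in condition (i).

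First I would dispose of (i)$\Leftrightarrow$(ii). Given the identity above, if a differential operator $\tilde B_k(\partial)\in\mc F[\partial]$ satisfies $\frac{\partial w(z)}{\partial t_k}=\tilde B_k(\partial)w(z)$, then $(P_k(\partial)-\tilde B_k(\partial))w(z)=0$, and Lemma \ref{20170725:lem} forces $P_k(\partial)=\tilde B_k(\partial)$, whence $P_k(\partial)_-=0$; conversely, if $P_k(\partial)_-=0$ one takes $\tilde B_k(\partial)=P_k(\partial)$. Here one first remarks that Lemma \ref{20170725:lem} remains valid with ``monic'' weakened to ``invertible'', since its proof only uses that such an operator is not a zero divisor; along the way $\tilde B_k(\partial)$ gets identified with $P_k(\partial)_+$.

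For (ii)$\Rightarrow$(iii), using $\frac{\partial w}{\partial x}=\partial w(z)$, condition (ii), the Leibniz rule, and the fact that $\mc F[\partial]$ is a subalgebra of $\mc F((\partial^{-1}))$ stable under each derivation $\frac{\partial}{\partial t_h}$, an induction on $n_0+\dots+n_s$ shows that every mixed derivative $\frac{\partial^{n_0+\dots+n_s}w(z)}{\partial x^{n_0}\partial t_{k_1}^{n_1}\cdots\partial t_{k_s}^{n_s}}$ equals $C(\partial)w(z)$ for some differential operator $C(\partial)\in\mc F[\partial]$. By \eqref{20170726:eq1} and \eqref{20170725:eq8}, the product $C(\partial)w(z)\,w^\star(z)$ has symbol $(C\circ S)(z)\,(S^{-1})^*(-z)$, so Lemma \ref{lem:hn1a} applied with $A=C\circ S$ and $B=S^{-1}$ gives $\Res_z\big(C(\partial)w(z)\,w^\star(z)\big)=\Res_z(C\circ S\circ S^{-1})(z)=\Res_z C(z)=0$, since $C(\partial)$ is a differential operator.

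The main step is (iii)$\Rightarrow$(i), and for it I would only need the instance of (iii) with $s=1$, $k_1=k$, $n_1=1$ and $n_0\geq0$ arbitrary. Since $\frac{\partial}{\partial x}$ commutes with $\frac{\partial}{\partial t_k}$, the relevant derivative is $\partial^{n_0}\big(P_k(\partial)w(z)\big)$, whose symbol is $(z+\partial)^{n_0}\big((P_k\circ S)(z)\big)$; multiplying by $w^\star(z)$ and taking the residue gives $\Res_z\big((z+\partial)^{n_0}((P_k\circ S)(z))\big)(S^{-1})^*(-z)$, whose vanishing for all $n_0\geq0$ is precisely the hypothesis of Lemma \ref{20170726:lem} with $A=P_k\circ S$ and $B=S^{-1}$. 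The conclusion of that lemma then reads $(A\circ B)(\partial)_-=(P_k\circ S\circ S^{-1})(\partial)_-=P_k(\partial)_-=0$, which is (i). I expect the only delicate point to be this last identification: one must recognize the adjoint factor $(S^*)^{-1}(-z)=(S^{-1})^*(-z)$ of $w^\star(z)$ as exactly the $B^*(-z)$ called for by Lemma \ref{20170726:lem} with $B=S^{-1}$, so that the composition $P_k\circ S\circ S^{-1}$ collapses back to $P_k$. Everything else reduces to routine bookkeeping with symbols, and the $s=0$ instances of (iii) are automatic.
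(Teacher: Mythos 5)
Your proof is correct and follows essentially the same route as the paper's: the same identity $\frac{\partial w(z)}{\partial t_k}=P_k(\partial)w(z)$ combined with Lemma \ref{20170725:lem} for (i)$\Leftrightarrow$(ii), the same induction plus Lemma \ref{lem:hn1a} for (ii)$\Rightarrow$(iii), and the same application of Lemma \ref{20170726:lem} with $A=P_k\circ S=\frac{\partial S(\partial)}{\partial t_k}+S(\partial)\partial^k$ and $B=S^{-1}$ for (iii)$\Rightarrow$(i). Your observation that Lemma \ref{20170725:lem} extends from monic to merely invertible operators is a welcome bit of extra care, since the theorem's hypothesis on $S(\partial)$ is only invertibility.
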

\begin{remark}\label{20170726:rem2}
Condition (iii) can be suggestively rewritten as the following \emph{bilinear equation}
on the wave function $w(z)=w(x,\bm t;z)$:
\begin{equation}\label{eq:bilinear}
\Res_z\big(
w(x,\bm t;z)w^\star(x^\prime,\bm t^\prime;z)
\big)
=0
\,.
\end{equation}
Indeed, the equation in (iii) formally coincides with the coefficient of
$$
\frac{(x-x')^{n_0}(t_{k_1}-t_{k_1}')^{n_1}\dots(t_{k_s}-t_{k_s}')^{n_s}}{n_0!n_1!\dots n_s!}
$$
in the Taylor series expansion
of \eqref{eq:bilinear} around $x=x'$, $\bm t=\bm t'$, see \cite{Dic03}.
\end{remark}
\begin{proof}[Proof of Theorem \ref{thm:bilinear}]
By the definition \eqref{20170725:eq5} of the action of $\frac{\partial}{\partial t_k}$
on oscillating functions, we have
$$
\frac{\partial w(z)}{\partial t_k}
=
\Big(\frac{\partial S(\partial)}{\partial t_k}S^{-1}(\partial)+S(\partial)\partial^kS^{-1}(\partial)\Big)%e^{z\cdot{\bm t}}
w(z)\,.
$$
The equivalence of conditions (i) and (ii) is an obvious consequence of this identity
and of Lemma \ref{20170725:lem}.
Indeed, when this holds, we have
$$
\tilde B_k(\partial)=\Big(\frac{\partial S(\partial)}{\partial t_k}S^{-1}(\partial)+S(\partial)\partial^kS^{-1}(\partial)\Big)_+
\,.
$$

Next, let us prove that condition (ii) implies (iii).
We claim, by induction on $n_0+n_1+\dots+n_s$, that
\begin{equation}\label{20170726:eq3}
\frac{\partial^{n_0+n_1+\dots+n_s}w(z)}{\partial x^{n_0}\partial t_{k_1}^{n_1}\cdots\partial t_{k_s}^{n_s}}
=
P(\partial)w(z)
\end{equation}
for some differential operator $P(\partial)\in\mc F[\partial]$
(depending on $n_0,n_1,\dots,n_s,k_1,\dots,k_s$).
For $n_0=1$, $n_1=\dots=n_s=0$, \eqref{20170726:eq3} holds with $P(\partial)=\partial$. For $n_0=0$ and
$n_1+\dots+n_s=1$, \eqref{20170726:eq3} follows by condition (ii).
Assuming, by induction, that \eqref{20170726:eq3} holds,
we have
$$
\frac{\partial}{\partial x}
\frac{\partial^{n_0+n_1+\dots+n_s}w(z)}{\partial x^{n_0}\partial t_{k_1}^{n_1}\cdots\partial t_{k_s}^{n_s}}
=\partial P(\partial) w(z)
\,,
$$
and, for $h\in\mc Z_L$,
$$
\frac{\partial}{\partial t_h}
\frac{\partial^{n_0+n_1+\dots+n_k}w(z)}{\partial x^{n_0}\partial t_{k_1}^{n_1}\cdots\partial t_{k_s}^{n_s}}
=
\Big(
\frac{\partial P(\partial)}{\partial t_h}
+P(\partial)\tilde B_h(\partial)
\Big)w(z)
\,,
$$
proving \eqref{20170726:eq3}.
Then we have
$$
\begin{array}{l}
\displaystyle{
\vphantom{\Big(}
\Res_z\Big(
\frac{\partial^{n_0+n_1+\dots+n_s}w(z)}{\partial x^{n_0}\partial t_{k_1}^{n_1}\cdots\partial t_{k_s}^{n_s}}w^\star(z)
\Big)
=
\Res_z
\big(P(\partial)w(z)\big)w^\star(z)
} \\
\displaystyle{
\vphantom{\Big(}
=
\Res_z
(P\circ S)(z)(S^*)^{-1}(-z)
=
\Res_z
P(z)
=
0
\,.}
\end{array}
$$
In the third equality we used Lemma \ref{lem:hn1a}.

Finally, we prove that condition (iii) implies (i).
For every $n\geq0$ and $k\in\mc Z_L$, we have,
by condition (iii) and the definition \eqref{20170725:eq5} 
of the action of $\frac{\partial}{\partial t_k}$ on oscillating functions,
$$
0
=
\Res_z
\big(\partial^n\frac{\partial w(z)}{\partial t_k}\big)w^\star(z)
=
\Res_z
\Big(
(z+\partial)^n
\Big(\frac{\partial S(z)}{\partial t_k}+S(z)z^k\Big)\Big)(S^*)^{-1}(-z)
\,.
$$
We can now use Lemma \ref{20170726:lem}
to conclude that condition (i) holds.
\end{proof}

%%%
\subsection{Summarizing statement}
\label{sec:3.7}

We can summarize the results of Sections \ref{sec:3.4}--\ref{sec:3.6}
in the following Corollary:
\begin{corollary}\label{cor:equiv}
Let $\varphi=(\varphi_\alpha)_{\alpha\in I}\in\mc F^\ell$.
Let $S(\partial)\in\mc F((\partial^{-1}))$ be a monic pseudodifferential operator of order $0$
and let $w(z)=S(z)e^{z\cdot\bm t}\in\mc F((z^{-1}))e^{z\cdot\bm t}$ be the corresponding oscillating function.
Then, the dressing equation \eqref{eq:dress} on $S(\partial)$ is equivalent 
to the eigenvalue problem \eqref{eq:eigen} on the wave function $w(z)$.
Furthermore, if either \eqref{eq:dress} or \eqref{eq:eigen} holds,
then the following conditions are equivalent:
\begin{enumerate}[(a)]
\item
the Sato equation \eqref{eq:sato} holds;
\item
the linear problem \eqref{eq:linear} holds;
\item
the bilinear equation \eqref{eq:bilinear} holds.
\end{enumerate}
If, moreover, any of the three equivalent conditions (a)--(c) hold,
then $\varphi$ is a solution to the Lax equation \eqref{eq:lax}
(i.e. \eqref{eq:lax-sol} holds).
\end{corollary}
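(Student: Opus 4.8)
The plan is to assemble the statement from Theorems \ref{thm:sato}, \ref{thm:linear} and \ref{thm:bilinear}: each asserted equivalence is one of those results used in its ``easy'' direction, so almost nothing new is needed. First, the equivalence of the dressing equation \eqref{eq:dress} with the eigenvalue problem \eqref{eq:eigen} is exactly Theorem \ref{thm:linear}(a), which requires only that $S(\partial)$ be monic of order $0$. Assuming from now on that \eqref{eq:dress} (equivalently \eqref{eq:eigen}) holds, the equivalence of (a) and (b) is exactly Theorem \ref{thm:linear}(b).

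To obtain the equivalence of (b) and (c) I would invoke Theorem \ref{thm:bilinear}, whose conditions (i)--(iii) are: vanishing of the singular part of $\frac{\partial S(\partial)}{\partial t_k}S^{-1}(\partial)+S(\partial)\partial^kS^{-1}(\partial)$; existence of a differential operator $\tilde B_k(\partial)$ with $\frac{\partial w(z)}{\partial t_k}=\tilde B_k(\partial)w(z)$; and the bilinear equation \eqref{eq:bilinear}. The linear problem \eqref{eq:linear} is the special case of (ii) with $\tilde B_k(\partial)=B_k(\varphi;\partial)$, so (b) $\Rightarrow$ (ii) $\Rightarrow$ (iii), i.e. (b) $\Rightarrow$ (c). For the converse one must, given (iii), not merely recover \emph{some} operator $\tilde B_k(\partial)$ from (ii) but identify it with $B_k(\varphi;\partial)$. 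Here I would use that, by the dressing equation and Remark \ref{20170725:rem}, $S(\partial)\partial^kS^{-1}(\partial)=L^{\frac kN}(\varphi;\partial)=B_k(\varphi;\partial)+B_k^{(-)}(\varphi;\partial)$, while $\frac{\partial S(\partial)}{\partial t_k}S^{-1}(\partial)$ has order $\leq-1$ because $S(\partial)$ is monic of order $0$, hence its leading coefficient is the constant $1$ and is annihilated by $\frac{\partial}{\partial t_k}$. Thus the operator whose differential part is $\tilde B_k(\partial)$ (see the proof of Theorem \ref{thm:bilinear}) equals $\frac{\partial S(\partial)}{\partial t_k}S^{-1}(\partial)+B_k(\varphi;\partial)+B_k^{(-)}(\varphi;\partial)$, whose differential part is precisely $B_k(\varphi;\partial)$; equivalently, under the dressing equation, condition (i) of Theorem \ref{thm:bilinear} reads $\frac{\partial S(\partial)}{\partial t_k}S^{-1}(\partial)+B_k^{(-)}(\varphi;\partial)=0$, which is the Sato equation \eqref{eq:sato}. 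Either way, (iii) $\Rightarrow$ (ii) $\Rightarrow$ \eqref{eq:linear}, completing the equivalence of (a), (b), (c).

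Finally, if any of (a)--(c) holds then in particular the Sato equation \eqref{eq:sato} holds, and since the dressing equation \eqref{eq:dress} is in force throughout this part of the statement, Theorem \ref{thm:sato}(a) applies verbatim and yields that $\varphi$ solves the Lax hierarchy, i.e. \eqref{eq:lax-sol}. The only step with any content beyond citation is the identification $\tilde B_k(\partial)=B_k(\varphi;\partial)$ in the proof that (c) implies (b); this is where the monic-order-$0$ hypothesis on $S(\partial)$ together with the dressing equation are genuinely used, and it amounts to a short order count rather than a real obstacle.
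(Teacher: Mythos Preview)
Your proposal is correct and follows essentially the same route as the paper: cite Theorem~\ref{thm:linear}(a) for the dressing/eigenvalue equivalence, Theorem~\ref{thm:linear}(b) for (a)$\Leftrightarrow$(b), Theorem~\ref{thm:bilinear} for the link with (c), and Theorem~\ref{thm:sato}(a) for the final Lax statement. The only substantive step---using that $\frac{\partial S}{\partial t_k}S^{-1}$ has negative order together with the dressing identity $S\partial^kS^{-1}=L^{k/N}(\varphi;\partial)$ to turn condition~(i) of Theorem~\ref{thm:bilinear} into the Sato equation---is exactly what the paper does in its proof of (c)$\Rightarrow$(a).
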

\begin{proof}
The equivalence of \eqref{eq:dress} and \eqref{eq:eigen} is given by Theorem \ref{thm:linear}(a).
Moreover, assuming \eqref{eq:dress},
the equivalence of (a) and (b) is given by Theorem \ref{thm:linear}(b),
and we need to prove their equivalence to (c).
Theorem \ref{thm:bilinear} says, in particular, that (a) implies (c).
Let us prove the converse implication.
Since $S(\partial)$ is monic of order $0$,
then $\frac{\partial S(\partial)}{\partial t_k}S^{-1}(\partial)$ 
is a pseudodifferential operator of negative order.
Hence, by Theorem \ref{thm:bilinear}(i) and \eqref{eq:satoN}, we have
$$
\frac{\partial S(\partial)}{\partial t_k}S^{-1}(\partial)
=
-\big(S(\partial)\partial^kS^{-1}(\partial)\big)_-
=-\big(L^{\frac kN}(\varphi,\partial)\big)_-
=-B_k^{(-)}(\varphi;\partial)
\,,
$$
proving the Sato equation \eqref{eq:sato}.
Finally, the last assertion of the Theorem is given by Theorem \ref{thm:sato}(a).
\end{proof}

%%%
\section{Wave functions and tau-functions}
\label{sec:3.8}

%%%
\subsection{Tau-functions of KP type}

We review in this section the construction of tau-functions when $\mc Z_L=\mb Z_{\geq1}\,(=\mc Z)$, see \cite{DJKM83,Dic03}.

Throughout this section we assume that 
the algebra of functions $\mc F$ is integrable (cf. Definition \ref{def:integrable}).
Consider the operator
$$
e^{-z^{-1}\cdot\tilde{\partial}_{\bm t}}:\mc F\to\mc F((z^{-1}))
\,,
\text{ where }
\,
z^{-1}\cdot\tilde{\partial}_{\bm t}\,:=\sum_{k=1}^{\infty}\frac{z^{-k}}{k}\frac{\partial}{\partial t_k}
\,.
$$
By Taylor expansion, it ``shifts'' the time variables $t_k\mapsto t_k-\frac1{kz^k}$.
Fix a monic pseudodifferential operator of order $0$, $S(\partial)\in1+\mc F[[\partial^{-1}]]\partial^{-1}$. 
Let $w(z)=S(z)e^{z\cdot{\bm t}}\in\mc F((z^{-1}))e^{z\cdot{\bm t}}$ be the corresponding oscillating function,
and let $w^\star(z)=(S^*)^{-1}(-z)e^{-z\cdot{\bm t}}$ be the corresponding adjoint anti-oscillating function.
\begin{theorem}\label{20170912:cor1}
If the bilinear identity \eqref{eq:bilinear} holds,
then, there exists a function $\tau\in\mc F$ such that
\begin{equation}\label{20170912:eq1}
w(z)=\frac{\Gamma({\bm t};z)\tau}{\tau}\qquad\text{and}\qquad
w^\star(z)=\frac{\Gamma^\star({\bm t};z)\tau}{\tau}
\,,
\end{equation}
where $\Gamma({\bm t};z)=e^{z\cdot{\bm t}}e^{-z^{-1}\cdot\tilde\partial_{\bm t}}:\mc F\to\mc F((z^{-1}))e^{z\cdot{\bm t}}$
and $\Gamma^{\star}({\bm t};z)=e^{-z\cdot{\bm t}}e^{z^{-1}\cdot\tilde\partial_{\bm t}}:\mc F\to\mc F((z^{-1}))e^{-z\cdot{\bm t}}$
are called vertex operators.
Moreover, the function $\tau\in\mc F$ solving \eqref{20170912:eq1} is unique up to multiplication by a function of $x$
(constant in $\bm t$).
\end{theorem}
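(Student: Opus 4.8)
The plan, following \cite{DJKM83,Dic03}, is to seek $\tau$ in the form $\tau=e^\rho$ with $\rho\in\mc F$, which turns the problem into a question about exactness of a differential form in the time variables. Since $\Gamma(\bm t;z)f=e^{z\cdot\bm t}\big(e^{-z^{-1}\cdot\tilde\partial_{\bm t}}f\big)$ and, by Taylor's formula, $e^{-z^{-1}\cdot\tilde\partial_{\bm t}}$ is the substitution $t_k\mapsto t_k-\frac1{kz^k}$ --- an automorphism of $\mc F((z^{-1}))$ commuting with $\exp$ --- the first formula in \eqref{20170912:eq1} is equivalent to $S(z)=\big(e^{-z^{-1}\cdot\tilde\partial_{\bm t}}\tau\big)\tau^{-1}$; taking logarithms (legitimate, both sides being $1+O(z^{-1})$) it becomes the single generating-function identity
$$
\big(e^{-z^{-1}\cdot\tilde\partial_{\bm t}}-1\big)\rho=\log S(z)=:\sum_{k\geq1}G_kz^{-k}\,,
$$
with each $G_k\in\mc F$ a universal polynomial in $s_1,\dots,s_k$. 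Since the coefficient of $z^{-k}$ on the left is $-\frac1k\partial_{t_k}\rho$ plus a differential polynomial in the $\partial_{t_j}\rho$, $j<k$, one solves recursively in $k$ and finds that this identity is equivalent to a system
$$
\partial_{t_k}\rho=R_k\,,\qquad k\geq1\,,
$$
where each $R_k\in\mc F$ is an explicit universal differential polynomial in the coefficients of $S$ (none of its $\partial_x$-derivatives appearing); e.g. $R_1=-s_1$ and $R_2=-2s_2+s_1^2-\partial_{t_1}s_1$.

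By integrability of $\mc F$ (Definition \ref{def:integrable}) it then suffices to check the compatibility conditions $\partial_{t_h}R_k=\partial_{t_k}R_h$ for all $h,k\geq1$ together with $\partial R_k=\partial_{t_1}R_k$ for all $k$ (the latter forces the $x$-equation $\partial\rho=R_1$ of the system). The second condition reduces to $\partial S=\partial_{t_1}S$: by Theorem \ref{thm:bilinear} the bilinear equation \eqref{eq:bilinear} gives $\frac{\partial w(z)}{\partial t_k}=\tilde B_k(\partial)w(z)$ with $\tilde B_k(\partial)=\big(\frac{\partial S(\partial)}{\partial t_k}S^{-1}(\partial)+S(\partial)\partial^kS^{-1}(\partial)\big)_+$, and for $k=1$, since $S$ is monic of order $0$, both $\frac{\partial S(\partial)}{\partial t_1}S^{-1}(\partial)$ and $S(\partial)\partial S^{-1}(\partial)-\partial$ have negative order, so $\tilde B_1(\partial)=\partial$, hence $\partial_{t_1}S=\partial S$ and so $\partial_{t_1}$ and $\partial$ agree on every $R_k$.

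The crux is the closedness $\partial_{t_h}R_k=\partial_{t_k}R_h$, which is where the bilinear equation is used essentially. By Theorem \ref{thm:bilinear}, \eqref{eq:bilinear} is equivalent to $\big(\frac{\partial S(\partial)}{\partial t_k}S^{-1}(\partial)+S(\partial)\partial^kS^{-1}(\partial)\big)_-=0$, i.e. (as $S$ is monic of order $0$) to the Sato equations $\frac{\partial S(\partial)}{\partial t_k}=-\big(S(\partial)\partial^kS^{-1}(\partial)\big)_-S(\partial)$, which express each $\partial_{t_k}s_i$ as a differential polynomial in the $s_j$; substituting these into $\partial_{t_h}R_k-\partial_{t_k}R_h$ and simplifying yields $0$. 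Equivalently, as in \cite{Dic03}, one first extracts from \eqref{eq:bilinear} the differential Fay identity, which is precisely the assertion that $\sum_kR_k\,dt_k$ has $\log\tau$ as a potential. Granting this, integrability of $\mc F$ furnishes $\rho\in\mc F$, unique up to an additive function of $x$; normalizing $\rho(0)=0$ so that $\tau:=e^\rho\in\mc F$, we get $w(z)=\Gamma(\bm t;z)\tau/\tau$ by construction. The second formula in \eqref{20170912:eq1} follows by the analogous argument for $w^\star$, using \eqref{eq:bilinear} to identify the two resulting $\tau$-functions up to a factor depending only on $x$ (cf. \cite{Dic03}).

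For uniqueness: if $\tau$ and $\tilde\tau$ both satisfy \eqref{20170912:eq1}, then from $\Gamma(\bm t;z)\tau/\tau=\Gamma(\bm t;z)\tilde\tau/\tilde\tau$, using that $e^{-z^{-1}\cdot\tilde\partial_{\bm t}}$ is an algebra homomorphism, $c:=\tilde\tau/\tau$ satisfies $e^{-z^{-1}\cdot\tilde\partial_{\bm t}}c=c$; comparing coefficients of $z^{-k}$ gives $\partial_{t_k}c=0$ for all $k\geq1$, so $c$ is a function of $x$ alone, and conversely any such rescaling of $\tau$ preserves \eqref{20170912:eq1} because $\Gamma(\bm t;z)$ and $\Gamma^\star(\bm t;z)$ do not involve $\partial_x$. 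The one genuinely laborious step is the closedness (equivalently, the differential Fay identity) of the third paragraph; everything else is bookkeeping.
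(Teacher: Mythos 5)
Your overall strategy is the same as the paper's: reduce \eqref{20170912:eq1} to the identity $(e^{-z^{-1}\cdot\tilde\partial_{\bm t}}-1)\log\tau=\log S(z)$, convert it into a system $\partial_{t_k}\log\tau=R_k$ (the paper writes these as $g_k=\Res_z z^kN(z)\log S(z)$ using the operator $N(z)$ of \eqref{eq:N}, which packages your recursion in closed form), adjoin the $x$-equation $\partial\log\tau=-s_1$, and invoke integrability of $\mc F$ once the system is shown to be compatible. Your bookkeeping steps are correct: the recursion producing the $R_k$, the identification $\tilde B_1(\partial)=\partial$ giving $\partial S=\partial_{t_1}S$, and the uniqueness argument via $e^{-z^{-1}\cdot\tilde\partial_{\bm t}}c=c$ are all fine.

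However, there is a genuine gap at exactly the point you flag as ``the one genuinely laborious step'': the closedness $\partial_{t_h}R_k=\partial_{t_k}R_h$ is asserted (``substituting these into $\partial_{t_h}R_k-\partial_{t_k}R_h$ and simplifying yields $0$'', then ``Granting this'') but never proved. This is the mathematical core of the theorem --- it is the only place where the bilinear identity is used beyond first order --- and it cannot be dispatched by ``bookkeeping''. The paper's argument requires two separate inputs extracted from \eqref{eq:bilinear} by \emph{shifting the time variables}: a single shift $t_k\mapsto t_k-\frac1{kw^k}$ yields Lemma \ref{lem1}(a), $S(z)e^{-z^{-1}\cdot\tilde\partial_{\bm t}}(S^*)^{-1}(-z)=1$, and a double shift $t_k\mapsto t_k-\frac1{kw_1^k}-\frac1{kw_2^k}$ combined with Lemma \ref{lem1}(a) yields Lemma \ref{lem2}(a),
$$
(e^{-w^{-1}\cdot\tilde\partial_{\bm t}}-1)\log S(z)=(e^{-z^{-1}\cdot\tilde\partial_{\bm t}}-1)\log S(w)\,,
$$
which is the ``differential Fay identity'' you allude to; applying $N(z)N(w)$ then gives the symmetry of the mixed derivatives. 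You should supply this derivation (the residue computations \eqref{20170910:eq4a}--\eqref{20170910:eq4c} and \eqref{20170910:eq7}) rather than granting it. Note also that Lemma \ref{lem1}(a) is needed a second time, to pass from the formula for $w(z)$ to the formula for $w^\star(z)$ --- your phrase ``by the analogous argument for $w^\star$'' hides the fact that the identity $(S^*)^{-1}(-z)=\big(e^{z^{-1}\cdot\tilde\partial_{\bm t}}S(z)\big)^{-1}$ is what makes the \emph{same} $\tau$ work for both formulas, so that step too rests on the unproven residue computation.
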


As an immediate corollary of Theorem \ref{20170912:cor1} and Corollary \ref{cor:equiv},
we get the following
\begin{corollary}\label{20170912:cor2}
Let $L(\partial)\in\mc V((\partial^{-1}))$ be as in \eqref{eq:L}
and let $\varphi=(\varphi_\alpha)_{\alpha\in I}$, $\varphi_\alpha\in\mc F$,
be a solution of the hierarchy of Lax equations \eqref{eq:lax-sol} such that $a_0(\varphi)=0$.
Then, there exists $\tau\in\mc F$
such that
\begin{equation}\label{eq:cor2}
L(\varphi;\partial)=S(\partial)\partial^N S(\partial)^{-1}
\,\,\text{ with }\,\,
S(z)=\frac{e^{-z^{-1}\cdot\tilde\partial_{\bm t}}\tau}{\tau}
\,.
\end{equation}
\end{corollary}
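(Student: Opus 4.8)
The plan is to assemble this from Theorem~\ref{thm:sato}(b), Corollary~\ref{cor:equiv}, and Theorem~\ref{20170912:cor1}; the statement is a formal corollary of these, and the only real task is to enter the chain of implications with the hypotheses lined up correctly. Throughout this section $\mc F$ is assumed integrable, and by hypothesis $\varphi$ solves the Lax hierarchy~\eqref{eq:lax-sol} with $a_0(\varphi)=0$. Hence Theorem~\ref{thm:sato}(b), applied with the initial datum $S_0(\partial)=1\in\mb F((\partial^{-1}))$, produces a monic order-zero pseudodifferential operator $S(\partial)=1+\sum_{i\geq1}s_i\partial^{-i}\in\mc F((\partial^{-1}))$ satisfying both the dressing equation~\eqref{eq:dress}, namely $L(\varphi;\partial)=S(\partial)\partial^N S(\partial)^{-1}$, and the Sato equations~\eqref{eq:sato}.

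Next I would feed this $S(\partial)$ into Corollary~\ref{cor:equiv}. Its hypothesis holds since the dressing equation~\eqref{eq:dress} is satisfied; moreover condition (a) there (the Sato equation~\eqref{eq:sato}) holds, so condition (c) — the bilinear equation~\eqref{eq:bilinear} for the wave function $w(z)=S(z)e^{z\cdot\bm t}$ — holds as well. This is exactly the hypothesis of Theorem~\ref{20170912:cor1} (recall $\mc Z_L=\mb Z_{\geq1}$ in the present setting), so there exists $\tau\in\mc F$ with $w(z)=\Gamma(\bm t;z)\tau/\tau$, where $\Gamma(\bm t;z)=e^{z\cdot\bm t}e^{-z^{-1}\cdot\tilde\partial_{\bm t}}$.

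It remains only to read off $S(z)$. Unwinding the vertex operator, $w(z)=\big(e^{-z^{-1}\cdot\tilde\partial_{\bm t}}\tau/\tau\big)e^{z\cdot\bm t}$, and comparing with the definition $w(z)=S(z)e^{z\cdot\bm t}$ gives $S(z)=e^{-z^{-1}\cdot\tilde\partial_{\bm t}}\tau/\tau$. Substituting this into the dressing equation~\eqref{eq:dress} yields precisely~\eqref{eq:cor2}, which completes the argument.

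I do not expect a genuine obstacle here: all the analytic content is already packaged in Theorem~\ref{thm:sato}(b) (existence of the dressing operator, which is where integrability of $\mc F$ is used) and in Theorem~\ref{20170912:cor1} (extraction of $\tau$ from the bilinear identity). The one point requiring care is bookkeeping: one must verify that \emph{both} the dressing equation and the Sato equation hold for the particular $S(\partial)$ constructed in the first step, so that Corollary~\ref{cor:equiv} legitimately delivers the bilinear equation needed to invoke Theorem~\ref{20170912:cor1}.
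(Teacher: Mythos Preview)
Your proof is correct and follows essentially the same route as the paper's own argument: invoke Theorem~\ref{thm:sato}(b) to obtain a dressing operator $S(\partial)$ satisfying both the dressing and Sato equations, then apply Theorem~\ref{20170912:cor1} to extract $\tau$. The paper's proof is terser and does not explicitly name Corollary~\ref{cor:equiv} as the bridge from the Sato equation to the bilinear identity required by Theorem~\ref{20170912:cor1}, but that step is implicit there; you have simply made it explicit, which is harmless and arguably clearer.
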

\begin{proof}
By Theorem \ref{thm:sato}(b), there exists $S(\partial)\in1+\mc F[[\partial^{-1}]]\partial^{-1}$
satisfying the dressing equation \eqref{eq:dress},
and by Theorem \ref{20170912:cor1} there exists $\tau\in\mc F$
such that \eqref{eq:cor2} holds.
\end{proof}

Corollary \ref{20170912:cor2} is saying that the Lax equations,
which are systems of equations in (infinitely) many unknown functions $a_i$,
can be ``reduced'' to a problem involving one single unknown function $\tau$.
In fact, we can translate the bilinear equation \eqref{eq:bilinear}
(which, by Corollary \ref{cor:equiv}, is essentially equivalent to the Lax equations \eqref{eq:lax}),
to a system of partial differential equations on the unknown function $\tau$.
The bilinear identity \eqref{eq:bilinear} is translated to the following bilinear identity for $\tau$:
$$
\res_z \left(\Gamma({\bm t}';z)\tau\right)\left(\Gamma^\star({\bm t}'';z)\tau\right)
=\res_z(e^{-z^{-1}\cdot\tilde\partial_{{\bm t}'}}\tau)(e^{z\cdot\tilde\partial_{{\bm t}''}}\tau)e^{z\cdot({\bm t}'-{\bm t}'')}=0
\,.
$$
By computing the above equation for ${\bm t}'={\bm t}-{\bm y}$ and ${\bm t}''={\bm t}+{\bm y}$, we get
\begin{equation}\label{20170920:eq1}
\sum_{k=0}^{\infty}p_k(-2{\bm y})p_{k+1}(\tilde \partial_{\bm y})(e^{-{\bm y}\cdot\partial_{{\bm t}}}\tau)(e^{{\bm y}\cdot\partial_{{\bm t}}}\tau)
=0
\,,
\end{equation}
where the Schur polynomials $p_k({\bm y})\in\mb F[y_1,\dots,y_k]$ are defined by
\begin{equation}\label{eq:schur}
e^{\sum_{k=1}^{\infty}y_k z^k}=\sum_{k=0}^\infty p_k({\bm y})z^k
\,.
\end{equation}
Equation \eqref{20170920:eq1} indeed provides an infinite system of partial differential equations on $\tau$,
by looking at the coefficient of each monomial $y_1^{n_1}y_2^{n_2}\dots y_s^{n_s}$.

We can express the solution $L(\varphi;\partial)$ in terms of the function $\tau$.
Recalling that $w(z)=S(z)e^{z\cdot t}$, using equations \eqref{20170912:eq1} and \eqref{eq:schur} we get
$S(z)=\sum_{k=0}^{\infty}\frac{p_k(-\tilde\partial_t)\tau}{\tau}z^{-k}$.
Similarly, by $w^\star(z)=(S^*)^{-1}(-z)e^{-z\cdot t}$ we get
$(S^*)^{-1}(-z)=\sum_{k=0}^{\infty}\frac{p_k(\tilde\partial_t)\tau}{\tau}z^{-k}$,
from which follows that
$S^{-1}(z)=\sum_{k=0}^{\infty}(z+\partial)^{-k}\frac{p_k(\tilde\partial_t)\tau}{\tau}$.
Hence, if $\tau$ is a tau-function of the solution $\varphi$ of the Lax equation \eqref{eq:lax} such that $a_0(\varphi)=0$, 
we obtain
\begin{equation}
\begin{split}\label{eq:befana}
L(\varphi;\partial)&=\sum_{h,k=0}^{\infty}\frac{p_h(-\tilde\partial_t)\tau}{\tau}\partial^{N-h-k}\circ\frac{p_k(\tilde\partial_t)\tau}{\tau}
\\
&=
\sum_{k=0}^{\infty}\left(
\sum_{h=0}^k\sum_{l=0}^{k-h}\binom{N-k+l}{l}\frac{p_{k-h-l}(-\tilde\partial_t)\tau}{\tau}
\partial^l\left(\frac{p_h(\tilde\partial_t)\tau}{\tau}\right)
\right)\partial^{N-k}\,.
\end{split}
\end{equation}
By equating powers of $\partial^{N-2}$ in both sides of the above formula we get for example
$$
a_1(\varphi)=N\partial^2\log\tau
\,.
$$
(We used the fact that $\partial\log\tau=-s_1=\frac{\partial}{\partial t_1}\log\tau$, cf. equation \eqref{eq:partial-tau} below.)

\medskip

The remainder of the present section is devoted to the proof of Theorem \ref{20170912:cor1}.
Equation \eqref{20170912:eq1} defining $\tau$,
which we want to solve,
is obviously equivalent to the  the equation relating $S(z)$ and $\tau$ in \eqref{eq:cor2}.
Applying the logarithm to both sides of the latter, we get
\begin{equation}\label{eq:def-tau1}
\log(S(z))=\big(e^{-z^{-1}\cdot\tilde\partial_{\bm t}}-1\big)\log\tau
\,.
\end{equation}
For this, we used the (obvious) fact that the operator $e^{-z^{-1}\cdot\tilde\partial_{\bm t}}$, 
describing the ``shift'' of variables $t_k\mapsto t_k-\frac1{kz^k}$,
commutes with taking the logarithm.
Note that the logarithm of $S(z)$ is well defined by power series expansion in $z^{-1}$,
since $S(z)\in 1+\mc F[[z^{-1}]]z^{-1}$.
The outline of the proof of Theorem \ref{20170912:cor1} is as follows.
From equation \eqref{eq:def-tau1} we shall derive an (equivalent) system of partial differential equations
of type \eqref{20170721:eq2} for $\phi=\log\tau$,
depending on a function $f$ of $x$.
We will then prove that such system is compatible. 
Then, by the integrability assumption on $\mc F$, it will uniquely define $\phi$ up to the choice of the additive function $f$,
i.e. it will uniquely define $\tau$ up to a factor depending only on $x$,
as claimed by Theorem \ref{20170912:cor1}.

\medskip

First, we derive from \eqref{eq:def-tau1} an equation for $\partial\log\tau$.
For this, we need the following result.
\begin{lemma}\label{lem1}
The following identities hold:
\begin{enumerate}[a)]
\item
$S(z)e^{-z^{-1}\cdot\tilde{\partial}_{\bm t}}(S^*)^{-1}(-z)=1$.
\item
$\partial\log S(z)=(1-e^{-z^{-1}\cdot\tilde{\partial}_{\bm t}})s_1$.
\end{enumerate}
\end{lemma}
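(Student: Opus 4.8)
The plan is to derive both identities from the bilinear equation \eqref{eq:bilinear} by specializing the second space-time point to a $z$-dependent shift of the first, in the spirit of the Kyoto school.

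First I would prove (a). Reading \eqref{eq:bilinear}, via Remark \ref{20170726:rem2}, as the assertion that $\Res_z\big(w(x,\bm t;z)w^\star(x',\bm t';z)\big)$ is the zero formal power series in the differences $x-x'$ and $t_k-t_k'$, I set $x'=x$ and substitute $t_k'=t_k-\frac1{kw^k}$, where $w$ is a new formal parameter; this is legitimate because only finitely many monomials in the differences contribute to a given power of $w^{-1}$. Since $\sum_{k\geq1}\frac{z^k}{kw^k}=-\log(1-z/w)$, the two genuine exponentials of $w(x,\bm t;z)\,w^\star$ combine into the single factor $\iota_w(1-z/w)^{-1}=\sum_{n\geq0}(z/w)^n$, while the shift of the times acts on $(S^*)^{-1}(-z)$ as the operator $e^{-w^{-1}\cdot\tilde\partial_{\bm t}}$. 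Hence \eqref{eq:bilinear} becomes $\Res_z\big(F(w;z)\,\iota_w(1-z/w)^{-1}\big)=0$, where $F(w;z):=S(z)\,e^{-w^{-1}\cdot\tilde\partial_{\bm t}}(S^*)^{-1}(-z)$. As $S(z)$ and $(S^*)^{-1}(-z)$ are power series in $z^{-1}$ with constant term $1$, so is $F(w;z)=\sum_{i\geq0}c_i(w)z^{-i}$, and a direct count of powers of $z$ in $\sum_{i,n}c_i(w)w^{-n}z^{n-i}$ shows this residue equals $w\big(F(w;w)-1\big)$; therefore $F(w;w)=1$, which after renaming $w$ as $z$ is precisely (a).

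Next I would prove (b) by the same specialization, now applied after differentiating \eqref{eq:bilinear} in $x'$ --- which is allowed, since the residue of a product of an oscillating and an anti-oscillating function is symmetric, so \eqref{eq:bilinear} may be differentiated in any of $x,\bm t,x',\bm t'$. Since $\partial_{x'}w^\star(x',\bm t';z)=\big(\partial\,(S^*)^{-1}(x',\bm t';-z)-z\,(S^*)^{-1}(x',\bm t';-z)\big)e^{-zx'-\sum_k z^k t_k'}$ (the first $\partial$ acting coefficientwise), the same manipulation turns the differentiated, specialized identity into $\Res_z\big(G(w;z)\,\iota_w(1-z/w)^{-1}\big)=\Res_z\big(z\,F(w;z)\,\iota_w(1-z/w)^{-1}\big)$, where $G(w;z):=S(z)\,e^{-w^{-1}\cdot\tilde\partial_{\bm t}}\,\partial\,(S^*)^{-1}(-z)$ and $G(w;z)=O(z^{-1})$ because $\partial$ kills the leading $1$ of $(S^*)^{-1}(-z)$. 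Computing both residues as in part (a) and using $F(w;w)=1$ (equivalently $\sum_{i\geq1}c_i(w)w^{-i}=0$), this collapses to $G(w;w)=-c_1(w)$. On one side, $G(w;w)=S(w)\,\partial\big(e^{-w^{-1}\cdot\tilde\partial_{\bm t}}(S^*)^{-1}(-w)\big)=S(w)\,\partial\big(S(w)^{-1}\big)=-\partial\log S(w)$, using part (a) once more. On the other side, since the coefficient of $\partial^{-1}$ in $(S^*)^{-1}(\partial)=(S^{-1})^*(\partial)$ equals $s_1$, one has $(S^*)^{-1}(-z)=1-s_1z^{-1}+O(z^{-2})$, whence $c_1(w)=s_1-e^{-w^{-1}\cdot\tilde\partial_{\bm t}}s_1$. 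Combining, $\partial\log S(w)=(1-e^{-w^{-1}\cdot\tilde\partial_{\bm t}})s_1$, which is (b).

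The step I expect to be the main obstacle is the formal bookkeeping rather than any single computation. One must keep the shift parameter $w$ strictly separate from the residue variable $z$: the operations ``extract the coefficient of $z^{-1}$'' and ``set $z=w$'' do not commute, precisely because $e^{-w^{-1}\cdot\tilde\partial_{\bm t}}$ manufactures extra powers of $w^{-1}$ that would otherwise reshuffle orders; and one must fix once and for all that $(1-z/w)^{-1}$ is expanded in the $\iota_w$ sense, reflecting that the $t_k-t_k'$ are treated as small. A secondary subtlety is to realize that for (b) the derivative has to fall on the $w^\star$-factor: a derivative in $x$ on the $w$-factor produces, via Lemma \ref{lem:hn1a}, only relations that hold identically and carry no information about the $\bm t$-dependence.
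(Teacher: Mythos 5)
Your proof is correct and follows essentially the same route as the paper: the specialization $t_k'=t_k-\frac1{kw^k}$, the identity $e^{z\cdot\bm t}e^{-z\cdot\bm t'}=\iota_w(1-z/w)^{-1}e^{z(x-x')}$, and extraction of the residue of the product against $\iota_w(1-z/w)^{-1}$ (your direct power count is equivalent to the paper's use of the second formula in \eqref{eq:positive}). The only difference is cosmetic: for (b) you differentiate in $x'$ while the paper applies $\partial/\partial x$ to the first factor and the exponential, and both yield the same identity $\partial\log S(w)=(1-e^{-w^{-1}\cdot\tilde\partial_{\bm t}})s_1$ --- so your closing remark that an $x$-derivative falling on the $w$-factor ``carries no information'' is mistaken (that is exactly what the paper's proof does, and it is nontrivial once the time shift is in place), but this side comment does not affect the validity of your argument.
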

\begin{proof}
We set $t_k'=t_k-\frac{1}{kw^k}$, $k\geq1$, 
in the bilinear identity \eqref{eq:bilinear}, to get
\begin{equation}\label{20170910:eq4a}
\Res_z\big(
S(x,\bm t;z)
e^{z\cdot\bm t}
e^{-z\cdot\bm t'}
(S^*)^{-1}(x',\bm t';-z)
\big)
=0
\,.
\end{equation}
A straightforward computation leads to
\begin{equation}\label{eq:exp}
e^{z\cdot\bm t}
e^{-z\cdot{\bm t'}}=\iota_w\big(1-\frac zw\big)^{-1}e^{z(x-x')}
\,.
\end{equation}
Hence, equation \eqref{20170910:eq4a} gives
$$
\Res_z\big(
\iota_w(z-w)^{-1}
e^{z(x-x')}
S(x,\bm t;z)
e^{-w^{-1}\cdot\tilde\partial_{\bm t}}
(S^*)^{-1}(x',\bm t;-z)
\big)
=0
\,,
$$
which, by the second equation in \eqref{eq:positive}, gives
\begin{equation}\label{20170910:eq4b}
\Big(
e^{z(x-x')}
S(x,\bm t;z)
e^{-w^{-1}\cdot\tilde\partial_{\bm t}}
(S^*)^{-1}(x',\bm t;-z)
\Big)_-\Big|_{z=w}
=0
\,,
\end{equation}
where the $-$ means that we need to take the negative powers of $z$, before setting $z=w$.
Clearly, we can rewrite \eqref{20170910:eq4b} as
\begin{equation}\label{20170910:eq4c}
\begin{split}
& e^{w(x-x')}
S(x,\bm t;w)
e^{-w^{-1}\cdot\tilde\partial_{\bm t}}
(S^*)^{-1}(x',\bm t;-w) \\
& =
\Big(
e^{z(x-x')}
S(x,\bm t;z)
e^{-w^{-1}\cdot\tilde\partial_{\bm t}}
(S^*)^{-1}(x',\bm t;-z)
\Big)_+\Big|_{z=w}
\,.
\end{split}
\end{equation}
Setting $x'=x$ and recalling that $S(z)\in 1+\mc F[[z^{-1}]]z^{-1}$,
we immediately get claim (a).
Instead, applying $\frac{\partial}{\partial x}$ to both sides of \eqref{20170910:eq4c}
and then setting $x'=x$, the LHS is
$$
\big((w+\partial)S(w)\big)
e^{-w^{-1}\cdot\tilde\partial_{\bm t}}
(S^*)^{-1}(-w) 
=
w+
\frac{\partial S(w)}{S(w)}
=
w+\partial\log S(w)
\,,
$$
where we used part (a).
On the other hand, the RHS of \eqref{20170910:eq4c} gives, since $S(z)=1+s_1z^{-1}+\dots$ and $(S^*)^{-1}(-z)=1-s_1z^{-1}+\dots$,
$$
\Big(
\big((z+\partial)S(z)\big)
e^{-w^{-1}\cdot\tilde\partial_{\bm t}}
(S^*)^{-1}(-z)
\Big)_+\Big|_{z=w}
=
w
+
s_1
-
e^{-w^{-1}\cdot\tilde\partial_{\bm t}} s_1
\,.
$$
Claim (b) follows.
\end{proof}
Comparing claim (b) in Lemma \ref{lem1} and equation \eqref{eq:def-tau1}, we get
$$
\big(e^{-z^{-1}\cdot\tilde\partial_{\bm t}}-1\big)(\partial\log\tau+s_1)=0
\,.
$$
On the other hand, the kernel of $e^{-z^{-1}\cdot\tilde\partial_{\bm t}}-1$ 
contains all functions of $x$, which are independent of ${\bm t}$.
We thus conclude that $\tau$ must satisfy
\begin{equation}\label{eq:partial-tau}
\partial\log\tau
=
-s_1+f
\,,
\end{equation}
where $f$ is such that $\frac{\partial f}{\partial t_k}=0$ for every $k\geq1$,
and we may set it equal to $0$ without loss of generality.

\medskip

Next, we want to derive from \eqref{eq:def-tau1} an equation for $\frac{\partial}{\partial t_k}\log\tau$, $k\geq1$.
We introduce the following differential operator
\begin{equation}\label{eq:N}
N(z)=\frac{\partial}{\partial z}-\sum_{k=1}^{\infty}z^{-k-1}\frac{\partial}{\partial t_k}\,.
\end{equation}
\begin{lemma}\label{lem:N}
$N(z) \big(e^{-z^{-1}\cdot\tilde\partial_{\bm t}} f\big)=0$ for every $f\in\mc F$.
\end{lemma}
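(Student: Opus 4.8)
$N(z)\big(e^{-z^{-1}\cdot\tilde\partial_{\bm t}}f\big)=0$ for every $f\in\mc F$.

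The plan is to verify this by a direct computation, treating the operator $e^{-z^{-1}\cdot\tilde\partial_{\bm t}}$ as a formal power series in $z^{-1}$ with coefficients that are differential operators in the $\partial/\partial t_k$'s, and checking that $N(z)$ annihilates it termwise. First I would write $G(z):=e^{-z^{-1}\cdot\tilde\partial_{\bm t}}=\exp\big(-\sum_{k\geq1}\frac{z^{-k}}{k}\frac{\partial}{\partial t_k}\big)$, acting on $\mc F$ and landing in $\mc F((z^{-1}))$. The key point is that the exponent $E(z):=-\sum_{k\geq1}\frac{z^{-k}}{k}\frac{\partial}{\partial t_k}$ is a (formal) sum of commuting operators, all commuting with $\partial/\partial z$ acting on the $z$-variable only, so $G(z)$ behaves like a one-parameter exponential and $\frac{\partial}{\partial z}G(z)=\big(\frac{\partial E(z)}{\partial z}\big)G(z)=G(z)\big(\frac{\partial E(z)}{\partial z}\big)$.

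The heart of the matter is then the identity $\frac{\partial E(z)}{\partial z}=\sum_{k\geq1}z^{-k-1}\frac{\partial}{\partial t_k}$, which is immediate by differentiating the series $-\sum_{k\geq1}\frac{z^{-k}}{k}\frac{\partial}{\partial t_k}$ term by term: $\frac{\partial}{\partial z}\big(-\frac{z^{-k}}{k}\big)=z^{-k-1}$. Hence, applying $N(z)=\frac{\partial}{\partial z}-\sum_{k\geq1}z^{-k-1}\frac{\partial}{\partial t_k}$ to $G(z)f$,
\begin{equation*}
N(z)\big(G(z)f\big)
=\Big(\frac{\partial E(z)}{\partial z}\Big)G(z)f-\Big(\sum_{k\geq1}z^{-k-1}\frac{\partial}{\partial t_k}\Big)G(z)f
=0\,,
\end{equation*}
where in the last step I use that $G(z)$ commutes with each $\frac{\partial}{\partial t_k}$ (since all the $\frac{\partial}{\partial t_j}$ commute), so that $\big(\sum_{k\geq1}z^{-k-1}\frac{\partial}{\partial t_k}\big)G(z)f=G(z)\big(\sum_{k\geq1}z^{-k-1}\frac{\partial}{\partial t_k}\big)f$, and this equals $\big(\frac{\partial E(z)}{\partial z}\big)G(z)f$ by the identity above.

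I expect the only subtlety — the "main obstacle", though it is a mild one — to be bookkeeping about convergence/well-definedness: one must check that applying $\frac{\partial}{\partial z}$ and the infinite sum $\sum_{k\geq1}z^{-k-1}\frac{\partial}{\partial t_k}$ to the power series $G(z)f\in\mc F((z^{-1}))$ makes sense coefficient-by-coefficient, and that differentiating the exponential series term by term is legitimate. This is handled by noting that, for fixed $f\in\mc F$, the element $G(z)f$ has, in each fixed degree $z^{-n}$, a coefficient that is a finite sum of terms $\frac{1}{m!}\frac{\partial}{\partial t_{j_1}}\cdots\frac{\partial}{\partial t_{j_m}}f$ with $j_1+\dots+j_m=n$ and suitable rational constants; all operations are then finite in each degree and the formal manipulation above is rigorous. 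Alternatively, one can simply compare coefficients of $z^{-n}$ on both sides of $N(z)(G(z)f)=0$ directly, obtaining for each $n\geq1$ a telescoping identity among the operators $\frac{1}{k}\frac{\partial}{\partial t_k}$, which holds by the multinomial recursion for the coefficients of $\exp(E(z))$.
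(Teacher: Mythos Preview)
Your proof is correct and follows essentially the same approach as the paper: differentiate the exponential with respect to $z$ using the chain rule to obtain $\frac{\partial}{\partial z}e^{-z^{-1}\cdot\tilde\partial_{\bm t}}=\big(\sum_{k\geq1}z^{-k-1}\frac{\partial}{\partial t_k}\big)e^{-z^{-1}\cdot\tilde\partial_{\bm t}}$, which is exactly the second term of $N(z)$ applied to the exponential. The paper's argument is a one-line version of yours; your additional remarks on coefficientwise well-definedness are sound and merely make explicit what the paper leaves implicit.
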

\begin{proof}
We have
$$
\frac{\partial}{\partial z}e^{-z^{-1}\cdot\tilde\partial_{\bm t}}
=
\frac{\partial}{\partial z}\exp\{-\sum_k\frac1{kz^k}\frac{\partial}{\partial t_k}\}
=
\Big(\sum_{k=1}^\infty z^{-k-1}\frac{\partial}{\partial t_k}\Big)\circ
e^{-z^{-1}\cdot\tilde\partial_{\bm t}}
\,.
$$
\end{proof}
We then apply the operator $N(z)$ to equation \eqref{eq:def-tau1} and use Lemma \ref{lem:N} to get
\begin{equation}\label{eq:tauz}
N(z)\log(S(z))=\sum_{k=1}^\infty z^{-k-1}\frac{\partial}{\partial t_k}\log\tau
\,.
\end{equation}
Therefore, equation \eqref{20170912:eq1} is equivalent to \eqref{eq:tauz}
which, by looking at the various powers of $z$, is equivalent to the following system of equations:
\begin{equation}\label{eq:dtk-tau}
\frac{\partial}{\partial t_k}\log\tau
=
\Res_zz^kN(z)\log(S(z))=: g_k
\,.
\end{equation}

\medskip

In order to prove Theorem \ref{20170912:cor1},
we only need to show that the system of equations consisting of \eqref{eq:partial-tau} 
and \eqref{eq:dtk-tau} admits a solution $\phi:=\log\tau\in\mc F$, unique up to an additive constant.
Equivalently, since by assumption $\mc F$ is integrable,
we only need to prove that that system is compatible,  i.e.
\begin{equation}\label{eq:proof-tau1}
\partial g_k
=
-\frac{\partial s_1}{\partial t_k}
\,\,\text{ for all } k\geq1
\,,
\end{equation}
and
\begin{equation}\label{eq:proof-tau2}
\frac{\partial g_k}{\partial t_h}
=
\frac{\partial g_h}{\partial t_k}
\,\,,\,\,\,\,
\text{ for every } h,k\geq1
\,.
\end{equation}

Applying $N(z)$ to both sides of the equation in Lemma \ref{lem1}(b) and taking the coefficient of $z^{-k-1}$,
we immediately get \eqref{eq:proof-tau1} thanks to Lemma \ref{lem:N}.
Next, we prove equation \eqref{eq:proof-tau2}. For this, we shall need the following result.
\begin{lemma}\label{lem2}
The following identities hold:
\begin{enumerate}[(a)]
\item
$(e^{-w^{-1}\cdot\tilde\partial_{\bm t}}-1)\log S(z)=
(e^{-z^{-1}\cdot\tilde\partial_{\bm t}}-1)\log S(w)$.
\item
$N(z)N(w)\log S(z)=N(z)N(w)\log S(w)$.
\end{enumerate}
\end{lemma}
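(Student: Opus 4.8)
The plan is to derive part~(a) from the bilinear identity \eqref{eq:bilinear} together with Lemma~\ref{lem1}(a), and then to obtain part~(b) from part~(a) by a purely formal computation with the operators $N(z),N(w)$. Throughout, abbreviate $T_\zeta:=e^{-\zeta^{-1}\cdot\tilde\partial_{\bm t}}$, so that $T_\zeta$ is the ``time shift'' $t_k\mapsto t_k-\frac1{k\zeta^k}$; it commutes with $\log$ and with inversion.

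For part~(a) I would start from the bilinear identity \eqref{eq:bilinear} (with its residue variable renamed $\zeta$) and substitute $x'=x$ together with the double shift $\bm t'=\bm t-[z^{-1}]-[w^{-1}]$, meaning $t_k\mapsto t_k-\frac1{kz^k}-\frac1{kw^k}$. Since $\zeta\cdot(\bm t-\bm t')=-\log(1-\zeta/z)-\log(1-\zeta/w)$, the oscillating and anti-oscillating factors combine (cf.\ \eqref{20170726:eq1}, \eqref{eq:exp}) and the identity becomes
$$
\Res_\zeta\Big(A(\zeta)\,\iota_z(1-\zeta/z)^{-1}\,\iota_w(1-\zeta/w)^{-1}\Big)=0\,,
$$
where $A(\zeta)=S(x,\bm t;\zeta)\cdot T_zT_w\big((S^*)^{-1}(x,\bm t;-\zeta)\big)\in 1+\mc F[[\zeta^{-1}]]\zeta^{-1}$ ($z$, $w$ treated as parameters). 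Using \eqref{eq:positive} and the partial fraction identity $\iota_z(1-\zeta/z)^{-1}\iota_w(1-\zeta/w)^{-1}=\frac{zw}{z-w}\big(\iota_w(w-\zeta)^{-1}-\iota_z(z-\zeta)^{-1}\big)$, this reduces to $A(\zeta)_-\big|_{\zeta=z}=A(\zeta)_-\big|_{\zeta=w}$, i.e.\ (since $A(\zeta)=1+A(\zeta)_-$) to $A(z)=A(w)$.

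To finish part~(a) I would rewrite Lemma~\ref{lem1}(a) as $(S^*)^{-1}(x,\bm s;-\zeta)=S(x,\bm s+[\zeta^{-1}];\zeta)^{-1}$ and apply it to the shifted factors in $A(z)$ and $A(w)$: the factor in $A(z)$ becomes $\big(T_wS(z)\big)^{-1}$ and the one in $A(w)$ becomes $\big(T_zS(w)\big)^{-1}$. Hence $A(z)=A(w)$ reads
$$
S(z)\,\big(T_wS(z)\big)^{-1}=S(w)\,\big(T_zS(w)\big)^{-1}\,,
$$
and applying $\log$ to both sides gives $\log S(z)-T_w\log S(z)=\log S(w)-T_z\log S(w)$, which is exactly the identity of part~(a).

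For part~(b) I would apply $N(z)N(w)$ to the identity of part~(a), using that $N(z)$ and $N(w)$ commute, that $N(z)$ commutes with $T_w$ and $N(w)$ with $T_z$, and that $N(\zeta)\circ T_\zeta=0$ by Lemma~\ref{lem:N}. Applying $N(w)$ to the left side of~(a) kills the $T_w$-term and, since $\log S(z)$ does not involve $w$, leaves $-N(w)\log S(z)=\sum_{k\geq1}w^{-k-1}\frac{\partial}{\partial t_k}\log S(z)$; then $N(z)$ yields $\sum_{k\geq1}w^{-k-1}\frac{\partial}{\partial t_k}N(z)\log S(z)=-N(z)N(w)\log S(z)$. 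By the symmetric computation the right side of~(a) produces $-N(z)N(w)\log S(w)$, whence part~(b). I expect the main obstacle to be the first reduction in part~(a): handling the iterated expansions $\iota_z\iota_w$ coming from the double shift, and keeping straight the interaction of the nested shifts inside $A(\zeta)$ with the substitution $(S^*)^{-1}\leftrightarrow S^{-1}$ of Lemma~\ref{lem1}(a), so as to reach the ``Fay-type'' identity $S(z)\big(T_wS(z)\big)^{-1}=S(w)\big(T_zS(w)\big)^{-1}$ displayed above; after that everything is formal.
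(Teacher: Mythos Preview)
Your proposal is correct and follows essentially the same route as the paper: the paper also specializes the bilinear identity \eqref{eq:bilinear} at $x'=x$ with the double time shift $t_k'=t_k-\frac1{kw_1^k}-\frac1{kw_2^k}$, uses the partial-fraction decomposition (their \eqref{20170910:eq7}) together with \eqref{eq:positive} to reduce to the equality of $A(\zeta)_-$ at $\zeta=w_1$ and $\zeta=w_2$, invokes Lemma~\ref{lem1}(a) to obtain the Fay-type identity $\frac{e^{-w^{-1}\cdot\tilde\partial_{\bm t}}S(z)}{S(z)}=\frac{e^{-z^{-1}\cdot\tilde\partial_{\bm t}}S(w)}{S(w)}$, and then takes logarithms for~(a). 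For~(b) the paper simply says ``applying $N(z)N(w)$ to both sides of~(a) and using Lemma~\ref{lem:N}'', which is exactly your argument.
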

\begin{proof}
Setting $x'=x$ and $t_k'=t_k-\frac1{kw_1^k}-\frac1{kw_2^k}$, we have, as in \eqref{eq:exp},
\begin{equation}\label{20170910:eq7}
e^{z\cdot{\bm t}}e^{-z\cdot{\bm t'}}
=
\frac{w_1w_2}{w_2-w_1}\left(\iota_{w_1}(w_1-z)^{-1}-\iota_{w_2}(w_2-z)^{-1}\right)
\,.
\end{equation}
Applying the bilinear identity \eqref{eq:bilinear} and using equation \eqref{20170910:eq7} we get
\begin{align*}
&0=\res_z\left(\left(\iota_{w_1}(w_1-z)^{-1}-\iota_{w_2}(w_2-z)^{-1}\right)
S(z)e^{-w_1^{-1}\cdot\tilde\partial_{\bm t}}e^{-w_2^{-1}\cdot\tilde\partial_{\bm t}}(S^*)^{-1}(-z)\right)
\\
&=\left(S(z)e^{-w_1^{-1}\cdot\tilde\partial_{\bm t}}e^{-w_2^{-1}\cdot\tilde\partial_{\bm t}}(S^*)^{-1}(-z)\right)_-\Big|_{z=w_2} \\
&\qquad -\left(S(z)e^{-w_1^{-1}\cdot\tilde\partial_{\bm t}}e^{-w_2^{-1}\cdot\tilde\partial_{\bm t}}(S^*)^{-1}(-z)\right)_-\Big|_{z=w_1}
\\
&=S(w_2)e^{-w_1^{-1}\cdot\tilde\partial_{\bm t}}\frac{1}{S(w_2)}
-S(w_1)e^{-w_2^{-1}\cdot\tilde\partial_{\bm t}}\frac{1}{S(w_1)}
\,.
\end{align*}
In the second equality we used the second equation in \eqref{eq:positive}.
The $-$ sign here means that we need to take negative powers in $z$ \emph{before} 
substituting $z=w_2$ in the first parenthesis
and $z=w_1$ in the second.
In the third equality we used Lemma \ref{lem1}(a).
We rewrite the above identity as
$$
\frac{e^{-w^{-1}\cdot\tilde\partial_{\bm t}}S(z)}{S(z)}=\frac{e^{-z^{-1}\cdot\tilde\partial_{\bm t}}S(w)}{S(w)}
$$
and taking logarithm of both sides, we get claim (a).
Moreover, applying $N(z)N(w)$ to both sides of (a) and using Lemma \ref{lem:N}, we get claim (b).
\end{proof}

We can finally prove equation \eqref{eq:proof-tau2}.
We have, by the definition \eqref{eq:dtk-tau} of $g_k$ and Lemma \ref{lem2}(b),
\begin{align*}
&\sum_{h,k=1}^\infty\left(\frac{\partial g_h}{\partial t_k}-\frac{\partial g_k}{\partial t_h}\right)z^{-h-1}w^{-k-1}
\\
&=\sum_{k=1}^\infty w^{-k-1}\frac{\partial}{\partial t_k}\left(\sum_{h=1}^{\infty}g_hz^{-h-1}\right)
-\sum_{h=1}^\infty z^{-h-1}\frac{\partial}{\partial t_h}\left(\sum_{k=1}^{\infty}g_kw^{-k-1}\right)
\\
&=-N(w)N(z)\log S(z)+N(z)N(w)\log S(w)=0
\,.
\end{align*}
This proves \eqref{eq:proof-tau2} and concludes the proof of Theorem \ref{20170912:cor1}.

%%%
\subsection{Tau-functions of BKP and CKP type}

We next want to construct tau-functions when $\mc Z_L=2\mb Z_{\geq0}+1$.
The generic situation of this is provided by the BKP and CKP hierarchies, studied in \cite{DJKM81,DJKM83,DMH09,CW13,KZ20}
and other papers.
Throughout this section we assume that 
the algebra of functions $\mc F$, in the space variable $x$ and odd time variables $t_k$, $k\in\mc Z_L$,
is integrable (cf. Definition \ref{def:integrable}).
Recall the oscillating/antioscillating functions $e^{\pm z\cdot\bm t}=e^{\pm(zx+\sum_{k\in\mc Z_L}z^kt_k)}$,
and introduce the operator
$$
e^{-2z^{-1}\cdot\tilde{\partial}_{\bm t}}:\mc F\to\mc F((z^{-1}))
\,,
\text{ where }
\,
2z^{-1}\cdot\tilde{\partial}_{\bm t}=2\sum_{k\in\mc Z_L}\frac{z^{-k}}{k}\frac{\partial}{\partial t_k}
\,.
$$
By Taylor expansion, it ``shifts'' the time variables $t_k\mapsto t_k-\frac2{kz^k}$, $k\in\mc Z_L$.
Fix a monic pseudodifferential operator 
$S(\partial)=1+\sum_{i=1}^{\infty}s_i\partial^{-i}\in1+\mc F[[\partial^{-1}]]\partial^{-1}$
of order $0$. 
Let $w(z)=S(z)e^{z\cdot{\bm t}}\in\mc F((z^{-1}))e^{z\cdot{\bm t}}$ be the corresponding oscillating function,
and let $w^\star(z)=(S^*)^{-1}(-z)e^{-z\cdot{\bm t}}$ be the corresponding adjoint anti-oscillating function.

The analogue of Theorem \ref{20170912:cor1} in the case when $\mc Z_L$ consists of odd integers
is the following.
\begin{theorem}\label{20170912:cor1-BC}
Suppose that the bilinear identity \eqref{eq:bilinear} holds,
and that $S(\partial)$ satisfies the following condition
\begin{equation}\label{eq:SSstar}
S^*(\partial)=\partial^nS(\partial)^{-1}\partial^{-n}\,.
\end{equation}
for $n=0,$ or $1$.
Then, there exists a function $\tau\in\mc F$ such that
\begin{equation}\label{20170912:eq1-BC}
w(z)
=
\big(1+z^{-1}\mc E(z)\big)^{\!\frac{1-n}2}
\,\frac{\Gamma(z)\tau}{\tau}
\,,
\end{equation}
where 
$$
\Gamma(z)=\Gamma(x,{\bm t};z)=e^{z\cdot{\bm t}}e^{-2z^{-1}\cdot\tilde\partial_{\bm t}}:\mc F\to\mc F((z^{-1}))e^{z\cdot{\bm t}}
\,,
$$
and
\begin{equation}\label{eq:mcE}
\mc E(z)
=
(e^{-2z^{-1}\cdot\tilde\partial_{\bm t}}-1)\partial\log\tau
\,\in\mc F[[z^{-1}]]z^{-1}
\,.
\end{equation}
Moreover, the function $\tau\in\mc F$ solving \eqref{20170912:eq1} is unique up to multiplication by a function of $x$
(constant in $\bm t$).
\end{theorem}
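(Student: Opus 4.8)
The plan is to follow closely the proof of Theorem~\ref{20170912:cor1} (which settles the case $\mc Z_L=\mb Z_{\geq1}$), adapting it to the present situation: all the times $t_k$, $k\in\mc Z_L$, are odd, the shift operator carries a factor $2$, and $S(\partial)$ obeys the extra symmetry \eqref{eq:SSstar}. Taking logarithms and using the definition \eqref{eq:mcE} of $\mc E(z)$, equation \eqref{20170912:eq1-BC} is equivalent to
\[
\log S(z)=\frac{1-n}{2}\,\log\bigl(1+z^{-1}\mc E(z)\bigr)+\bigl(e^{-2z^{-1}\cdot\tilde\partial_{\bm t}}-1\bigr)\log\tau\,,
\]
which is the analogue of \eqref{eq:def-tau1}; call this equation $(\star)$. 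As in the KP case, the goal is to rewrite $(\star)$ as a system of equations of type \eqref{20170721:eq2b} for the single unknown $\phi=\log\tau$ --- namely $\partial\phi=-s_1$ together with $\frac{\partial\phi}{\partial t_k}=g_k$ for suitable $g_k\in\mc F$ expressible through the coefficients of $S$ --- to verify that this system is compatible, and then to invoke the integrability of $\mc F$ (Definition~\ref{def:integrable}) to obtain $\phi$, hence $\tau$, uniquely up to an additive, respectively multiplicative, function of $x$. For $n=1$ the square-root factor is absent and the argument is the proof of Theorem~\ref{20170912:cor1} verbatim, with $z^{-1}\cdot\tilde\partial_{\bm t}$ replaced everywhere by $2z^{-1}\cdot\tilde\partial_{\bm t}$ and with sums over $k\in\mc Z_L$ in place of $k\geq1$; the genuinely new case is $n=0$.

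The crucial technical inputs are the BKP/CKP counterparts of Lemmas~\ref{lem1} and \ref{lem2}, obtained from the bilinear identity \eqref{eq:bilinear} together with the constraint \eqref{eq:SSstar}. First I would substitute $t_k'=t_k-\frac{2}{kw^{k}}$, $k\in\mc Z_L$, into \eqref{eq:bilinear}. Since $\sum_{k\in\mc Z_L}\frac1k\,x^{k}=\frac12\log\frac{1+x}{1-x}$, the counterpart of \eqref{eq:exp} is $e^{z\cdot\bm t}e^{-z\cdot\bm t'}=\iota_w\frac{w+z}{w-z}\,e^{z(x-x')}$; writing $\frac{w+z}{w-z}=-1+2w\,\iota_w(w-z)^{-1}$ and arguing as in the proof of Lemma~\ref{lem1}(a), using \eqref{eq:positive} and the symmetry \eqref{eq:SSstar} rewritten as $(S^*)^{-1}(\partial)=\partial^{n}\circ S(\partial)\circ\partial^{-n}$, one derives an explicit formula for $S(z)\,e^{-2z^{-1}\cdot\tilde\partial_{\bm t}}\,(S^*)^{-1}(-z)$. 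This is the analogue of Lemma~\ref{lem1}(a); unlike in the KP case it is not simply $1$, and it is precisely this formula that produces the factor $\bigl(1+z^{-1}\mc E(z)\bigr)^{\frac{1-n}2}$ in \eqref{20170912:eq1-BC}. Differentiating it in $x$ gives the analogue of Lemma~\ref{lem1}(b) for $\partial\log S(z)$; combining this with $(\star)$ and with the fact that $e^{-2z^{-1}\cdot\tilde\partial_{\bm t}}-1$ kills functions of $x$ alone, one obtains $\partial\phi=-s_1$ up to a function of $x$ (which may be set to $0$), the analogue of \eqref{eq:partial-tau}. In particular this makes $\mc E(z)$ intrinsic to $S$, so that the whole right-hand side of $(\star)$ depends only on $S$ once the $x$-equation is imposed. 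Substituting two shifts $t_k'=t_k-\frac{2}{kw_1^{k}}-\frac{2}{kw_2^{k}}$ in \eqref{eq:bilinear} similarly yields the analogue of Lemma~\ref{lem2}, and applying the operator $N(z)=\frac{\partial}{\partial z}-2\sum_{k\in\mc Z_L}z^{-k-1}\frac{\partial}{\partial t_k}$ --- chosen so that $N(z)\bigl(e^{-2z^{-1}\cdot\tilde\partial_{\bm t}}f\bigr)=0$, the analogue of Lemma~\ref{lem:N} --- to $(\star)$ gives the equations $\frac{\partial\phi}{\partial t_k}=g_k$, the analogue of \eqref{eq:dtk-tau}, with the $g_k$ read off from the coefficients of $N(z)$ applied to the right-hand side of $(\star)$.

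It then remains to verify that the resulting system $\{\partial\phi=-s_1,\ \frac{\partial\phi}{\partial t_k}=g_k,\ k\in\mc Z_L\}$ is compatible in the sense of \eqref{20170721:eq2b}, i.e.\ $\partial g_k=-\frac{\partial s_1}{\partial t_k}$ and $\frac{\partial g_k}{\partial t_h}=\frac{\partial g_h}{\partial t_k}$ for all $h,k\in\mc Z_L$: the first follows by applying $N(z)$ to the analogue of Lemma~\ref{lem1}(b) and using the analogue of Lemma~\ref{lem:N}, the second from the analogue of Lemma~\ref{lem2}(b), exactly as in the proof of Theorem~\ref{20170912:cor1}. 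Since $\mc F$ is integrable, this compatible system admits a solution $\phi=\log\tau\in\mc F$, unique up to an additive function of $x$; unwinding the reduction shows that such $\tau$ indeed satisfies \eqref{20170912:eq1-BC} and is unique up to a multiplicative function of $x$. This proves the theorem.

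The main obstacle is the case $n=0$ (CKP): the work goes into pinning down the precise form of the analogue of Lemma~\ref{lem1}(a) and, in particular, into showing that the constraint \eqref{eq:SSstar} forces $S(z)\,e^{-2z^{-1}\cdot\tilde\partial_{\bm t}}\,(S^*)^{-1}(-z)$ to be a perfect square, namely $\bigl(1+z^{-1}\mc E(z)\bigr)^{1-n}$ with $\mc E(z)=(e^{-2z^{-1}\cdot\tilde\partial_{\bm t}}-1)\partial\log S(z)^{-1}$ intrinsic to $S$ (via the equation $\partial\phi=-s_1$). Only once this structural identity is in place can the half-integer power in \eqref{20170912:eq1-BC} be given meaning --- as a formal square root of a series starting with $1$ --- and shown to be consistent with the reduction to the linear system above; every subsequent step is then a routine adaptation of the computations in the KP proof. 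For $n=1$ that same product is simply $1$ and there is nothing to extract.
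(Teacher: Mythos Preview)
Your plan follows the paper's proof almost exactly: reduce \eqref{20170912:eq1-BC} to the logarithmic form $(\star)$, derive the analogue of Lemma~\ref{lem1}, define $N(z)=\partial_z-2\sum_{k\in\mc Z_L}z^{-k-1}\partial_{t_k}$, extract a compatible system for $\phi=\log\tau$, and invoke integrability of $\mc F$. But several details are wrong or misjudged.

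First, the $x$-equation is $\partial\phi=-\tfrac12 s_1$, not $\partial\phi=-s_1$. The analogue of Lemma~\ref{lem1}(a) is, for either value of $n$,
\[
S(z)\,e^{-2z^{-1}\cdot\tilde\partial_{\bm t}}\,(S^*)^{-1}(-z)=1-\tfrac12 z^{-1}\mc E_1(z)\,,\qquad \mc E_1(z):=(e^{-2z^{-1}\cdot\tilde\partial_{\bm t}}-1)s_1\,,
\]
and the analogue of Lemma~\ref{lem1}(b) is a more complicated expression for $\partial\log S(z)$ involving $\mc E_1$ and $\mc E_2$. The constraint \eqref{eq:SSstar} enters \emph{afterwards}: it forces $s_2=\tfrac12 s_1^2-\tfrac{n+1}{2}s_1'$, which simplifies the $\partial\log S(z)$ formula to
\[
\partial\log S(z)=-\tfrac12\,\mc E_1(z)+\tfrac{1-n}{2}\,\partial\log\bigl(1-\tfrac12 z^{-1}\mc E_1(z)\bigr)\,.
\]
Comparing with $\partial(\star)$ gives $\mc E(z)=-\tfrac12\mc E_1(z)$ and hence $\partial\phi=-\tfrac12 s_1$. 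So the product above is $1+z^{-1}\mc E(z)$, not a ``perfect square'' $(1+z^{-1}\mc E(z))^{1-n}$, and in particular it is \emph{not} equal to $1$ when $n=1$; the half-integer power in \eqref{20170912:eq1-BC} is simply the formal square root of a series with constant term $1$, which needs no structural justification.

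Second, you underestimate the $t$-compatibility step. Your proposed ``substitute two shifts and get the analogue of Lemma~\ref{lem2}'' does not go through directly: the double-shift identity derived from \eqref{eq:bilinear} alone no longer yields a clean symmetry in $\log S$, because the right-hand side of the single-shift identity above is $1-\tfrac12 z^{-1}\mc E_1(z)$ rather than $1$. The paper treats the two cases separately. For $n=0$ one uses $(S^*)^{-1}=S$ to pass to a second identity with $w_2\mapsto -w_2$, multiplies the two, and after nontrivial algebra obtains a squared relation whose logarithm, hit by $N(w_1)N(w_2)$, gives the needed symmetry. For $n=1$ the argument is not the KP proof verbatim: one first shows that \eqref{eq:bilinear} together with \eqref{eq:SSstar} is equivalent to the BKP bilinear identity $\Res_z\bigl(w(\bm t;z)w(\bm t';-z)z^{-1}\bigr)=1$, which in turn yields $S(w)\,e^{-2w^{-1}\cdot\tilde\partial_{\bm t}}\,S(-w)=1$ and then the clean analogue of Lemma~\ref{lem2}(a). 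Your outline is right, but the claim that $n=1$ is routine and $n=0$ merely requires ``pinning down'' one formula misallocates the difficulty.
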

\begin{lemma}\label{lem:20210714}
Let $L(\partial)\in\mc V((\partial^{-1}))$ be as in \eqref{eq:L}, satisfying the constraint
\begin{equation}\label{eq:LLstar}
L^*(\partial)=(-1)^N\partial^nL(\partial)\partial^{-n}
\,\,\,\, \text{ for } n=0 \text{ or } 1
\,.
\end{equation}
Assume that the dressing equation \eqref{eq:dress} has solution. Then, there exists
$S(\partial)\in1+\mc F[[\partial^{-1}]]\partial^{-1}$
solving the dressing equation and such that condition \eqref{eq:SSstar} holds.
\end{lemma}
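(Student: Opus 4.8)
The plan is to start from an arbitrary solution $S_0(\partial)\in 1+\mc F[[\partial^{-1}]]\partial^{-1}$ of the dressing equation \eqref{eq:dress} (one exists by hypothesis) and then correct it on the right by a pseudodifferential operator with $\partial$-constant coefficients, so as to enforce \eqref{eq:SSstar} without spoiling \eqref{eq:dress}. First I would push the constraint \eqref{eq:LLstar} from $\mc V((\partial^{-1}))$ down to $\mc F((\partial^{-1}))$ (the evaluation map commutes with $\partial$ and with the adjoint), so that $L(\varphi;\partial)^*=(-1)^N\partial^nL(\varphi;\partial)\partial^{-n}$. Substituting $L(\varphi;\partial)=S_0\partial^NS_0^{-1}$ and using that the adjoint is an anti-automorphism with $(\partial^N)^*=(-1)^N\partial^N$, the left-hand side becomes $(S_0^*)^{-1}(-1)^N\partial^NS_0^*$ and the right-hand side becomes $(-1)^N(\partial^nS_0)\partial^N(\partial^nS_0)^{-1}$. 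Cancelling $(-1)^N$ and rearranging, the operator $C_0(\partial):=(S_0^*\partial^nS_0)^{-1}=(\partial^nS_0)^{-1}(S_0^*)^{-1}$ commutes with $\partial^N$. Since $\mb F$ has characteristic zero, a standard comparison of coefficients shows that any pseudodifferential operator commuting with $\partial^N$ has all coefficients annihilated by $\partial$, and in particular commutes with $\partial$. Hence $D(\partial):=S_0^*\partial^nS_0$ is a monic pseudodifferential operator of order $n$ with $\partial$-constant coefficients.

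The next step is to record the self-symmetry of $D$: from $D=S_0^*\partial^nS_0$ one gets $D^*=S_0^*(-1)^n\partial^nS_0=(-1)^nD$. Writing $D=\partial^n(1+d_1\partial^{-1}+d_2\partial^{-2}+\cdots)$ with each $d_i$ killed by $\partial$, and using that the $d_i$ commute with $\partial$, the identity $D^*=(-1)^nD$ forces $d_i=0$ for all odd $i$. Thus $D=\partial^n\,g(\partial^{-2})$, where $g(y)=1+d_2y+d_4y^2+\cdots$ is a formal power series with $g(0)=1$ and coefficients in $\ker(\partial)\subset\mc F$. Because $\mb F$ has characteristic zero, $g$ admits a unique square root $h(y)=1+\cdots$ in this power series ring, $h^2=g$. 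I would then set $C(\partial):=h(\partial^{-2})^{-1}=(1/h)(\partial^{-2})$: a monic pseudodifferential operator of order $0$ involving only even negative powers of $\partial$, with $\partial$-constant coefficients. Consequently $C$ commutes with $\partial$ (hence with $\partial^N$), $C^*=C$ (since $(-\partial)^{-2}=\partial^{-2}$, the coefficients are $\partial$-constant, and $\mc F$ is commutative), and $C$ commutes with $D=\partial^n g(\partial^{-2})$.

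Finally, I would take $S(\partial):=S_0(\partial)C(\partial)\in 1+\mc F[[\partial^{-1}]]\partial^{-1}$. Because $C$ commutes with $\partial^N$, we have $S\partial^NS^{-1}=S_0(C\partial^NC^{-1})S_0^{-1}=S_0\partial^NS_0^{-1}=L(\varphi;\partial)$, so $S$ still solves the dressing equation. And, using $C^*=C$ and that $C$ commutes with $D$,
\[
S^*\partial^nS=C^*(S_0^*\partial^nS_0)C=C^*DC=DC^2=\partial^n\,g(\partial^{-2})\,(1/g)(\partial^{-2})=\partial^n,
\]
which, after right-multiplication by $S^{-1}$ and then by $\partial^{-n}$, is exactly the condition $S^*(\partial)=\partial^nS(\partial)^{-1}\partial^{-n}$ of \eqref{eq:SSstar}. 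The part needing the most care is the structural claim of the first two paragraphs, namely that $S_0^*\partial^nS_0$ is forced to be a $\partial$-constant operator of the special form $\partial^ng(\partial^{-2})$; once this is in hand, the corrective factor is simply the inverse square root of the power series $g$, and the remaining verifications are routine manipulations in $\mc F((\partial^{-1}))$.
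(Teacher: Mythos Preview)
Your proof is correct and follows essentially the same route as the paper's: show that the ``defect'' $D=S_0^*\partial^nS_0$ (equivalently, the paper's $K=\partial^{-n}D^{-1}$) has $\partial$-constant coefficients and is even in $\partial$, then correct $S_0$ on the right by the inverse square root, which is exactly the paper's condition $C(z)C(-z)=K(z)$. The only cosmetic differences are that you prove directly that commuting with $\partial^N$ forces $\partial$-constant coefficients (the paper instead cites Theorem~\ref{thm:sato}), and you keep the coefficients in $\ker(\partial)\subset\mc F$ rather than asserting they lie in $\mb F$; neither affects the argument.
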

\begin{proof}
Let $\tilde S(\partial)\in1+\mc F[[\partial^{-1}]]\partial^{-1}$ be a solution to the dressing equation \eqref{eq:dress}. The constraint
\eqref{eq:LLstar} implies that
$$
L(\partial)=\tilde S(\partial) \partial^N \tilde S^{-1}(\partial)
=\big(\partial^{-n}(\tilde S^*)^{-1}(\partial)\partial^n\big)\partial^N\big(\partial^{-n}(\tilde S^*)^{-1}(\partial)\partial^n\big)^{-1}
\,.
$$
Hence, by Theorem \ref{thm:sato}, we have
\begin{equation}\label{eq:K}
\partial^{-n}(\tilde S^*)^{-1}(\partial)\partial^n=\tilde S(\partial)K(\partial)\,,
\end{equation}
where $K(\partial)\in1+\mb F[[\partial^{-1}]]\partial^{-1}$.
Taking the adjoint of both sides of equation \eqref{eq:K} we get $K^*(\partial)=\partial^nK(\partial)\partial^{-n}=K(\partial)$.
Hence, the symbol of $K(\partial)$ satisfies $K(z)=K(-z)$.
Let $S(\partial)=\tilde S(\partial)C(\partial)$, with $C(\partial)\in1+\mb F[[\partial^{-1}]]\partial^{-1}$ be another solution
to the dressing equation \eqref{eq:dress}. To conclude the proof we need to show that there exists $C(\partial)$ such that
the constraint \eqref{eq:SSstar} holds. Replacing $S(\partial)=\tilde S(\partial)C(\partial)$ in \eqref{eq:SSstar}, we get
that it is equivalent to the condition
$$
C^*(\partial)\tilde S^*(\partial)=\partial^nC^{-1}(\partial)\tilde S^{-1}(\partial)\partial^{-n}
\!=C^{-1}(\partial)\partial^n\tilde S^{-1}(\partial)\partial^{-n}\!=C^{-1}(\partial)K^*(\partial)\tilde S^*(\partial)\,.
$$
In the last equality we used the adjoint of equation \eqref{eq:K}. The claim follows from the above equation and the fact that
$C(z)C(-z)=K(z)$ has a solution in $1+\mb F[[z^{-1}]]z^{-1}$ since $K(z)=K(-z)$.
\end{proof}
As an immediate corollary of Theorem \ref{20170912:cor1-BC}, Lemma \ref{lem:20210714} and Corollary \ref{cor:equiv},
we get the following
\begin{corollary}\label{20170912:cor2-BCKP}
Let $L(\partial)\in\mc V((\partial^{-1}))$ be as in \eqref{eq:L}, satisfying the constraint \eqref{eq:LLstar}.
Let $\varphi=(\varphi_\alpha)_{\alpha\in I}$, $\varphi_\alpha\in\mc F$,
be a solution of the hierarchy of Lax equations \eqref{eq:lax-sol} such that $a_0(\varphi)=0$.
Then, there exists $\tau\in\mc F$
such that
\begin{equation}\label{eq:cor2-BC}
L(\varphi;\partial)=S(\partial)\partial^N S(\partial)^{-1}
\,\,\text{ with }\,\,
S(z)=\big(1+z^{-1}\mc E(z)\big)^{\!\frac{1-n}2}\frac{e^{-2z^{-1}\cdot\tilde\partial_{\bm t}}\tau}{\tau}
\,.
\end{equation}
\end{corollary}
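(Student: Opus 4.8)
The plan is to mimic the proof of Corollary~\ref{20170912:cor2} (the $\mc Z_L=\mb Z_{\geq1}$ case), inserting Lemma~\ref{lem:20210714} to obtain a dressing operator adapted to the adjointness constraint. Concretely: since $\mc F$ is integrable, $\varphi$ solves the hierarchy~\eqref{eq:lax-sol}, and $a_0(\varphi)=0$, Theorem~\ref{thm:sato}(b) provides a monic order-$0$ pseudodifferential operator $\tilde S(\partial)\in 1+\mc F[[\partial^{-1}]]\partial^{-1}$ satisfying both the dressing equation~\eqref{eq:dress} and the Sato equations~\eqref{eq:sato}. In particular \eqref{eq:dress} has a solution, so the hypothesis of Lemma~\ref{lem:20210714} is in force (the constraint~\eqref{eq:LLstar} on $L$ is assumed), and that lemma yields $S(\partial)\in 1+\mc F[[\partial^{-1}]]\partial^{-1}$ which still solves \eqref{eq:dress} and now also satisfies \eqref{eq:SSstar} with the same $n\in\{0,1\}$.

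The only point needing a word is that this new $S$ still satisfies the bilinear identity~\eqref{eq:bilinear}. By the construction in the proof of Lemma~\ref{lem:20210714}, $S(\partial)=\tilde S(\partial)C(\partial)$ with $C(\partial)\in 1+\mb F[[\partial^{-1}]]\partial^{-1}$ of constant coefficients; since $\frac{\partial C(\partial)}{\partial t_k}=0$ and $C(\partial)$ commutes with $\partial$, right multiplication by $C(\partial)$ carries a solution of $\frac{\partial \tilde S(\partial)}{\partial t_k}=-B_k^{(-)}(\varphi;\partial)\tilde S(\partial)$ to a solution of the same equation for $S(\partial)$ (equivalently, condition (i) of Theorem~\ref{thm:bilinear} is manifestly invariant under $S\mapsto SC$ with $C$ constant-coefficient). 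Thus $S$ satisfies the dressing equation and the Sato equations, so by Corollary~\ref{cor:equiv} the bilinear equation~\eqref{eq:bilinear} holds for $w(z)=S(z)e^{z\cdot\bm t}$.

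Finally I would apply Theorem~\ref{20170912:cor1-BC} to this $S$: since \eqref{eq:bilinear} and \eqref{eq:SSstar} both hold, there is $\tau\in\mc F$ (unique up to a factor depending only on $x$) with $w(z)=\big(1+z^{-1}\mc E(z)\big)^{\frac{1-n}{2}}\Gamma(z)\tau/\tau$. Dividing out the common factor $e^{z\cdot\bm t}$ and using $\Gamma(z)\tau=e^{z\cdot\bm t}e^{-2z^{-1}\cdot\tilde\partial_{\bm t}}\tau$ gives $S(z)=\big(1+z^{-1}\mc E(z)\big)^{\frac{1-n}{2}}e^{-2z^{-1}\cdot\tilde\partial_{\bm t}}\tau/\tau$; together with the dressing equation $L(\varphi;\partial)=S(\partial)\partial^NS(\partial)^{-1}$ this is precisely~\eqref{eq:cor2-BC}. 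There is no real obstacle: the statement is a concatenation of Theorem~\ref{thm:sato}(b), Lemma~\ref{lem:20210714}, Corollary~\ref{cor:equiv} and Theorem~\ref{20170912:cor1-BC}, the only subtlety being the compatibility of the $S$ produced by Lemma~\ref{lem:20210714} with the Sato/bilinear equations, dealt with as above.
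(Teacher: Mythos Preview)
Your proposal is correct and follows essentially the same route as the paper: invoke Theorem~\ref{thm:sato}(b) to get a dressing operator, apply Lemma~\ref{lem:20210714} to adjust it so that \eqref{eq:SSstar} holds, and then apply Theorem~\ref{20170912:cor1-BC}. In fact you are more careful than the paper on one point: the paper's proof silently assumes that the $S$ produced by Lemma~\ref{lem:20210714} still satisfies the bilinear identity needed in the hypotheses of Theorem~\ref{20170912:cor1-BC}, whereas you explicitly justify this by observing that $S=\tilde S C$ with $C$ constant-coefficient, so the Sato equations (hence, via Corollary~\ref{cor:equiv}, the bilinear equation) are preserved.
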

\begin{proof}
By Theorem \ref{thm:sato}(b), there exists $S(\partial)\in1+\mc F[[\partial^{-1}]]\partial^{-1}$
satisfying the dressing equation \eqref{eq:dress}.
Since $L(\partial)$ satisfies the constraint \eqref{eq:LLstar}, the corresponding $S(\partial)$
can be chosen so that \eqref{eq:SSstar} holds by Lemma \ref{lem:20210714}.
Hence by Theorem \ref{20170912:cor1-BC} there exists $\tau\in\mc F$
such that \eqref{eq:cor2-BC} holds.
%
%\pecetta{
%to check --- xxxyyyzzz
%}
\end{proof}

The remainder of this section will be devoted to the proof of Theorem \ref{20170912:cor1-BC}.
First note that equation \eqref{20170912:eq1-BC} defining $\tau$,
which we want to solve,
is obviously equivalent to the  the equation relating $S(z)$ and $\tau$ in \eqref{eq:cor2-BC}.
Applying the logarithm to both sides of the latter, we get, as we did in \eqref{eq:def-tau1},
\begin{equation}\label{eq:def-tau1-BC}
\log(S(z))
=
\big(e^{-2z^{-1}\cdot\tilde\partial_{\bm t}}-1\big)\log\tau
+
\frac{1-n}2 \log \big(1+z^{-1}\mc E(z)\big)
\,.
\end{equation}
Follow the line of reasoning used for the proof of Theorem \ref{20170912:cor1},
the proof of Theorem \ref{20170912:cor1-BC} will be obtained as follows.
From equation \eqref{eq:def-tau1-BC} we shall derive an (equivalent) system of partial differential equations
of type \eqref{20170721:eq2} for $\phi=\log\tau$,
depending on a function $f$ of $x$.
We will then prove that such a system is compatible. 
The claim of Theorem \ref{20170912:cor1-BC} will then be a consequence of 
the integrability assumption on $\mc F$.

\medskip

In analogy with \eqref{eq:mcE}, for $i\geq1$ we denote
\begin{equation}\label{eq:mcEi}
\mc E_i(z)
=
(e^{-2z^{-1}\cdot\tilde\partial_{\bm t}}-1)s_i
\,.
\end{equation}
In order to derive from \eqref{eq:def-tau1-BC} an equation for $\partial\log\tau$,
we need the following analogue of Lemma \ref{lem1}.
\begin{lemma}\label{lem1-BC}
\begin{enumerate}[a)]
\item
The following identity holds:
\begin{equation}\label{eq:lem1-BC-a}
S(z)e^{-2z^{-1}\cdot\tilde{\partial}_{\bm t}}(S^*)^{-1}(-z)
=
1-\frac12z^{-1}\mc E_1(z)
\,.
\end{equation}
\item
The following identity holds:
\begin{equation}\label{eq:lem1-BC-b}
\partial\log S(z)
=
-\frac12\mc E_1(z)
-\frac12 z^{-1}
\big(1-\frac12z^{-1}\mc E_1(z)\big)^{-1}
\big(
\mc E_2(z)+\partial\mc E_1(z)-s_1\mc E_1(z)-\frac12\mc E_1(z)^2
\big)
\,.
\end{equation}
\end{enumerate}
\end{lemma}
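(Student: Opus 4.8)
The plan is to follow, line by line, the proof of Lemma~\ref{lem1}; the only genuinely new ingredient is the generating--function factor produced when we substitute shifted times into the bilinear identity. Concretely, I would set $t_k'=t_k-\frac{2}{kw^k}$ for all $k\in\mc Z_L$ in \eqref{eq:bilinear}, keeping $x'$ free, exactly as in the passage to \eqref{20170910:eq4a}. Since now $\mc Z_L=2\mb Z_{\geq0}+1$ consists of odd integers, $2\sum_{k\in\mc Z_L}\frac1k\xi^k=\log\frac{1+\xi}{1-\xi}$, so in place of the factor $\iota_w\frac{w}{w-z}$ of the KP case one gets
\[
e^{z\cdot\bm t}e^{-z\cdot\bm t'}=e^{z(x-x')}\,\iota_w\tfrac{w+z}{w-z}\,,
\]
and the bilinear identity becomes (using that $\frac{\partial}{\partial t_k}$ acts on anti-oscillating functions by shifting times)
\[
\Res_z\Big(\iota_w\tfrac{w+z}{w-z}\,e^{z(x-x')}S(x,\bm t;z)\,e^{-2w^{-1}\cdot\tilde\partial_{\bm t}}[(S^*)^{-1}(x',\bm t;-z)]\Big)=0\,.
\]

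Writing $\iota_w\frac{w+z}{w-z}=-(w+z)\,\iota_w(z-w)^{-1}$ and applying the second formula of \eqref{eq:positive} to the Laurent series $A(z):=(w+z)e^{z(x-x')}S(x,\bm t;z)e^{-2w^{-1}\cdot\tilde\partial_{\bm t}}[(S^*)^{-1}(x',\bm t;-z)]$, one finds that the part of $A(z)$ consisting of negative powers of $z$ vanishes after the substitution $z=w$. This is the BKP/CKP analogue of \eqref{20170910:eq4b}; isolating that part and reorganizing exactly as in the passage to \eqref{20170910:eq4c} yields an identity relating the full product $e^{w(x-x')}S(x,\bm t;w)e^{-2w^{-1}\cdot\tilde\partial_{\bm t}}[(S^*)^{-1}(x',\bm t;-w)]$ to the nonnegative-in-$z$ part of the same product before setting $z=w$, carrying along the extra polynomial factor $(w+z)$.

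For part (a), I would set $x'=x$. Then $\mc A(z):=S(z)e^{-2w^{-1}\cdot\tilde\partial_{\bm t}}[(S^*)^{-1}(-z)]$ lies in $1+\mc F[[z^{-1}]]z^{-1}$, and its coefficient of $z^{-1}$ equals $s_1-e^{-2w^{-1}\cdot\tilde\partial_{\bm t}}s_1=-\mc E_1(w)$; here one uses that the symbol of $(S^*)^{-1}$ has the form $1+s_1\partial^{-1}+\cdots$ (equivalently, \eqref{eq:SSstar} lets one write $(S^*)^{-1}(-z)$ as the symbol of $\partial^nS(\partial)\partial^{-n}$ evaluated at $-z$). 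Hence the nonnegative-in-$z$ part of $(w+z)\mc A(z)$ is $w+z-\mc E_1(w)$, while its value at $z=w$ is $2w\,\mc A(w)$; since these agree by the vanishing of the preceding paragraph, $\mc A(w)=1-\tfrac1{2w}\mc E_1(w)$, which is \eqref{eq:lem1-BC-a}.

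For part (b), I would apply $\frac{\partial}{\partial x}$ to the identity of the second paragraph and then set $x'=x$. Using part (a) to rewrite $e^{-2w^{-1}\cdot\tilde\partial_{\bm t}}[(S^*)^{-1}(-w)]=S(w)^{-1}\bigl(1-\tfrac1{2w}\mc E_1(w)\bigr)$, the left-hand side becomes $2w\,(w+\partial\log S(w))\bigl(1-\tfrac1{2w}\mc E_1(w)\bigr)$, exactly as in Lemma~\ref{lem1}(b) but dressed with the extra factor $1-\tfrac1{2w}\mc E_1$. The right-hand side involves only the coefficients of $z^0,z^{-1},z^{-2}$ in $S(z)e^{-2w^{-1}\cdot\tilde\partial_{\bm t}}[(S^*)^{-1}(-z)]$ and in $(\partial S(z))e^{-2w^{-1}\cdot\tilde\partial_{\bm t}}[(S^*)^{-1}(-z)]$; re-expressing these through $\mc E_1(w)$, $\mc E_2(w)$, $s_1$ and $x$-derivatives via \eqref{eq:mcEi} and solving for $\partial\log S(w)$ produces \eqref{eq:lem1-BC-b}. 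The only real obstacle is this last piece of bookkeeping: correctly combining the extra $(w+z)$ factor with the ``take negative powers of $z$, then set $z=w$'' operation, and recognizing the $z^{-2}$-coefficient of the relevant product as the combination $\mc E_2+\partial\mc E_1-s_1\mc E_1-\tfrac12\mc E_1^2$; everything else transcribes from the KP argument.
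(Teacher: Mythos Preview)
Your proposal is correct and follows essentially the same route as the paper: substitute $t_k'=t_k-\tfrac{2}{kw^k}$ in the bilinear identity to produce the factor $(w+z)\,\iota_w(w-z)^{-1}$, use \eqref{eq:positive} to get the analogue of \eqref{20170910:eq4c}, then set $x'=x$ for (a) and apply $\partial_x$ before setting $x'=x$ for (b). One small correction: the expansion $(S^*)^{-1}(-z)=1-s_1z^{-1}+(s_1^2-s_2-s_1')z^{-2}+\cdots$ (the paper's \eqref{eq:A}) is a direct computation valid for any monic $S(\partial)$ of order $0$ and does not require \eqref{eq:SSstar}; that constraint is only invoked later (Lemma~\ref{lem1-BC-b}) to simplify \eqref{eq:lem1-BC-b}, so you should drop that parenthetical justification.
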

\begin{proof}
In analogy with the proof of Lemma \ref{lem1} we set $t_k'=t_k-\frac{2}{kw^k}$, $k\geq1$, 
in the bilinear identity \eqref{eq:bilinear} or, equivalently, \eqref{20170910:eq4a}.
A straightforward computation leads to
\begin{equation}\label{eq:exp-odd}
e^{z\cdot\bm t}
e^{-z\cdot{\bm t'}}=\big(1+\frac{z}{w}\big)\iota_w\big(1-\frac zw\big)^{-1}e^{z(x-x')}
\,.
\end{equation}
Hence, equation \eqref{20170910:eq4a} gives
$$
\Res_z\big(
(w+z)\iota_w(w-z)^{-1}
e^{z(x-x')}
S(x,\bm t;z)
e^{-2w^{-1}\cdot\tilde\partial_{\bm t}}
(S^*)^{-1}(x',\bm t;-z)
\big)
=0
\,,
$$
which, by the second equation in \eqref{eq:positive}, gives
\begin{equation}\label{20170910:eq4b-odd}
\Big(
(w+z)e^{z(x-x')}
S(x,\bm t;z)
e^{-2w^{-1}\cdot\tilde\partial_{\bm t}}
(S^*)^{-1}(x',\bm t;-z)
\Big)_-\Big|_{z=w}
=0
\,,
\end{equation}
where, as in \eqref{20170910:eq4b}, the $-$ means that we need to take the negative powers of $z$, before setting $z=w$.
We rewrite \eqref{20170910:eq4b-odd} as
\begin{equation}\label{20170910:eq4c-odd}
\begin{split}
& 2we^{w(x-x')}
S(x,\bm t;w)
e^{-2w^{-1}\cdot\tilde\partial_{\bm t}}
(S^*)^{-1}(x',\bm t;-w) \\
& =
\Big((w+z)
e^{z(x-x')}
S(x,\bm t;z)
e^{-2w^{-1}\cdot\tilde\partial_{\bm t}}
(S^*)^{-1}(x',\bm t;-z)
\Big)_+\Big|_{z=w}
\,.
\end{split}
\end{equation}
A straightforward computation shows that, if $S(z)=1+s_1z^{-1}+s_2z^{-2}+\dots$, then
\begin{equation}\label{eq:A}
(S^*)^{-1}(-z)
=
1-s_1z^{-1}+(s_1^2-s_2-s_1')z^{-2}+\dots
\,.
\end{equation}
Setting $x'=x$ in \eqref{20170910:eq4c-odd}, the RHS becomes
$$
\Big(
(w+z)S(z)
e^{-2w^{-1}\cdot\tilde\partial_{\bm t}}
(S^*)^{-1}(-z)
\Big)_+\Big|_{z=w}
=
2w
+
s_1
-
e^{-w^{-1}\cdot\tilde\partial_{\bm t}} s_1
=
2w-\mc E_1(w)
\,.
$$
Claim (a) follows.
Next, we apply $\frac{\partial}{\partial x}$ to both sides of \eqref{20170910:eq4c-odd}
and then we set $x'=x$. The LHS gives
$$
2w\big((w+\partial)S(w)\big)
e^{-2w^{-1}\cdot\tilde\partial_{\bm t}}
(S^*)^{-1}(-w) 
\,,
$$
which, by part (a), is
\begin{equation}\label{20210628:eq1}
2w^2
-
w\mc E_1(w)
+
2w
\big(1-\frac12w^{-1}\mc E_1(w)\big)
\partial\log S(w)
\,.
\end{equation}
On the other hand, the RHS of \eqref{20170910:eq4c-odd} gives, using \eqref{eq:A},
\begin{align*}
&\Big((w+z)
\big((z+\partial)S(z)\big)
e^{-2w^{-1}\cdot\tilde\partial_{\bm t}}
(S^*)^{-1}(-z)
\Big)_+\Big|_{z=w}
\\
& =
2w^2
-
2w
\mc E_1(w)
-\mc E_2(w)
-\partial\mc E_1(w)
+
e^{-2w^{-1}\cdot\tilde\partial_{\bm t}}
s_1^2
-
s_1
e^{-2w^{-1}\cdot\tilde\partial_{\bm t}}
s_1
\,,
\end{align*}
which can be rewritten as 
\begin{equation}\label{20210628:eq2}
2w^2
-
2w
\mc E_1(w)
-\mc E_2(w)
-\partial\mc E_1(w)
+
\mc E_1(w)^2+s_1\mc E_1(w)
\,,
\end{equation}
since, obviously, 
$e^{-2w^{-1}\cdot\tilde\partial_{\bm t}} s_1^2 = \big(e^{-2w^{-1}\cdot\tilde\partial_{\bm t}}s_1\big)^2=(\mc E_1(w)+s_1)^2$.
Equating \eqref{20210628:eq1} and \eqref{20210628:eq2},
a straightforward computation leads to \eqref{eq:lem1-BC-b}.
\end{proof}
So far we did not use the assumption that $S(\partial)$ satisfies the condition \eqref{eq:SSstar}.
If we do, the equation \eqref{eq:lem1-BC-b} greatly simplifies.
\begin{lemma}\label{lem1-BC-b}
Assume that $S(\partial)$ satisfies \eqref{eq:SSstar} for some $n\in\mb Z$.
Then, the following equations holds
\begin{enumerate}[(a)]
\item
$s_2=\frac12 s_1^2-\frac{n+1}{2}s_1'$.
\item
$\mc E_2(z)
=
\frac12\mc E_1(z)^2+s_1\mc E_1(z)-\frac{n+1}2\partial\mc E_1(z)$.
\item
Equation \eqref{eq:lem1-BC-b} reduces to
\begin{equation}\label{eq:lem1-BC-c}
\partial\log S(z)
=
-\frac12\mc E_1(z)
+
\frac{1-n}2
\partial \log \big(1-\frac12z^{-1}\mc E_1(z)\big)
\,.
\end{equation}
\end{enumerate}
\end{lemma}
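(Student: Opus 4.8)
The plan is to prove the three parts in turn, each following from the previous one by a direct computation. For part~(a), I would rewrite the constraint \eqref{eq:SSstar} in the inverted form $(S^*)^{-1}(\partial)=\partial^n\circ S(\partial)\circ\partial^{-n}$ and compare the coefficients of $\partial^{-1}$ and $\partial^{-2}$ on the two sides, using $S(\partial)=1+s_1\partial^{-1}+s_2\partial^{-2}+\dots$. On the left, the expansion \eqref{eq:A}, read (after the substitution $z\mapsto-z$) as an expansion of the symbol of $(S^*)^{-1}$, gives $(S^*)^{-1}(\partial)=1+s_1\partial^{-1}+(s_1^2-s_2-s_1')\partial^{-2}+\dots$. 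On the right, conjugation by $\partial^n$ sends $f\partial^{-k}$ to $\sum_{j\geq0}\binom{n}{j}f^{(j)}\partial^{-k-j}$, with generalized binomial coefficients (only $j=0,1$ matter here, so this is unproblematic for any $n\in\mb Z$), whence $\partial^n\circ S(\partial)\circ\partial^{-n}=1+s_1\partial^{-1}+(s_2+ns_1')\partial^{-2}+\dots$. The $\partial^{-1}$ coefficients agree automatically, and equating the $\partial^{-2}$ coefficients gives $2s_2=s_1^2-(n+1)s_1'$, which is part~(a).

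For part~(b) I would apply the substitution operator $e^{-2z^{-1}\cdot\tilde{\partial}_{\bm t}}$, which is an $\mb F$-algebra homomorphism $\mc F\to\mc F((z^{-1}))$ commuting with $\partial$, to the identity of part~(a). Writing $\mc E_i(z)=(e^{-2z^{-1}\cdot\tilde{\partial}_{\bm t}}-1)s_i$ as in \eqref{eq:mcEi}, so that $e^{-2z^{-1}\cdot\tilde{\partial}_{\bm t}}s_i=s_i+\mc E_i(z)$, the image identity reads $s_2+\mc E_2(z)=\frac12(s_1+\mc E_1(z))^2-\frac{n+1}{2}\partial(s_1+\mc E_1(z))$; subtracting the identity of part~(a) cancels all the pure-$s_1$ terms and leaves $\mc E_2(z)=\frac12\mc E_1(z)^2+s_1\mc E_1(z)-\frac{n+1}{2}\partial\mc E_1(z)$, which is part~(b).

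For part~(c) I would substitute the formula of part~(b) into \eqref{eq:lem1-BC-b}: the combination $\mc E_2(z)+\partial\mc E_1(z)-s_1\mc E_1(z)-\frac12\mc E_1(z)^2$ occurring there collapses to $\frac{1-n}{2}\partial\mc E_1(z)$. Since $\partial$ is a derivation for which $z$ is a constant, the logarithmic-derivative identity $\partial\log\bigl(1-\frac12z^{-1}\mc E_1(z)\bigr)=-\frac12z^{-1}\bigl(1-\frac12z^{-1}\mc E_1(z)\bigr)^{-1}\partial\mc E_1(z)$ then turns \eqref{eq:lem1-BC-b} directly into \eqref{eq:lem1-BC-c}.

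The only step needing care is the coefficient bookkeeping in part~(a): one must keep track of the signs in passing from $S$ to $S^*$ and then to $(S^*)^{-1}$, and of the direction of the substitution $z\mapsto-z$ in \eqref{eq:A}. Once the coefficients up to order $\partial^{-2}$ are recorded correctly, everything else is routine manipulation of formal power series in $z^{-1}$.
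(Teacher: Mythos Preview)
Your proof is correct and follows essentially the same approach as the paper's. The only cosmetic difference is in part~(a): the paper expands $S^*(\partial)\partial^n S(\partial)=\partial^n$ directly and reads off the coefficient of $\partial^{n-2}$, while you rearrange the constraint to $(S^*)^{-1}(\partial)=\partial^n S(\partial)\partial^{-n}$ and reuse the expansion~\eqref{eq:A}; both yield the same relation $2s_2=s_1^2-(n+1)s_1'$ by equivalent bookkeeping, and parts~(b) and~(c) are handled identically.
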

\begin{proof}
Expanding the identity $S^*(\partial)\partial^n S(\partial) =\partial^{n}$ we get
$$
\partial^n+(2s_2+(n+1)s_1'-s_1^2)\partial^{n-2}+\dots=\partial^n\,.
$$
Claim (a) follows from this identity.
Claim (b) is an immediate consequence of (a) and the definition \eqref{eq:mcEi} of $\mc E_i(z)$.
Claim (c) follows from (b).
\end{proof}
Comparing equations \eqref{eq:def-tau1-BC} and \eqref{eq:lem1-BC-c}, we get, recalling \eqref{eq:mcE}
$$
\mc E(z)
+
\frac{1-n}2 \partial \log \big(1+z^{-1}\mc E(z)\big)
=
-\frac12\mc E_1(z)
+
\frac{1-n}2
\partial \log \big(1-\frac12z^{-1}\mc E_1(z)\big)
\,,
$$
which implies 
\begin{equation}\label{eq:mcEE1}
\mc E(z)=-\frac12\mc E_1(z)
\,.
\end{equation}
%
% SPIEGAZIONE
%Se $a(z)=\sum_{i=1}^{\infty}a_iz^{-1}$, allora
%$$
%a(z)
%+
%\frac{1-n}2 \partial \log \big(1+z^{-1}a(z)\big)
%=\sum_{i=1}^{\infty}\big(a_i+p_i(a_1,a_2,\dots a_{i-1})z^{-i}\big)\,,
%$$
%dove $p_i$ polinomio differenziale. (6.4) segue ricorsivamente uguagliando le potenze di $z$ nell'equazione con $\mc E$ e $\mc E_i$. 
%
Since the kernel of $e^{-z^{-1}\cdot\tilde\partial_{\bm t}}-1$ 
contains all functions of $x$, which are independent of ${\bm t}$,
we then conclude, as in \eqref{eq:partial-tau}, that $\tau$ must satisfy
\begin{equation}\label{eq:partial-tau-BC}
\partial\log\tau
=
-\frac12 s_1+f
\,,
\end{equation}
where $f$ is a function constant in all $t_k$'s,
which, without loss of generality, we may set equal to $0$.

\medskip

Next, we want to derive from \eqref{eq:def-tau1-BC} an equation for $\frac{\partial}{\partial t_k}\log\tau$, $k\geq1$.
In analogy with \eqref{eq:N},
we introduce the differential operator
\begin{equation}\label{eq:N-BC}
N(z)=\frac{\partial}{\partial z}-2\sum_{k\in\mc Z_L}^{\infty}z^{-k-1}\frac{\partial}{\partial t_k}\,,
\end{equation}
and we observe that the analogue of Lemma \ref{lem:N} still holds:
\begin{equation}\label{lem:N-BC}
N(z) \big(e^{-2z^{-1}\cdot\tilde\partial_{\bm t}} f\big)=0\,\text{ for every } f\in\mc F
\,.
\end{equation}
We then apply the operator $N(z)$ to equation \eqref{eq:def-tau1-BC} and use equation \eqref{lem:N-BC} to get
$$
N(z)\Big(
\log S(z)
-
\frac{1-n}2 
\log \big(1+z^{-1}\mc E(z)\big)
\Big)
=
2\sum_{k\in\mc Z_L}^{\infty}z^{-k-1}\frac{\partial}{\partial t_k}
\log\tau
\,.
$$
By looking at the various powers of $z$, this is equivalent to the following system of equations:
\begin{equation}\label{eq:dtk-tau-BC}
\frac{\partial}{\partial t_k}\log\tau
=
\frac12
\Res_zz^k
N(z)\Big(
\log S(z)
-
\frac{1-n}2 
\log \big(1+z^{-1}\mc E(z)\big)
\Big)
=: g_k
\,,\,\,
k\in\mc Z_L
\,.
\end{equation}

\medskip

In order to prove Theorem \ref{20170912:cor1-BC},
by the integrability assumption on $\mc F$,
it remains to show that the system of equations consisting of \eqref{eq:partial-tau-BC} 
and \eqref{eq:dtk-tau-BC} is compatible,  i.e.
\begin{equation}\label{eq:proof-tau1-BC}
\partial g_k
=
-\frac12 \frac{\partial s_1}{\partial t_k}
\,\,\text{ for every } k\in\mc Z_L=2\mb Z_{\geq0}+1
\,,
\end{equation}
and
\begin{equation}\label{eq:proof-tau2-BC}
\frac{\partial g_k}{\partial t_h}
=
\frac{\partial g_h}{\partial t_k}
\,\,,\,\,\,\,
\text{ for every } h,k\in\mc Z_L
\,.
\end{equation}

Applying $N(z)$ to both sides of equation \eqref{eq:lem1-BC-c} and taking the coefficient of $z^{-k-1}$,
we immediately get \eqref{eq:proof-tau1-BC}, thanks to equation \eqref{lem:N-BC}.
We are left to prove equation \eqref{eq:proof-tau2-BC}. 
We can write \eqref{eq:proof-tau2-BC} as a formal power series
by multiplying both sides by $2z^{-h-1}w^{-k-1}$ and summing over $h,k\in\mc Z_L$.
As a result, recalling the definition \eqref{eq:N-BC} of $N(z)$,
equation \eqref{eq:proof-tau2-BC} becomes
\begin{align*}
& 0
=
2\sum_{h,k\in\mc Z_L}z^{-h-1}w^{-k-1}
\Big(\frac{\partial g_k}{\partial t_h}
-
\frac{\partial g_h}{\partial t_k}
\Big) \\
& =
2
\sum_{h\in\mc Z_L}
z^{-h-1}
\frac{\partial}{\partial t_h}
\sum_{k\in\mc Z_L}
w^{-k-1}
g_k
-
2
\sum_{k\in\mc Z_L}
w^{-k-1}
\frac{\partial}{\partial t_k}
\sum_{h\in\mc Z_L}
z^{-h-1}
g_h \\
\end{align*}
Notice that $2\sum_{h\in\mc Z_L} z^{-h-1}\frac{\partial}{\partial t_h}$ can be replaced by $-N(z)$,
since it acts on a functions which is constant in $z$.
Moreover, recalling the definition \eqref{eq:dtk-tau-BC} of $g_k$,
we have that 
$\sum_{k\in\mc Z_L} w^{-k-1} g_k$ coincides with the even part (i.e. even powers of $w$) 
of $N(w)\Big(\log S(w)-\frac{1-n}2 \log \big(1+w^{-1}\mc E(w)\big)\Big)$.
Hence, recalling \eqref{eq:mcEE1}, equation \eqref{eq:proof-tau2-BC} is translated in the following equation
\begin{equation}\label{eq:proof-tau2-BCb}
\begin{split}
& N(z)
\Big(
N(w)\log S(w)
-
\frac{1-n}2 
N(w)\log \big(1-\frac12w^{-1}\mc E_1(w)\big)
\Big)_{\text{even}} 
\\
& =
N(w)
\Big(N(z)
\log S(z)
-
\frac{1-n}2 
N(z)\log \big(1-\frac12z^{-1}\mc E_1(z)\big)
\Big)_{\text{even}}
\,,
\end{split}
\end{equation}
where the index ``even'' means that we take only the terms with even powers of $w$ (resp. $z$) in the LHS (resp. RHS).
In order to complete the proof of Theorem \ref{20170912:cor1-BC},
we are left to prove equation \eqref{eq:proof-tau2-BCb}.
We shall prove in the following two subsections equation \eqref{eq:proof-tau2-BCb}
separately for $n=0$ and $n=1$.

%%%
\subsection{Proof of equation \eqref{eq:proof-tau2-BCb} for $n=0$}

Setting $x'=x$ and $t_k'=t_k-\frac2{kw_1^k}-\frac2{kw_2^k}$, we have, as in \eqref{eq:exp-odd},
\begin{equation}
\begin{split}\label{20170910:eq7-odd}
&e^{z\cdot{\bm t}}e^{-z\cdot{\bm t'}}
=
(1+\frac{z}{w_1})(1+\frac{z}{w_2})\iota_{w_1}(1-\frac{z}{w_1})^{-1}\iota_{w_2}(1-\frac{z}{w_2})^{-1}
\\
&=
1+2\frac{w_1+w_2}{w_1-w_2}\left(-w_1\iota_{w_1}(w_1-z)^{-1}+w_2\iota_{w_2}(w_2-z)^{-1}\right)\,.
\end{split}
\end{equation}
By the bilinear identity \eqref{eq:bilinear} and equation \eqref{20170910:eq7-odd} we get
\begin{equation}\label{eq:x}
\begin{split}
& \res_z\left(S(z)e^{-2w_1^{-1}\cdot\tilde\partial_{\bm t}}e^{-2w_2^{-1}\cdot\tilde\partial_{\bm t}}(S^*)^{-1}(-z)\right)
\\
& =
2\frac{w_1+w_2}{w_1-w_2}\res_z
\Big(
\left(w_1\iota_{w_1}(w_1-z)^{-1}-w_2\iota_{w_2}(w_2-z)^{-1}\right)\times \\
&\qquad\qquad\times
S(z)e^{-2w_1^{-1}\cdot\tilde\partial_{\bm t}}e^{-2w_2^{-1}\cdot\tilde\partial_{\bm t}}(S^*)^{-1}(-z)
\Big)
\,.
\end{split}
\end{equation}
Recalling that $S(z)=1+s_1z^{-1}+\dots$ and $(S^*)^{-1}(-z)=1-s_1z^{-1}+\dots$, the LHS of \eqref{eq:x} is
\begin{equation}\label{eq:aaa}
-\big(e^{-2w_1^{-1}\cdot\tilde\partial_{\bm t}}e^{-2w_2^{-1}\cdot\tilde\partial_{\bm t}}-1\big)s_1
\,.
\end{equation}
Using the second equation in \eqref{eq:positive} we can rewrite the RHS of \eqref{eq:aaa} as
\begin{equation}\label{20210628:eq6}
\begin{split}
&2\frac{w_1+w_2}{w_1-w_2}\left(
w_1\left(
S(z)e^{-2w_1^{-1}\cdot\tilde\partial_{\bm t}}e^{-2w_2^{-1}\cdot\tilde\partial_{\bm t}}(S^*)^{-1}(-z)
\right)_-\Big|_{z=w_1}
\right.
\\
&
\qquad \left.-
w_2\left(
S(z)e^{-2w_1^{-1}\cdot\tilde\partial_{\bm t}}e^{-2w_2^{-1}\cdot\tilde\partial_{\bm t}}(S^*)^{-1}(-z)
\right)_-\Big|_{z=w_2}
\right)
\,.
\end{split}
\end{equation}
The $-$ sign means that we need to take negative powers in $z$ before substituting $z=w_1$ in the first parenthesis
and $z=w_2$ in the second.
We next observe that
\begin{equation}\label{eq:bbb}
\begin{split}
&
\left(
S(z)e^{-2w_1^{-1}\cdot\tilde\partial_{\bm t}}e^{-2w_2^{-1}\cdot\tilde\partial_{\bm t}}(S^*)^{-1}(-z)
\right)_-\Big|_{z=w_1} \\
%&=
%S(w_1)e^{-2w_1^{-1}\cdot\tilde\partial_{\bm t}}e^{-2w_2^{-1}\cdot\tilde\partial_{\bm t}}(S^*)^{-1}(-w_1)
%-
%\left(
%S(z)e^{-2w_1^{-1}\cdot\tilde\partial_{\bm t}}e^{-2w_2^{-1}\cdot\tilde\partial_{\bm t}}(S^*)^{-1}(-z)
%\right)_+\Big|_{z=w_1} \\
&=
S(w_1)e^{-2w_1^{-1}\cdot\tilde\partial_{\bm t}}e^{-2w_2^{-1}\cdot\tilde\partial_{\bm t}}(S^*)^{-1}(-w_1)
-
1
\,,
\end{split}
\end{equation}
and similarly
\begin{equation}\label{eq:ccc}
\begin{split}
&
\left(
S(z)e^{-2w_1^{-1}\cdot\tilde\partial_{\bm t}}e^{-2w_2^{-1}\cdot\tilde\partial_{\bm t}}(S^*)^{-1}(-z)
\right)_-\Big|_{z=w_2} \\
&=
S(w_2)e^{-2w_1^{-1}\cdot\tilde\partial_{\bm t}}e^{-2w_2^{-1}\cdot\tilde\partial_{\bm t}}(S^*)^{-1}(-w_2)
-
1
\,.
\end{split}
\end{equation}
Combining equations \eqref{eq:x}, \eqref{eq:aaa}, \eqref{20210628:eq6}, \eqref{eq:bbb} and \eqref{eq:ccc}, we get
\begin{align*}
& -\big(e^{-2w_1^{-1}\cdot\tilde\partial_{\bm t}}e^{-2w_2^{-1}\cdot\tilde\partial_{\bm t}}-1\big)s_1 \\
& =
2\frac{w_1+w_2}{w_1-w_2}
\left(
w_1
S(w_1)e^{-2w_1^{-1}\cdot\tilde\partial_{\bm t}}e^{-2w_2^{-1}\cdot\tilde\partial_{\bm t}}(S^*)^{-1}(-w_1)
\right.
\\
&
\qquad \left.
-
w_2
S(w_2)e^{-2w_1^{-1}\cdot\tilde\partial_{\bm t}}e^{-2w_2^{-1}\cdot\tilde\partial_{\bm t}}(S^*)^{-1}(-w_2)
\right)
-2(w_1+w_2)
\,.
\end{align*}
Using equation \eqref{eq:lem1-BC-a}, the above equation becomes, after some manipulations,
\begin{equation}\label{eq:dan1}
\begin{split}
& 
w_1
S(w_1)
\frac{1-\frac12 w_1^{-1} e^{-2w_2^{-1}\cdot\tilde\partial_{\bm t}} \mc E_1(w_1)}{e^{-2w_2^{-1}\cdot\tilde\partial_{\bm t}} S(w_1)}
-
w_2
S(w_2)
\frac{1-\frac12 w_2^{-1} e^{-2w_1^{-1}\cdot\tilde\partial_{\bm t}} \mc E_1(w_2)}{e^{-2w_1^{-1}\cdot\tilde\partial_{\bm t}} S(w_2)} \\
& =
(w_1-w_2)
\Big(
1
-
\frac12 (w_1+w_2)^{-1}
\big(e^{-2w_1^{-1}\cdot\tilde\partial_{\bm t}}e^{-2w_2^{-1}\cdot\tilde\partial_{\bm t}}-1\big)s_1 
\Big)\,.
\end{split}
\end{equation}
In analogy with \eqref{eq:mcEi}, we let
$$
\mc E_1(z,w)
=
\big(e^{-2z^{-1}\cdot\tilde\partial_{\bm t}}e^{-2w^{-1}\cdot\tilde\partial_{\bm t}}-1\big)s_1 
\,,
$$
and we observe that
\begin{equation}\label{20210706:eq1}
e^{-2w_2^{-1}\cdot\tilde\partial_{\bm t}} \mc E_1(w_1)
=
\mc E_1(w_1,w_2)-\mc E_1(w_2)
\,.
\end{equation}
Hence, equation \eqref{eq:dan1} becomes
\begin{equation}\label{eq:dan3}
\begin{split}
& 
\big(
w_1-\frac12  e^{-2w_2^{-1}\cdot\tilde\partial_{\bm t}} \mc E_1(w_1)\big)
\frac{S(w_1)}{e^{-2w_2^{-1}\cdot\tilde\partial_{\bm t}} S(w_1)} \\
& -
\big(
w_2-\frac12  e^{-2w_1^{-1}\cdot\tilde\partial_{\bm t}} \mc E_1(w_2)\big)
\frac{S(w_2)}{e^{-2w_1^{-1}\cdot\tilde\partial_{\bm t}} S(w_2)} \\
& =
(w_1-w_2)
\Big(
1
-
\frac12 (w_1+w_2)^{-1} \mc E_1(w_1,w_2)
\Big)
\,.
\end{split}
\end{equation}
Next, we apply $e^{2w_2^{-1}\cdot\tilde\partial_{\bm t}}$ to both sides of \eqref{eq:dan3} to get
\begin{equation}\label{eq:dan4}
\begin{split}
& 
\big(
w_1-\frac12  \mc E_1(w_1)\big)
\frac{e^{2w_2^{-1}\cdot\tilde\partial_{\bm t}}S(w_1)}{ S(w_1)} \\
& -
\big(e^{2w_2^{-1}\cdot\tilde\partial_{\bm t}}S(w_2)
\big)
e^{-2w_1^{-1}\cdot\tilde\partial_{\bm t}} e^{2w_2^{-1}\cdot\tilde\partial_{\bm t}}
\frac{w_2-\frac12\mc E_1(w_2)}{S(w_2)} \\
& =
(w_1-w_2)
\Big(
1
-
\frac12 (w_1+w_2)^{-1} (\mc E_1(w_1)-\mc E_1(-w_2))
\Big)
\,,
\end{split}
\end{equation}
where in the RHS we used the identity
$$
e^{2w_2^{-1}\cdot\tilde\partial_{\bm t}}\mc E_1(w_1,w_2)
=e^{-2w_1^{-1}\cdot\tilde\partial_{\bm t}}s_1-e^{2w_2^{-1}\cdot\tilde\partial_{\bm t}}s_1
=\mc E_1(w_1)-\mc E_1(-w_2)\,.
$$
Recall that, by assumption, $S(\partial)$ satisfies the condition $(S^{*})^{-1}(\partial)=S(\partial)$ (cf. \eqref{eq:SSstar}).
Hence, by this assumption and equation \eqref{eq:lem1-BC-a} we have
$$
e^{2w_2^{-1}\cdot\tilde\partial_{\bm t}}S(w_2)
=e^{2w_2^{-1}\cdot\tilde\partial_{\bm t}}(S^{*})^{-1}(w_2)
=\frac{1+\frac12w_2^{-1}\mc E_1(-w_2)}{S(-w_2)}\,.
$$
Hence,
\begin{equation}
\begin{split}\label{eq:dan6}
&\big(e^{2w_2^{-1}\cdot\tilde\partial_{\bm t}}S(w_2)
\big)
e^{-2w_1^{-1}\cdot\tilde\partial_{\bm t}} e^{2w_2^{-1}\cdot\tilde\partial_{\bm t}}
\frac{w_2-\frac12\mc E_1(w_2)}{S(w_2)}
\\
&=
\frac{1+\frac12w_2^{-1}\mc E_1(-w_2)}{S(-w_2)}e^{-2w_1^{-1}\cdot\tilde\partial_{\bm t}} e^{2w_2^{-1}\cdot\tilde\partial_{\bm t}}
\big(w_2e^{-2w_2^{-1}\cdot\tilde\partial_{\bm t}}S(-w_2)\big)
\\
&=
\frac{w_2+\frac12\mc E_1(-w_2)}{S(-w_2)}e^{-2w_1^{-1}\cdot\tilde\partial_{\bm t}}
S(-w_2)\,.
\end{split}
\end{equation}
Substituting \eqref{eq:dan6} in the LHS of \eqref{eq:dan4} and replacing $w_2$ by $-w_2$ we arrive at the identity
\begin{equation}\label{eq:dan4b}
\begin{split}
& 
\big(
w_1-\frac12  \mc E_1(w_1)\big)
\frac{e^{-2w_2^{-1}\cdot\tilde\partial_{\bm t}}S(w_1)}{ S(w_1)}+
\big(w_2-\frac12\mc E_1(w_2)
\big)
\frac{e^{-2w_1^{-1}\cdot\tilde\partial_{\bm t}}S(w_2) }{S(w_2)}\\
& =
(w_1+w_2)
\Big(
1
-
\frac12 (w_1-w_2)^{-1} (\mc E_1(w_1)-\mc E_1(w_2))
\Big)
\,.
\end{split}
\end{equation}
Next, we multiply both sides of the identities \eqref{eq:dan3} and \eqref{eq:dan4b} to get
\begin{equation}\label{eq:dan7}
\begin{split}
& 
\big(
w_1-\frac12  e^{-2w_2^{-1}\cdot\tilde\partial_{\bm t}} \mc E_1(w_1)\big)
\big(
w_1-\frac12  \mc E_1(w_1)\big)
\\
&+
\big(
w_1-\frac12  e^{-2w_2^{-1}\cdot\tilde\partial_{\bm t}} \mc E_1(w_1)\big)
\big(w_2-\frac12\mc E_1(w_2)
\big)
\frac{S(w_1)}{e^{-2w_2^{-1}\cdot\tilde\partial_{\bm t}} S(w_1)}
\cdot
\frac{e^{-2w_1^{-1}\cdot\tilde\partial_{\bm t}}S(w_2) }{S(w_2)}
\\
& -
\big(
w_2-\frac12  e^{-2w_1^{-1}\cdot\tilde\partial_{\bm t}} \mc E_1(w_2)\big)
\big(
w_1-\frac12  \mc E_1(w_1)\big)
\frac{e^{-2w_2^{-1}\cdot\tilde\partial_{\bm t}}S(w_1)}{ S(w_1)}
\cdot
\frac{S(w_2)}{e^{-2w_1^{-1}\cdot\tilde\partial_{\bm t}} S(w_2)}
\\
&-
\big(
w_2-\frac12  e^{-2w_1^{-1}\cdot\tilde\partial_{\bm t}} \mc E_1(w_2)\big)
\big(w_2-\frac12\mc E_1(w_2)
\big)
\\
& =
\Big(
w_1+w_2
-
\frac12 \mc E_1(w_1,w_2)
\Big)
\Big(
w_1-w_2
-
\frac12 (\mc E_1(w_1)-\mc E_1(w_2))
\Big)
\,.
\end{split}
\end{equation}
Using \eqref{20210706:eq1}, it is straightforward to check that
\begin{align*}
& 
\big(
w_1-\frac12  e^{-2w_2^{-1}\cdot\tilde\partial_{\bm t}} \mc E_1(w_1)\big)
\big(
w_1-\frac12  \mc E_1(w_1)\big)
\\
& -
\big(
w_2-\frac12  e^{-2w_1^{-1}\cdot\tilde\partial_{\bm t}} \mc E_1(w_2)\big)
\big(w_2-\frac12\mc E_1(w_2)
\big)
\\
& =
\Big(
w_1+w_2
-
\frac12 \mc E_1(w_1,w_2)
\Big)
\Big(
w_1-w_2
-
\frac12 (\mc E_1(w_1)-\mc E_1(w_2))
\Big)
\,.
\end{align*}
Then, equation \eqref{eq:dan7} becomes
\begin{equation}\label{eq:dan8}
\begin{split}
&
\left(\frac{S(w_1)}{e^{-2w_2^{-1}\cdot\tilde\partial_{\bm t}} S(w_1)}
\cdot
\frac{e^{-2w_1^{-1}\cdot\tilde\partial_{\bm t}}S(w_2) }{S(w_2)}
\right)^2\\
&=
\frac{1-\frac12  w_1^{-1}\mc E_1(w_1)}{e^{-2w_2^{-1}\cdot\tilde\partial_{\bm t}} \big(
1-\frac12 w_1^{-1} \mc E_1(w_1)\big)}
\cdot
\frac{e^{-2w_1^{-1}\cdot\tilde\partial_{\bm t}} \big(
1-\frac12  w_2^{-1}\mc E_1(w_2)\big)}
{1-\frac12w_2^{-1}\mc E_1(w_2)}
\,.
\end{split}
\end{equation}
Taking logarithm of both sides of \eqref{eq:dan8} and dividing by $2$ we arrive at
\begin{equation}\label{eq:dan9}
\begin{split}
&
\left(e^{-2w_1^{-1}\cdot\tilde\partial_{\bm t}}-1\right)
\left(\log S(w_2)-\frac12\log \big(1-\frac12w_2^{-1}\mc E_1(w_2)\big)\right)
\\
&=
\left(e^{-2w_2^{-1}\cdot\tilde\partial_{\bm t}}-1\right)
\left(\log S(w_1)-\frac12\log \big(1-\frac12w_1^{-1}\mc E_1(w_1)\big)\right)
\,.
\end{split}
\end{equation}
Finally, we apply $N(w_1)N(w_2)$ to both sides of \eqref{eq:dan9} and use \eqref{lem:N-BC} to get
\begin{equation}\label{eq:dan10}
\begin{split}
&
N(w_1)N(w_2)
\left(\log S(w_2)-\frac12\log \big(1-\frac12w_2^{-1}\mc E_1(w_2)\big)\right)
\\
&=
N(w_1)N(w_2)
\left(\log S(w_1)-\frac12\log \big(1-\frac12w_1^{-1}\mc E_1(w_1)\big)\right)
\,.
\end{split}
\end{equation}
Equation \eqref{eq:dan10} implies \eqref{eq:proof-tau2-BCb} for $n=0$.

%%%
\subsection{Proof of equation \eqref{eq:proof-tau2-BCb} for $n=1$}

In \cite{DJKM83} the following bilinear identity for the BKP hierarchy is used
\begin{equation}\label{bilBKP}
\res_{z}\big(w(x,\bm t;,z)w(x',\bm t',-z)z^{-1}\big)=1
\,.
\end{equation}
\begin{lemma}
Let us assume that the bilinear identity for the BKP \eqref{bilBKP} holds. Then $S(\partial)$ satisfies the condition \eqref{eq:SSstar} with $n=1$ and the bilinear identity \eqref{eq:bilinear} holds.
Conversely, assume that $S(\partial)$ satisfies the condition \eqref{eq:SSstar} with $n=1$ and $w(z)=S(z)e^{z\cdot\bm t}$
satisfies the the bilinear identity \eqref{eq:bilinear}. Then $w(z)$ satisfies the bilinear identity for the BKP \eqref{bilBKP}.
\end{lemma}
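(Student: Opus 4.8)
\smallskip
\textbf{Proof plan.}
Everything rests on the following algebraic remark. If $S(\partial)$ satisfies \eqref{eq:SSstar} with $n=1$, that is $(S^*)^{-1}(\partial)=\partial\circ S(\partial)\circ\partial^{-1}$, then, since the symbol of $S(\partial)\circ\partial^{-1}$ is $z^{-1}S(z)$ and $\partial$ acts on anti-oscillating functions as $\partial_x$, the anti-oscillating function attached to $S(\partial)\circ\partial^{-1}$ is $-z^{-1}w(x,\bm t;-z)$, hence the one attached to $(S^*)^{-1}(\partial)=\partial\circ\big(S(\partial)\circ\partial^{-1}\big)$ is
\[
w^\star(x,\bm t;z)=-\partial_x\big(z^{-1}w(x,\bm t;-z)\big).
\]
Replacing $(x,\bm t)$ by $(x',\bm t')$ and taking residues, for any such $w$,
\[
\Res_z\big(w(x,\bm t;z)\,w^\star(x',\bm t';z)\big)=-\partial_{x'}\,G(x,\bm t;x',\bm t'),\qquad G:=\Res_z\big(w(x,\bm t;z)\,z^{-1}w(x',\bm t';-z)\big),
\]
so that, \emph{given} \eqref{eq:SSstar} with $n=1$, the bilinear identity \eqref{eq:bilinear} is exactly the statement that $G$ is independent of $x'$.

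\smallskip
\emph{First direction.} Assume the BKP identity \eqref{bilBKP}, i.e. $G\equiv 1$. I would first recover \eqref{eq:SSstar} with $n=1$. Apply $\partial_{x'}^m$ ($m\ge 1$) to \eqref{bilBKP}; the right-hand side being constant this vanishes, and at $x'=x,\ \bm t'=\bm t$ the exponential factors cancel, leaving $\Res_z\big(S(z)\,z^{-1}(\partial^m\!\circ S)(-z)\big)=0$. Writing $S(z)z^{-1}$ as the symbol of $S(\partial)\circ\partial^{-1}$ and $(\partial^m\!\circ S)(-z)=B^*(-z)$ with $B=(-1)^mS^*(\partial)\circ\partial^m$, Lemma~\ref{lem:hn1a} turns this into $(-1)^m\Res_\partial\big(S(\partial)\circ\partial^{-1}\circ S^*(\partial)\circ\partial^m\big)=0$, i.e. the coefficient of $\partial^{-m-1}$ in $S(\partial)\circ\partial^{-1}\circ S^*(\partial)$ vanishes for every $m\ge1$. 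Since that operator has order $-1$ and leading coefficient $1$, this forces $S(\partial)\circ\partial^{-1}\circ S^*(\partial)=\partial^{-1}$, i.e. $S^*(\partial)=\partial\circ S(\partial)^{-1}\circ\partial^{-1}$, which is \eqref{eq:SSstar} with $n=1$. Then the displayed equivalence above, together with $\partial_{x'}G=\partial_{x'}1=0$, immediately gives \eqref{eq:bilinear}.

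\smallskip
\emph{Converse.} Assume \eqref{eq:SSstar} with $n=1$ and \eqref{eq:bilinear}. By the displayed equivalence $G$ is independent of $x'$; since $\Res_z f(-z)=-\Res_z f(z)$, the function $G$ is symmetric under the simultaneous exchange $(x,\bm t)\leftrightarrow(x',\bm t')$, hence also independent of $x$. At coincident points $G=\Res_z\big(S(z)z^{-1}S(-z)\big)=1$, because $S(z)S(-z)$ is invariant under $z\mapsto-z$ with constant term $1$. It remains to show $G$ does not depend on the times $t_k$ either. For this I would use: \eqref{eq:bilinear} yields the linear problem $\partial_{t_k}w=\tilde B_k(\partial)w$ with $\tilde B_k=(S\partial^kS^{-1})_+$ differential (Theorem~\ref{thm:bilinear}(ii) and (i)); for odd $k\in\mc Z_L$ the anti-oscillating function also obeys $\partial_{t'_k}w(x',\bm t';-z)=\tilde B_k(\partial)w(x',\bm t';-z)$ (the $(-z)^k$ and $z^k$ terms cancel); and, by \eqref{eq:SSstar} and Remark~\ref{rem:odd-d}, each $\tilde B_k$ is $(-1)$-sesquiadjoint with vanishing constant term. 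Feeding these, together with the identities $\Res_z\big((\tilde B_k(\partial)w)\,w^\star\big)=0$ obtained by differentiating \eqref{eq:bilinear} and the relation $z^{-1}w(x',\bm t';-z)=-\partial_{x'}^{-1}w^\star(x',\bm t';z)$, through Lemma~\ref{lem:hn1a} gives $\partial_{t_k}G=0$; hence $G\equiv1$, i.e. \eqref{bilBKP}. (Equivalently, one may obtain from \eqref{eq:bilinear} via Lemma~\ref{lem1-BC}(a) the one-shift identity $S(z)e^{-2z^{-1}\cdot\tilde{\partial}_{\bm t}}S(-z)=1$, after simplifying by Lemma~\ref{lem1-BC-b}, and then pass to the two-point form \eqref{bilBKP}.)

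\smallskip
The delicate point is this last step of the converse: one must pin $G$ to the \emph{constant} $1$, not merely prove $G$ constant. Independence of $x,x'$ is formal, but controlling the dependence on the infinitely many times $t_k$, $k\in\mc Z_L$, is where the BKP-specific input (oddness of $\mc Z_L$, sesquiadjointness of the $\tilde B_k$, vanishing of their constant terms) must be used carefully. By contrast, the first direction is essentially immediate once $\partial_{x'}$-differentiation of \eqref{bilBKP} is combined with Lemma~\ref{lem:hn1a}.
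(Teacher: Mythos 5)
Your proposal is correct and follows essentially the same route as the paper: repeated $x$-differentiation of \eqref{bilBKP} combined with Lemma \ref{lem:hn1a} to force $S(\partial)\partial^{-1}S^*(\partial)=\partial^{-1}$, the identity $w^\star(z)=-\partial\big(z^{-1}w(-z)\big)$ to pass between the two bilinear identities, and, for the converse, showing that $G$ is independent of the primed variables (via the linear problem, oddness of $k$, and the vanishing constant term of $\tilde B_k$) before evaluating on the diagonal. The only cosmetic difference is your extra symmetry argument giving independence in $x$, which the paper avoids by working solely with the primed variables.
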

\begin{proof}
Let us apply $\partial_x^m$, $m\geq0$, to \eqref{bilBKP} and set $x'=x$ and $t_k'=t_k$, $k\in\mc Z_L$. Then, we have
\begin{equation}\label{20210706:eq2}
\res_z\big( ((z+\partial)^mS(z))S(-z)z^{-1}\big)=\delta_{m,0}\,.
\end{equation}
Similarly to the proof of Lemma \ref{20170726:lem}, with $A(\partial)=S(\partial)$ and $B^*(\partial)=-S(\partial)\partial^{-1}$,
the identities \eqref{20210706:eq2} for $m>0$ imply $S(\partial)\partial^{-1}\circ S^*(\partial)\in\mc F[\partial]\partial^{-1}$.
Since $S(\partial)\in1+\mc F[[\partial^{-1}]]\partial^{-1}$ we then have $S(\partial)\partial^{-1}\circ S^*(\partial)=a\partial^{-1}$, for some $a\in\mc F$. From \eqref{20210706:eq2} with $m=0$ we get that $a=1$ thus showing that $S(\partial)$ satisfies the
condition \eqref{eq:SSstar} with $n=1$. In particular, we have
\begin{equation}\label{eq:SBKP}
(S^*)^{-1}(-z)=-(-z+\partial)S(-z)z^{-1}
\,.
\end{equation}
Next, let us apply $\partial_{x'}$ to both sides of \eqref{bilBKP}. We get
\begin{align*}
&0=\res_z\big(w(x,\bm t;,z)\partial_{x'}w(x',\bm t',-z)z^{-1}\big)
\\
&=
\res_z\big( w(x,\bm t;,z)(-z+\partial_{x'})S(-z)z^{-1}e^{-zx'-z\cdot\bm t'}\big)
\\
&=-\res_z\big( w(x,\bm t;,z)w^\star(x',\bm t',z)\big)
\,.
\end{align*}
In the last identity we used \eqref{eq:SBKP}. This shows that $w(z)$ satisfies \eqref{eq:bilinear}.

Conversely, let us assume that $S(\partial)$ satisfies \eqref{eq:SBKP} (that is condition \eqref{eq:SSstar} with $n=1$)
and $w(z)=S(z)e^{z\cdot\bm t}$ satisfies the bilinear identity \eqref{eq:bilinear}, and let
\begin{equation}\label{eq:f}
f=\res_z(w(x,\bm t,z)w(x',\bm t',-z)z^{-1})\,.
\end{equation}
By \eqref{eq:SBKP}, we have
$$
w^\star(z)
=
(S^*)^{-1}(-z)e^{-z\cdot\bm t}
=
-(-z+\partial)S(-z)z^{-1}e^{-z\cdot\bm t}
=
-\partial w(-z)z^{-1}
\,.
$$
Hence,
\begin{equation}\label{20210706:eq3}
\partial_{x'}f=-\res_z(w(x,\bm t,z)\partial_{x'}w(x',\bm t',-z)z^{-1})
=-\res_z(w(x,\bm t,z)w^\star(x',\bm t',z))=0
\,.
\end{equation}
Moreover, by Corollary \eqref{cor:equiv} and \eqref{eq:linear} we have ($k\in\mc Z_L$)
\begin{align*}
&\frac{\partial f}{\partial t'_k}=\res_z\left(w(x,\bm t,z)\frac{\partial w(x',\bm t',-z)}{\partial t_k'}z^{-1}\right)
\\
&=\res_z\left(w(x,\bm t,z)B_k(\partial_{x'})w(x',\bm t',-z)z^{-1}\right)=B_k(\partial_{x'})f=0\,.
\end{align*}
In the last identity we used the fact that $B_k$ has no constant term (cf. Remark \ref{rem:odd-d} and \eqref{eq:sesquiadj2})
and \eqref{20210706:eq3}.
This shows that $f$ does not depend on $x'$ and $t_k'$, thus we can set $x'=x$ and $t_k'=t_k$ in its definition \eqref{eq:f} to get
$$
f=f|_{x'=x,t_k'=t_k}=\res_z(S(z)S(-z)z^{-1})=1\,.
$$
This concludes the proof.
\end{proof}
In analogy with the proof of Lemma \ref{lem1-BC} setting $x'=x$ and $t_k'=t_k-\frac{2}{kw^k}$ in \eqref{bilBKP} we get,
using \eqref{eq:exp-odd}
$$
\Res_z\big(
(1+\frac{w}{z})\iota_w(w-z)^{-1}
S(z)
e^{-2w^{-1}\cdot\tilde\partial_{\bm t}}
S(-z)
\big)
=1
\,,
$$
which, by the second equation in \eqref{eq:positive}, gives
\begin{equation}\label{20170910:eq4b-odd-bis}
\Big(
(1+\frac{w}{z})
S(z)
e^{-2w^{-1}\cdot\tilde\partial_{\bm t}}
S(-z)
\Big)_-\Big|_{z=w}
=1
\,.
\end{equation}
Note that the LHS of \eqref{20170910:eq4b-odd-bis} can be rewritten as
\begin{equation}\label{20170910:eq4b-odd-bis2}
\begin{split}
&2
S(w)
e^{-2w^{-1}\cdot\tilde\partial_{\bm t}}
S(-w)
-\Big(
(1+\frac{w}{z})
S(z)
e^{-2w^{-1}\cdot\tilde\partial_{\bm t}}
S(-z)
\Big)_+\Big|_{z=w}
\\
&=
2
S(w)
e^{-2w^{-1}\cdot\tilde\partial_{\bm t}}
S(-w)
-1\,.
\end{split}
\end{equation}
Combining equations \eqref{20170910:eq4b-odd-bis} and \eqref{20170910:eq4b-odd-bis2} we get
\begin{equation}\label{20170910:eq4b-odd-bis3}
S(w)
e^{-2w^{-1}\cdot\tilde\partial_{\bm t}}
S(-w)
=
1\,.
\end{equation}
Next,
setting $x'=x$ and $t_k'=t_k-\frac2{kw_1^k}-\frac2{kw_2^k}$ in \eqref{bilBKP}, we have, using \eqref{20170910:eq7-odd},
\begin{align*}
& \res_z\left(S(z)e^{-2w_1^{-1}\cdot\tilde\partial_{\bm t}}e^{-2w_2^{-1}\cdot\tilde\partial_{\bm t}}S(-z)z^{-1}\right)
\\
& =1+
2\frac{w_1+w_2}{w_1-w_2}\res_z
\Big(
\left(w_1\iota_{w_1}(w_1-z)^{-1}-w_2\iota_{w_2}(w_2-z)^{-1}\right)\times \\
&\qquad\qquad\times
S(z)e^{-2w_1^{-1}\cdot\tilde\partial_{\bm t}}e^{-2w_2^{-1}\cdot\tilde\partial_{\bm t}}S(-z)z^{-1}
\Big)
\,.
\end{align*}
Since $\res_z\left(S(z)e^{-2w_1^{-1}\cdot\tilde\partial_{\bm t}}e^{-2w_2^{-1}\cdot\tilde\partial_{\bm t}}S(-z)z^{-1}\right)=1$, the above equation becomes
\begin{equation}\label{eq:x2}
\begin{split}
& \res_z
\Big(
\left(w_1\iota_{w_1}(w_1-z)^{-1}-w_2\iota_{w_2}(w_2-z)^{-1}\right)
\\
&\qquad \times
S(z)e^{-2w_1^{-1}\cdot\tilde\partial_{\bm t}}e^{-2w_2^{-1}\cdot\tilde\partial_{\bm t}}S(-z)z^{-1}
\Big)
=0\,.
\end{split}
\end{equation}
Using the second equation in \eqref{eq:positive} we can rewrite \eqref{eq:x2} as
\begin{equation}\label{20210628:eq6bkp}
\begin{split}
&
w_1\left(
S(z)e^{-2w_1^{-1}\cdot\tilde\partial_{\bm t}}e^{-2w_2^{-1}\cdot\tilde\partial_{\bm t}}S(-z)z^{-1}
\right)_-\Big|_{z=w_1}
\\
&
=
w_2\left(
S(z)e^{-2w_1^{-1}\cdot\tilde\partial_{\bm t}}e^{-2w_2^{-1}\cdot\tilde\partial_{\bm t}}S(-z)z^{-1}
\right)_-\Big|_{z=w_2}
\,.
\end{split}
\end{equation}
The $-$ sign means that we need to take negative powers in $z$ before substituting $z=w_1$ in the first parenthesis
and $z=w_2$ in the second.
We next observe that both terms in parenthesis belong to $\mc F[[z^{-1}]]z^{-1}$, hence we can remove the $-$ sign and get
$$
S(w_1)e^{-2w_1^{-1}\cdot\tilde\partial_{\bm t}}e^{-2w_2^{-1}\cdot\tilde\partial_{\bm t}}S(-w_1)
=
S(w_2)e^{-2w_1^{-1}\cdot\tilde\partial_{\bm t}}e^{-2w_2^{-1}\cdot\tilde\partial_{\bm t}}S(-w_2)\,,
$$
which, using \eqref{20170910:eq4b-odd-bis3}, can be rewritten as
\begin{equation}\label{cccp}
\frac{S(w_1)}{e^{-2w_2^{-1}\cdot\tilde\partial_{\bm t}}S(w_1)}
=
\frac{S(w_2)}{e^{-2w_1^{-1}\cdot\tilde\partial_{\bm t}}S(w_2)}\,.
\end{equation}
Taking logarithm of both sides of \eqref{cccp} and applying $N(w_1)N(w_2)$ equation \eqref{eq:proof-tau2-BCb} for $n=1$ follows
in the same way as in the proof of Lemma \ref{lem2}(b).

\end{document}